%% 
%% Copyright 2019 Elsevier Ltd
%% 
%% This file is part of the 'CAS Bundle'.
%% --------------------------------------
%% 
%% It may be distributed under the conditions of the LaTeX Project Public
%% License, either version 1.2 of this license or (at your option) any
%% later version.  The latest version of this license is in
%%    http://www.latex-project.org/lppl.txt
%% and version 1.2 or later is part of all distributions of LaTeX
%% version 1999/12/01 or later.
%% 
%% The list of all files belonging to the 'CAS Bundle' is
%% given in the file `manifest.txt'.
%% 
%% Template article for cas-sc documentclass for 
%% single column output.

%documentclass[a4paper,fleqn,longmktitle]{cas-dc}
%\documentclass[a4paper,fleqn]{cas-sc}
\documentclass[a4paper,fleqn]{cas-dc}
\usepackage[utf8]{inputenc}
\usepackage[T1]{fontenc}

\usepackage[numbers]{natbib}

\usepackage{hyperref}
\usepackage{amsmath,amsthm,amssymb,latexsym,amscd,amsfonts,enumerate}
\usepackage{color, fancyhdr, graphicx}
\usepackage{hyperref}
\usepackage{appendix}
\usepackage[english]{babel}
\usepackage{subfigure}
\usepackage{helvet}
\newtheorem{theorem}{Theorem}
\newtheorem{remark}{Remark}
\newtheorem{assumption}{Assumption}

\newtheorem{lemma}{Lemma}
\newtheorem{problem}{Problem}
\usepackage{algorithm}
\usepackage{algorithmicx}
\usepackage{color, fancyhdr, graphicx}
\usepackage{graphicx}
\usepackage{tabularx}
\usepackage{array,tabularx}
\usepackage{multirow}
\usepackage{hhline}
\usepackage{dsfont}
\usepackage{xcolor}
\usepackage{lineno} % number the lines
\definecolor{huy}{rgb}{0.2, 0.6, 0.8}
\graphicspath{{./figs/}}

\theoremstyle{thmstyleone}%
%%\newtheorem{theorem}{Theorem}[section]% meant for sectionwise numbers
%% optional argument [theorem] produces theorem numbering sequence instead of independent numbers for Proposition
% 

% %%%Author macros
% \def\tsc#1{\csdef{#1}{\textsc{\lowercase{#1}}\xspace}}
% \tsc{WGM}
% \tsc{QE}
% \tsc{EP}
% \tsc{PMS}
% \tsc{BEC}
% \tsc{DE}
%%%

\begin{document}
\let\WriteBookmarks\relax
\def\floatpagepagefraction{1}
\def\textpagefraction{.001}
\shorttitle{A model free approach for CT OTC with unknown user-define cost and constrained control input via advantage function}
\shortauthors{Duc Cuong Nguyen et~al.}
%\begin{frontmatter}

\title [mode = title]{A model free approach for continuous-time optimal tracking control with unknown user-define cost and constrained control input via advantage function}                      
%\tnotemark[1,2]

%\tnotetext[1]{This document is the results of the research
%   project funded by the National Science Foundation.}

%\tnotetext[2]{The second title footnote which is a longer text matter
%   to fill through the whole text width and overflow into
%   another line in the footnotes area of the first page.}

\author[1]{Duc Cuong Nguyen}[]
\credit{Writing – original draft – review \& editing, Software, Data curation}
\author[2]{Quang Huy Dao}[]

\credit{Writing – original draft – review \& editing, Software, Conceptualization}

\author[2]{Phuong Nam Dao}[orcid=0000-0002-8333-5572]
\credit{Writing – original draft – review \& editing, Supervision, Methodology}
\cormark[1]
%\fnmark[1]
\ead{nam.daophuong@hust.edu.vn}
%\ead[url]%{www.cvr.cc, cvr@sayahna.org}

%\credit{Conceptualization of this study, Methodology, Software}

%\address[1]{Korea, ...}

%\fnmark[2]
%\ead{nam.daophuong@hust.edu.vn}
%\ead[URL]{nam.daophuong@hust.edu.vn}

%\credit{Data curation, Writing - Original draft preparation}
\address[1]{Faculty of Engineering, Lund University, Lund, Sweden}
\address[2]{School of Electrical and Electronic Engineering, Hanoi University of Science and Technology, Hanoi, Vietnam}

%\address[2]{Hanoi University of Science and Technology, Hai Ba Trung District, Vietnam}

%\cortext[cor1]{Corresponding author}
%\cortext[cor2]{Principal corresponding author}
%\fntext[fn1]{This is the first author footnote. but is common to third
%  author as well.}
%\fntext[fn2]{Another author footnote, this is a very long footnote and
%  it should be a really long footnote. But this footnote is not yet
%  sufficiently long enough to make two lines of footnote text.}

%\nonumnote{This note has no numbers. In this work we demonstrate $a_b$
%  the formation Y\_1 of a new type of polariton on the interface
%  between a cuprous oxide slab and a polystyrene micro-sphere placed
%  on the slab.
%  }

\begin{abstract} 
This paper presents a pioneering approach to solving the linear quadratic regulation (LQR) and linear quadratic tracking (LQT) problems with constrained inputs using a novel off-policy continuous-time Q-learning framework. The proposed methodology leverages a novel concept of the Advantage function for linear continuous systems, enabling solutions to be obtained without the need for prior knowledge of the reward matrix weights, state resetting, or assuming the existence of a predefined admissible controller. This framework includes multiple algorithms (Algs) tailored to address these control problems under model-free conditions, without requiring any knowledge about system dynamics. Two distinct implementation methods are explored: the first processes state and input data over a fixed time interval, making it well-suited for LQR problems, while the second method operates over multiple intervals, offering a practical solution for tracking problems with constrained inputs. The convergence of the proposed algorithms is verified theoretically. Finally, the simulation results of the F-16 aircraft system are presented for the two problems to validate the effectiveness of the proposed method.

\end{abstract}
 
%\begin{graphicalabstract}
%\includegraphics{figs/grabs.pdf}
%\end{graphicalabstract}

%\begin{highlights_i}
%\item Research highlights item 1
%\item Research highlights item 2
%\item Research highlights item 3
%\end{highlights_i}

\begin{keywords}
Q-learning \sep Off-policy \sep Continuous-time system \sep Aircraft control 
\end{keywords}

\maketitle
%\linenumbers
\section{Introduction}
Reinforcement Learning Control (RLC), a subbranch within data science, has become increasingly popular for learning optimal control behavior in dynamical systems \cite{dao2024nonlinear}. In dynamical systems, there are two widely used approaches for RLC to simultaneously implement the trajectory tracking problem and the consideration of minimizing the performance index: the linear quadratic regulator (LQR) and the linear quadratic tracking (LQT) \cite{zhao2023linear,pang2021robust}. The traditional LQR method is capable of simultaneously investigating based on the tracking error model, which is difficult to find the explicit representation in several practical systems, such as mobile robotics \cite{nguyen2025robust}. To fulfill this challenge, the direct LQT approach is implemented with the tracking error integrated into the performance index and several modifications of the Ricatti equation are investigated \cite{zhao2023linear,pang2021robust}. In \cite{zhao2023linear}, the On-Policy and Off-Policy RL methods are developed in the LQT problem in two phases using the integration of the discount factor into the kernel matrix. In addition, taking into account a matrix that contains both the control policy and the value function, a data-driven policy iteration (PI) algorithm is developed for continuous-time linear systems in the presence of unknown disturbances, and then the control policy for each iteration can be obtained from the matrix extraction mentioned above \cite{pang2021robust}. To address the challenge of mismatched disturbances, the zero-sum differential game and $L_2$ gain consideration are employed by the Nash equilibrium solution and Hamilton-Jacobi Issac (HJI) equation, respectively \cite{wang2025prescribed,li2025event}. However, these works are implemented on the basis of the knowledge of the model \cite{wang2025prescribed,li2025event}. Unlike PI algorithms, the value iteration (VI) method can achieve the value function after each step without solving the equations and the requirement of an admissible initial control policy \cite{liu2025value}. Several approaches to VI-based RL algorithms have been investigated to address dynamic uncertainties, such as using the neural network approximation (NN) for the value function and control policy \cite{liu2025value}, data collection for the computation of the linear optimal output regulation problem (LOORP) with exo-system consideration \cite{jing2025incremental}, inexact VI procedure for updating the value function after one step \cite{lai2025robust}. Moreover, to overcome the difficulty of input constraints, the $\tanh^{-\top}$ function is integrated in the $L_2-$ gain of the controlled system \cite{liu2025value}. On the other hand, the inexact VI procedure is also extended for a discrete-time stochastic system with disturbance considered as a probability distribution \cite{lai2025robust}. In \cite{Guo2025nearly}, the framework of integral RL with Actor-Critic NN and Quasi-SMC method is developed for nonlinear continuous time systems. Although these above methods can handle the dynamic uncertainties in developing RL algorithms, to address the problem of complete uncertainties, Off-Policy RL algorithms are necessary to employ by applying the control behavior to the model and then collecting the information of input, output variables for computation \cite{zhang2024off,jha2025off,treesatayapun2025model,huang2025inverse,gao2025resilient}. In addition, Off-Policy RL methods are also extended for several situations, such as $H_\infty$ Fault-Tolerant Control (FTC) \cite{zhang2024off}, safe optimal control \cite{jha2025off}, barrier functions \cite{treesatayapun2025model}, inverse optimal control \cite{huang2025inverse}, resilient control \cite{gao2025resilient}. In \cite{jha2025off}, the invariance of the safe set is satisfied by establishing a generalized safety-aware Hamilton-Jacobi-Bellman (HJB). However, although the Off-Policy RL methods guarantee the consideration of complete dynamic uncertainty, the calculation in each step requires simultaneous implementation of the control policy and the value function, which makes the complexity of computation. Thus, the efficient treatment of complete uncertainties plays a crucial role in boosting the practicality of RL algorithms. In addition, the development of RL algorithms subject to constraints is a challenge.           

One of the most popular model-free algorithms is the Q-learning method investigated to estimate the state action Q-function, which is able to easily achieve the control policy after each iteration step without solving the optimization problem. For implementation of learning the optimal Q-function, the Q-learning algorithm will use dynamic programming (DP) with Bellman equation for the Q-function, which is obtained from the value function in the RL algorithm next time. Hence, this can be conveniently implemented in discrete-time systems \cite{wang2023discounted,lopez2023efficient,xiao2025multistep,li2025optimal,xu2025q,yuan2025evolution,song2022model,zhao2025neural}. In the case of an existing discount factor in the performance index, the Q function is chosen from the Bellman function, an additional term obtained from the performance index next time, and the discount factor \cite{wang2023discounted}. This is able to implement the VI algorithm for the LQT problem with a discount factor \cite{wang2023discounted}. For the case of LQR problem, robustness properties and the consideration of persistence of excitation (PE) condition and the deadbeat control scheme are developed with data-based Q-learning \cite{lopez2023efficient}. On the other hand, the traditional method of establishing the Q-function from the Bellman function at the next time can be also implemented for multi-agent systems (MAS) and cyber-physical systems \cite{xiao2025multistep, li2025optimal}. In addition, the Q-learning algorithm can be integrated with an evolutionary algorithm to solve the LQT problem \cite{yuan2025evolution}. In \cite{zhao2025neural}, the Q-function is established from the augmented utility function, which is generated by the control barrier function (CBF), to complete the safe Q learning algorithm for the LQT problem. In \cite{xu2025q, song2022model}, in view of the relation between Q-functions at two consecutive time instants and the Neural Network (NN) approximation, a general Q-learning and an iterative deterministic Q-learning based control schemes are developed for nonlinear switched systems and nonlinear input-affine DTSs, respectively. However, due to the continuity of the time domain, it is hard to establish the Q-function from the value function in Continuous-Time Systems (CTS) for developing Q-learning algorithms \cite{Possieri,VAMVOUDAKIS201714,cui2025q,yu2025optimal,perrusquia2022solution}. For developing a Q-learning algorithm in linear time invariant CTS, the Q-function is designed from the Bellman function at the same time and the Hamiltonian term \cite{VAMVOUDAKIS201714}. Moreover, the weights of the Q-function and optimal control are trained by minimizing the squared norm of errors to \cite{VAMVOUDAKIS201714}. This Q-learning algorithm is extended for uncertain non-linear systems with mismatched perturbations by the NN approximation \cite{cui2025q}. A different approach to the Q-learning algorithm is developed for perturbed nonlinear systems with nonzero-sum game consideration by employing a time-varying Q-function and assessing the $H_2/H_{\infty}$ control \cite{yu2025optimal}. As a further study, when the development of model-free Q-learning algorithms in CTS and actuator saturation are simultaneously considered, whether we can find for LQR, LQT problems with efficient computation is
an interesting but challenging topic. On the other hand, the implementation of F-16 control systems still only considers Q-learning for DTS \cite{dao2025h} as well as conventional non-linear control methods \cite{f16original}.  

With the above discussions, the research gaps are presented as follows: $1)$ The development of off-policy Model-Free RL algorithms in CTS by special functions for convenient computation without solving the optimization problem after each step. $2)$ The handling of the satisfaction of the admissible control policy in the initial step of Q-learning algorithms and how to efficiently obtain the admissible control policy? $3)$ The implementation of off-policy Model-Free RL algorithms for CTS under actuator saturation and their development for robotic systems. To fill in the above research gaps, this article develops four off-policy Q-learning algorithms for CTS in the cases of LQR and LQT with actuator saturation. The main contributions are summarized below.
\begin{enumerate}
    \item[1)] Unlike Q-learning methods for CTS presented in \cite{Possieri,VAMVOUDAKIS201714,cui2025q,yu2025optimal,perrusquia2022solution}, this study introduces a novel off-policy model-free framework based on the advantage function for linear continuous-time systems (CTSs) to estimate the terms $\Lambda_{xx}, \Lambda_{xu}, \Lambda_{uu}$ derived from the Q-function instead of being utilized solely for its optimality properties. Furthermore, we provide a detailed analysis of the convergence property of the proposed Alg~\ref{Alg 2} for the LQR problem based on Theorem~\ref{theorem 2}.  
    
    \item[2)] It is different from the policy iteration (PI) technique \cite{zhang2024off,jha2025off,treesatayapun2025model,huang2025inverse,gao2025resilient} requiring that the initial control policy satisfies an admissible condition, which is eliminated by introducing new approaches for both the LQR case and the LQT problem of optimal tracking control with constrained inputs. Furthermore, in the LQR case, the admissible policy can be obtained efficiently by solving an appropriate Linear Matrix Inequalities (LMI), which is computationally more efficient than existing methods that rely on solving a BMI \cite{Possieri}.  
    
    \item[3)] The proposed methodologies are further extended to accommodate tracking control under input constraints, demonstrating their applicability to constrained control scenarios. This has been effectively verified in an F-16 aircraft system by simulation results. 

\end{enumerate}

The remainder of this article is organized as follows. Some mathematical preliminaries, remarks, and the original algorithm are stated in Section~\ref{sec:headings}. The main contributions with Off-Policy Q-learning Algorithms for LQR and Optimal Tracking Problems are presented in Section~\ref{Section 3}. Illustrative examples of F-16 with simulation figures are given in Section~\ref{Section 4}, and conclusions are drawn in Section~\ref{Section 5}.

\textit{Notation:}
Throughout this paper, \( \mathbb{R} \) and \( \mathbb{N} \) represent sets of real numbers and non-negative integers, respectively. The operators \( \mathrm{vec}(\cdot) \), \( \otimes \), \( \mathrm{vec}_S(\cdot) \), and \( \otimes_S \) correspond to the vectorization operator, the standard Kronecker product, the symmetric vectorization operator and the symmetric Kronecker product, respectively. The identity matrix is denoted by \( I \), and its dimension is determined as appropriate. And $\mathrm{diag}(v)$ denotes a diagonal matrix whose diagonal elements are given by the entries of the vector $v \in \mathbb{R}^n$. For a symmetric matrix $S \in \mathbb{R}^{n \times n}$, the maximum and minimum eigenvalues are denoted by $\lambda_{\max}$ and $\lambda_{\min}$, respectively. The spectral norm of a matrix $A \in \mathbb{R}^{m \times n}$ is defined as $\|A\| = \sqrt{\lambda_{\max}(A^\top A)}$, and the Euclidean norm of a vector $v \in \mathbb{R}^n$ is given by $\|v\| = \sqrt{v^\top v}$. Finally, we denote $\nabla f(x(t))$ as the partial derivative of the function $f(x(t))$ with respect to the variable $x(t)$.

\section{Problem Statement and Preliminaries}
\label{sec:headings}
Consider a linear time invariant continuous-time system as presented by the following dynamic equation:
\begin{equation} \label{sys1}
    \dot{x}(t) = Ax(t) +Bu(t)
\end{equation}
where $x(t) \in \mathbb{R}^{n} $ is the state vector and $u(t) \in \mathbb{R}^{m}$ is the control input of the system. In addition, $A \in \mathbb{R}^{n \times n}$, $B \in \mathbb{R}^{n \times m}$ are the constant matrix and are assumed to be an unknown matrix. 
In this paper, two optimal control problems~\ref{Prob 1},~\ref{Prob. 2} for linear continuous-time systems \eqref{sys1} will be investigated as follows: 
\begin{problem}\label{Prob 1}
The first problem involves designing a state feedback controller $u^{*}(x)$ for model \eqref{sys1} to minimize the infinite-horizon quadratic cost over the trajectory of the closed-loop system
\begin{equation} \label{idxcost1}
    J(x(0),u)=\int^{\infty}_0 r\left(x(\tau),u(\tau)\right) d\tau
\end{equation}
where $r\left(x(t),u(t)\right)=M_m(x(t)) + u(t)^\top  Ru(t)$, $M_m(x(t))= x^\top(t) Mx(t)$, $M \in \mathbb{R}^{n \times n}$ is a semi positive definite matrix and $R \in \mathbb{R}^{m \times m}$ is a positive definite matrix.
\end{problem} 

The dynamic systems~\eqref{sys1} discussed in this article are subject to appropriate assumptions, which are described as follows:
\begin{assumption}\label{Assumption 01}
 The pair $\left(A,\sqrt{M}\right)$ is observable and the pair (A,B) is stabilizable.  
\end{assumption}  
\begin{assumption}\cite{Possieri}\label{Assumption 02}
     Although $M$ and $R$ in \eqref{idxcost1} are two unknown matrices, the reward signal $r(x(t), u(t))$ is directly measurable.
\end{assumption}

The observability condition in the assumption~\ref{Assumption 01} is necessary to ensure the uniqueness and existence of the solution to the optimization problem~\ref{Prob 1}. Meanwhile, Assumption~\ref{Assumption 02} refers to the establishment of certain problems related to the optimal output feedback as $y(t)=Cx(t)$, with an unknown matrix $C$, as well as predictive control problems of the economic model \cite{Possieri}. The extension of the constraint consideration in problem~\ref{Prob 1} is shown in the following problem.
\begin{problem}\label{Prob. 2}
Similarly to problem~\ref{Prob 1}, the objective of the second problem is to design an optimal feedback control input that minimizes some given cost function on the trajectory of the system, subject to input constraints $| u_i(t)| \leq \lambda, j= 1,...,m$.
Motivated by the concept of using the barrier cost function in \cite{MODARES20141780} to ensure that the input satisfies the constraints, the cost function is expressed as follows:
\begin{equation} \label{cost_function2}
    \begin{aligned}
        J&=\int^{\infty}_0 r\left(x(\tau),u(\tau)\right) d\tau\\
    &= \int^{\infty}_0 \bigg(M_m\left(x(\tau)\right)+ U\left(u(\tau)\right)\bigg) d\tau
    \end{aligned}
\end{equation}
where $U(u) = 2 \int_0^u \left(\lambda \beta^{-1} (v/\lambda)\right)^\top  R dv$, $\beta(\cdot)$ is hyperbolic tangent function $\tanh(\cdot)$ and $R$ is assumed to be diagonal matrix. In this establishment, $R$ is chosen as a positive definite diagonal matrix. $M$ and $R$ are also considered unknown matrices, while the reward signal $r(x(t), u(t))$ is measurable.
\end{problem}

In this section, the model-free Q-learning method \cite{Possieri}, which serves as a solution to Problem~\ref{Prob 1}, can be revisited along with some discussions. Building upon this foundation, a novel framework that is explicitly tailored to the LQR problem will be introduced. Subsequently, the second problem is explored through the lens of optimal tracking control and solved again using a sophisticated approach inspired by the proposed methodologies. 

According to the optimal control literature, the optimal feedback controller of problem~\ref{Prob 1} is derived by solving the Riccati equation given as:
\begin{equation}\label{Riccati}
A^{\top}P^* + P^*A + M - P^*BR^{-1}B^{\top}P^* = 0
\end{equation}
The unique solution, $P^* \leq 0$, represents a critical component of the so-called optimal value function, $V = x^\top P^* x$, and is also utilized to compute the optimal controller gain
\begin{equation}\label{opt-con}
    u^*(t)= -K^*x(t)= - R^{-1}B^\top P^* x(t) 
\end{equation}
However, as the dimensionality of the system state increases, solving equation \eqref{Riccati} becomes an increasingly formidable challenge. To address this issue with numerical efficiency, an iterative procedure is proposed to approximate the solution $P^*$, as outlined in Kleinman's Theorem~\ref{Theorem 1}.
\begin{theorem} \label{Theorem 1}\cite{Kleinman} 
Suppose that $\hat{K}_0 \in \mathbb{R}^{n \times m}$ is a stabilizing feedback gain matrix such that all eigenvalues of $A + B\hat{K}_0$ have negative real parts in the complex plane. Starting with the initial value $\hat{K}_0$, define a sequence of symmetric, positive definite matrices $P_i$ and feedback gains $\hat{K}_i$ iteratively as follows:
\begin{enumerate}
    \item Solve the Lyapunov equation for finding $P_i$:  
\begin{equation}  \label{value_evaluation}
A_i^\top P_i + A_i P_i = -M - \hat{K}_i^\top R \hat{K}_i,  
\end{equation}  
where $A_i = A - B\hat{K}_i$, with $P_i = P_i^\top > 0$.
    \item Update the feedback gain using:  
\begin{equation}  \label{policy_improvement}
\hat{K}_{i+1} = -R^{-1} B^\top P_i.  
\end{equation}  
\end{enumerate}
Then, the following properties hold:  
\begin{enumerate}  
\item The matrix $A + B\hat{K}_i$ is Hurwitz for all $i$.  
\item The sequence of solutions satisfies $P^* \leq P_{i+1} \leq P_i$, where $P^*$ is the optimal solution.  
\item The sequence of gains and solutions converge as:  
\begin{align*}  
        \lim_{i \to \infty} \hat{K}_i &= K^*, \\  
        \lim_{i \to \infty} P_i &= P^*,  
    \end{align*}  
    where $K^*$ is the optimal feedback gain.  
\end{enumerate}  
\end{theorem}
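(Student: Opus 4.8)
The plan is to prove the three conclusions simultaneously by induction on the iteration index $i$, along the classical lines of the Kleinman--Hewer convergence argument. The base case $i=0$ is exactly the hypothesis that $\hat K_0$ is stabilizing, i.e.\ $A+B\hat K_0$ is Hurwitz. For the inductive step, assume the closed-loop matrix $A_i=A+B\hat K_i$ is Hurwitz. Since $M+\hat K_i^\top R\hat K_i\ge 0$, the Lyapunov equation \eqref{value_evaluation} has the unique solution $P_i=\int_0^\infty e^{A_i^\top t}(M+\hat K_i^\top R\hat K_i)e^{A_i t}\,dt$, and observability of $(A,\sqrt M)$ from Assumption~\ref{Assumption 01} upgrades this to $P_i=P_i^\top>0$; this positive definite $P_i$ is the certificate that will be propagated to step $i+1$.

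The crux is to show that $A_{i+1}=A+B\hat K_{i+1}$, with $\hat K_{i+1}$ given by \eqref{policy_improvement}, is again Hurwitz. I would substitute \eqref{policy_improvement} into \eqref{value_evaluation} and complete the square, rewriting the Lyapunov identity obeyed by $P_i$ relative to the \emph{next} closed-loop matrix as
\begin{equation*}
A_{i+1}^\top P_i + P_i A_{i+1} = -M - \hat K_{i+1}^\top R\hat K_{i+1} - (\hat K_i - \hat K_{i+1})^\top R(\hat K_i - \hat K_{i+1}).
\end{equation*}
The right-hand side is negative semidefinite and bounds $-(M+\hat K_{i+1}^\top R\hat K_{i+1})$ from above, so $V(x)=x^\top P_i x$ is a Lyapunov function for $\dot x=A_{i+1}x$ whose derivative vanishes only on trajectories along which simultaneously $\sqrt M\,x\equiv 0$ and $\hat K_{i+1}x\equiv 0$. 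Along such a trajectory the input $u=\hat K_{i+1}x\equiv 0$, so it is also a trajectory of the open-loop system $\dot x=Ax$ with $\sqrt M\,x\equiv 0$, which observability of $(A,\sqrt M)$ forces to be the zero trajectory; a LaSalle argument then gives asymptotic stability, i.e.\ $A_{i+1}$ is Hurwitz. This closes the first claim and ensures $P_{i+1}$ is well defined in the next round.

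With Hurwitzness established, the sandwich $P^*\le P_{i+1}\le P_i$ will follow from two difference identities. Subtracting the Lyapunov equation \eqref{value_evaluation} written at index $i+1$ from the completed-square identity above yields
\begin{equation*}
A_{i+1}^\top(P_i - P_{i+1}) + (P_i - P_{i+1})A_{i+1} = -(\hat K_i - \hat K_{i+1})^\top R(\hat K_i - \hat K_{i+1}) \le 0,
\end{equation*}
hence $P_i-P_{i+1}\ge 0$ since $A_{i+1}$ is Hurwitz; an entirely analogous computation, comparing $P_{i+1}$ with the stabilizing solution $P^*$ of the Riccati equation \eqref{Riccati}, produces a difference Lyapunov identity with negative semidefinite right-hand side $-(\hat K_{i+1}-K^*)^\top R(\hat K_{i+1}-K^*)$, so $P_{i+1}\ge P^*$. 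Finally, $\{P_i\}$ is a monotone nonincreasing sequence of symmetric matrices bounded below by $P^*$, hence convergent to some $\bar P\ge P^*$; passing to the limit in \eqref{policy_improvement} and \eqref{value_evaluation} shows that $\bar P$ solves \eqref{Riccati}, and uniqueness of its stabilizing solution forces $\bar P=P^*$, whence $\hat K_i\to K^*$ via \eqref{policy_improvement}.

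The main obstacle is the inductive preservation of the Hurwitz property, on which everything downstream rests: existence of $P_{i+1}$, both monotonicity bounds, and the limit passage. The delicate point is that observability (even detectability) of the \emph{open-loop} pair $(A,\sqrt M)$ is not automatically inherited by the \emph{closed-loop} pair $(A_{i+1},\sqrt M)$ after the feedback substitution, so one cannot simply cite a detectability lemma for $(A_{i+1},\sqrt M)$; instead one must exploit the precise structure of the right-hand side of the completed-square identity -- that it dominates both $M$ and $\hat K_{i+1}^\top R\hat K_{i+1}$ -- to reduce any trajectory on which $\dot V=0$ to an open-loop trajectory and then invoke observability of $(A,\sqrt M)$. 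Once that is handled, the remaining steps are routine manipulations of Lyapunov equations.
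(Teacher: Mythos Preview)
Your argument is the classical Kleinman--Hewer induction (completed-square rewriting of the Lyapunov identity, LaSalle/observability to propagate the Hurwitz property, then monotone convergence via difference Lyapunov equations), and it is correct. The paper does not supply its own proof but simply defers to the original reference \cite{Kleinman}, so your proposal is precisely the argument being cited.
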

\begin{proof}
    See the work in \cite{Kleinman}.
\end{proof}
The on-policy method presented in \cite{Possieri} provides a model-free implementation of Theorem~\ref{Theorem 1} by leveraging the properties of the Q-function as follows:  
 \begin{equation}\label{Q func}
 \begin{aligned}
         Q(x,u) &= \hat{V}(x) + \nabla \hat{V}(x)(Ax + Bu) + x^{\top}Mx + u^{\top}Ru\\
         &=\begin{bmatrix}
             x\\
             u
         \end{bmatrix}^\top \begin{bmatrix}
    \hat{P}+M+A^\top \hat{P}+ \hat{P}A & \hat{P}B\\
    B^\top \hat{P} & R
    \end{bmatrix}
    \begin{bmatrix}
        x\\
        u
    \end{bmatrix}\\
    &= x^\top H_{xx}x+2x^\top H_{xu}u+u^\top H_{uu}u
     \end{aligned}
 \end{equation}
where $\hat{P}=\hat{P}^\top$, $H_{xx} = \hat{P} + M + \hat{P}A + A^T \hat{P}$, $ H_{xu} = \hat{P}B$ and $ {H_uu} = R$. 
The Q function \eqref{Q func} was first introduced in \cite{VAMVOUDAKIS201714} for a different purpose. In its original formulation, the Q function was defined specifically for the case where the value function is optimal, leveraging the optimality of the Hamiltonian function \cite{VAMVOUDAKIS201714}. In contrast, an alternative approach proposed in \cite{Possieri} utilizes the Q function differently. By partitioning the Q function \eqref{Q func}, it is shown that the Q function possesses a certain unique characteristic, since it encapsulates the quantities $H_{xx}$, $ H_{xu}$, $H_{uu}$, which remain conserved along any trajectory of the system (see \cite{Possieri}, Lemma 1). This property of the Q-function enables the introduction of a method to determine these quantities directly from system trajectory data by solving the following equation:
\begin{equation} \label{Q equation}
\begin{aligned}
        \xi_0(t)+ \mathrm{vec}_{S}^\top &\left(\hat{P} \right)\psi_0(t)=  \mathrm{vec}_{S}^\top \left(\hat{H}_{xx} \right)\phi_0(t) + \\
        & +2\mathrm{vec}^\top \left(\hat{H}_{xu}\right)\mu_0(t) + \mathrm{vec}_{S}^\top \left(\hat{H}_{uu}\right)\nu_0(t)
\end{aligned}
\end{equation}
where $\phi_0(t)=\int_{t_0}^t x(\tau)\otimes_{S}x(\tau)d\tau$, $\mu_0(t)=\int_{t_0}^t u(\tau)\otimes x(\tau)d\tau$, $\nu_0(t)=\int_{t_0}^t u(\tau)\otimes_{S}u(\tau)d\tau$,  $\xi_0(t)=\int_{t_0}^t  r(x(\tau),u(\tau))d\tau$, $\psi_0(t)=\int_{t_0}^t x(\tau)\otimes_{S}x(\tau)d\tau+ x(t)\otimes_{S}x(t)-x(t_0)\otimes_{S}x(t_0)$.
Once the solution to equation \eqref{Q equation} is obtained, the feedback gain in Step 2 of Theorem~\ref{Theorem 1} can be computed directly from the Q-function, without requiring any prior knowledge of the system dynamics, as follows:
\begin{equation}
\hat{K}= \hat{H}_{uu}^{-1} \hat{H}_{xu}^\top  
\end{equation}
On the other hand, to perform Step 1 of Theorem~\ref{Theorem 1} in a model-free manner, the Integral Reinforcement Learning (IRL) approach is employed to compute the value function using only the collected data: 
\begin{equation} \label{IRL}
     \mathrm{vec}_S^\top \left(\hat{P} \right)\psi_{1}(t)= \mathrm{vec}^\top \left(M+\hat{K}^\top R \hat{K} \right)\delta_1(t)
\end{equation}
where $\psi_{1}(t)=x(t_0)\otimes_{S}x(t_0)-x(t)\otimes_{S}x(t)$, $\delta_1(t)= \int_{t_0}^t x(\tau)\otimes x(\tau)d\tau$.

Note that there exists a method to initialize the gain of the initial feedback controller that satisfies the admissibility assumption in Theorem~\ref{Theorem 1}, based solely on knowledge of the Q function. Without requiring any information on the dynamics of the system, this method involves solving the following Bilinear Matrix Inequality (BMI) \cite{Possieri}:
    \begin{equation} \label{BMI}
     \hat{H}_{xx} - W + \hat{H}_{xu} \hat{K} + \hat{K}^\top  \hat{H}_{xu}^\top  < 0, \quad W > 0 
  \end{equation}
  s.t
  \begin{equation}\label{constraintBMI}
  \begin{aligned} 
 &\left[\mathrm{ \mathrm{vec}}_{S}\left(\hat{H}_{xx}\right), \mathrm{ \mathrm{vec}}\left(\hat{H}_{xu}\right), \mathrm{ \mathrm{vec}}_{S}\left(\hat{H}_{uu}\right)\right] \\
&\quad\quad= \Omega_0^\dagger \left(\Xi_0 + \Psi_0 \mathrm{vecs}(W)\right)   
  \end{aligned}
  \end{equation}
where $\Omega_0 = \begin{bmatrix}
    C_0(t_1)^\top \\
    C_0(t_2)^\top \\
    \vdots \\
    C_0(t_N)^\top
\end{bmatrix},
\Xi_0 = \begin{bmatrix}
    \xi(t_1)^\top\\
    \xi(t_2)^\top \\
    \vdots \\
    \xi(t_N)^\top
\end{bmatrix}, 
\Psi_0 = \begin{bmatrix}
    \psi_0(t_1)^\top \\
    \psi_0(t_2)^\top \\
    \vdots \\
    \psi_0(t_N)^\top
\end{bmatrix}$, $C_0(t)= \left[\phi(t),2\mu(t),\nu(t) \right], t_1<t_2< \hdots < t_N.$

To summarize the Q learning method \cite{Possieri}, the Alg~\ref{Alg 1} is presented by combining the equations \eqref{Q equation} through \eqref{constraintBMI} as follows.
\begin{algorithm} 
    \caption{On-policy Q-learning}\label{Alg 1}
\begin{algorithmic}[1]
\State Collect trajectory data $\Omega_0$, $\Xi_0$, and $\Psi_0$ from the system under excitation by the behavior policy $u$.
\State Solve $K_0$ from the BMI depicted in \eqref{BMI} and \eqref{constraintBMI}.
\State \textbf{loop}:
\State Collect data $\bar{\delta}_{1}^i= \begin{bmatrix}
    \delta_1(t_{1,i})^\top \\
    \delta_1(t_{2,i})^\top \\
    \vdots \\
    \delta_1(t_{N,i})^\top
\end{bmatrix}$, $\bar{\psi}_{1}^i= \begin{bmatrix}
    \psi_{1}(t_{1,i})^\top \\
    \psi_{1}(t_{2,i})^\top \\
    \vdots \\
    \psi_{1}(t_{N,i})^\top
\end{bmatrix} $ from the system controlled by the policy $u_i = -K_i x$. This can be achieved by either resetting the system's state or adding probing noise to the control input.
\State Solve the following equation to obtain $P_i$:
$ \mathrm{vec}_S^\top (P_i) \bar{\psi}_{1}^i= \mathrm{vec}^\top (M+K_i^\top RK_i)\bar{\delta}_{1}^i$.
\State Let $   \left[ \mathrm{vec}_S \left(H_{xx(i)}\right),  \mathrm{vec}\left(H_{xu(i)}\right),  \mathrm{vec}_S \left(H_{uu(i)}\right)\right] = \Omega_0^{\dagger}\left(\Xi_0 + \Psi_0  \mathrm{vec}_S \left(P_i \right) \right)$.
\State Update the control policy $K_{i+1}=H_{uu(i)}^{-1}H_{xu(i)}^\top $.
\State Let $i \leftarrow i+1$
\State \textbf{until} $\|P_i-P_{i-1}\| < \epsilon_t$, $\epsilon_t$ is a small real positive number.
\State Return $P^*= P_i$, $u^*(x)=-K_{i+1}x$.
\end{algorithmic}
\end{algorithm}

\begin{remark}\label{remark 1}
In this Alg~\ref{Alg 1}, it is important to note that in Step 4, the data are collected from the system driven by the exact control policy $u_i = -K_ix$, calculated in Step 7, making this Alg an on-policy method. However, this approach has a significant drawback, which is not always guaranteed that the collected data are sufficiently rich for calculating the value function $P_i$ in Step 5. This issue was addressed in \cite{Possieri} by resetting the system state at random to ensure that the collected data satisfies the Persistence of Excitation (PE) condition. However, resetting the system state is often impractical in real-world applications. An alternative solution, commonly adopted in several methods, is to introduce probing noise $e_n(t)$ into the control input. Although this approach can help the data satisfy the PE condition, it introduces a bias into the estimated value function $\hat{P}$. Specifically, in Alg~\ref{Alg 1}, if probing noise is added to the control input during data collection in Step 4, the value function computed in Step 5 will be modified as follows:
\begin{equation} 
\begin{aligned}
 \mathrm{vec}_S^\top \left(\hat{P} \right) \psi_{1}(t) =& \mathrm{vec}^\top \left(M+\hat{K}^\top R \hat{K} \right) \delta_1(t) 
 \\
& + \int_{t_0}^t e_n^\top(\tau) R e_n(\tau) d\tau    
\end{aligned}
\end{equation}
Compared to the original equation \eqref{IRL}, the modified equation includes an additional residual term, $\int_{t_0}^t e_n^\top(\tau) R e_n(\tau) , d\tau$, which depends on the probing noise $e_n(t)$. This residual term introduces bias into the estimation of $\hat{P}$, potentially compromising the accuracy of the Alg.
\end{remark}
 
Based on the analysis in Remark~\ref{remark 1}, a model-free off-policy Q-learning method will be proposed in the next section. This approach eliminates the need to reset the system state and avoids introducing bias into the estimation caused by adding probing noise.

\section{Main result}\label{Section 3}
In this section, several novel off-policy Q-learning methods with respect to continuous time will be introduced to address the Problems~\ref{Prob 1},~\ref{Prob. 2}. Furthermore, this off-policy Q-learning algorithm is improved for the optimal tracking problem with constrained input. Then, the convergence to the optimal control of these off-policy Q-learning algorithms is analyzed.  

\subsection{Off-policy Q-learning Approach for LQR Problem}
First, a new definition of the Advantage function will be constructed within the framework of the LQR problem.
From $Q(x,u)$ and $\hat{V}(x)$ defined in \eqref{Q func}, construct the Advantage function as follows:
\begin{equation} \label{advantage_func}
\begin{aligned}
    &A(x,u)=Q(x,u)-\hat{V}(x)\\
    &=\begin{bmatrix}
        x\\
        u
    \end{bmatrix}^\top  \begin{bmatrix}
    \hat{P}+M+A^\top \hat{P}+\hat{P}A & \hat{P}B\\
    B^\top \hat{P} & R
    \end{bmatrix}
    \begin{bmatrix}
        x\\
        u
    \end{bmatrix}-x^\top \hat{P}x\\
    &= x^\top \Lambda_{xx}x+2x^\top \Lambda_{xu}u+u^\top \Lambda_{uu}u
\end{aligned}
\end{equation}
where $\hat{P}=\hat{P}^\top$, $\Lambda_{xx} = M + \hat{P}A + A^T \hat{P}$, $ \Lambda_{xu} = \hat{P}B$ and $\Lambda_{uu} = R$.

Note that with this new definition, the advantage function can take a formulation similar to the Hamiltonian function $\mathcal{H}(x,u,\nabla V^*)$. However, similar to the Q-function defined in \eqref{Q func}, the advantage function will serve a different purpose. In this work, the advantage function will be used to estimate the components $\Lambda_{xx}, \Lambda_{xu}, \Lambda_{uu}$ derived from the Q-function instead of being utilized solely for its optimality properties, as the case being considered with the Hamiltonian function.
Then with the same derivation as in \cite{Possieri}, by considering the integral of the advantage function along the trajectory driven under the behavior policy $u$, yields:

\begin{equation}\label{advantage-equation}
\begin{aligned}
    \xi_{0,1}+ &\mathrm{vec}_{S}^\top \left(\hat{P} \right)\psi_{0,1}= \mathrm{vec}_{S}^\top \left(\hat{\Lambda}_{xx} \right)\phi_{0,1} + \\
    & +2\mathrm{vec}^\top \left(\hat{\Lambda}_{xu} \right)\mu_{0,1} + \mathrm{vec}_{S}^\top \left(\hat{\Lambda}_{uu} \right)\nu_{0,1}
\end{aligned}
\end{equation}
where $\phi_{0,1}=\int_{t_0}^{t_1}x(\tau)\otimes_{S}x(\tau)d\tau$, $\mu_{0,1}=\int_{t_0}^{t_1} u(\tau)\otimes x(\tau)d\tau$, $\nu_{0,1}=\int_{t_0}^{t_1} u(\tau)\otimes_{S}u(\tau)d\tau$,  $\xi_{0,1}=\int_{t_0}^{t_1}  r(x(\tau),u(\tau))d\tau$, $\psi_{0,1}= x(t_1)\otimes_{S}x(t_1)-x(t_0)\otimes_{S}x(t_0)$, $t_1>t_0$.
Once the Advantage function is obtained from \eqref{advantage-equation}, the policy update step, as described in Theorem~\ref{Theorem 1}, can be executed without any prior knowledge of the system dynamic: $\hat{K} = \hat{\Lambda}_{uu}^{-1} \hat{\Lambda}_{xu}^\top$. Subsequently, to avoid resetting the system state for satisfying the PE condition when calculating the value function $\hat{P}$, we adopt the off-policy method proposed in \cite{JIANG20122699}. First, the system dynamics \eqref{sys1} are rewritten as follows:

\begin{equation} \label{off-sys}
    \dot{x}= Ax+Bu_i+B(u-u_i)
\end{equation}
where $u_i = -K_i x$ and $K_i = R^{-1}B^\top P_i$, with $P_i$ representing the current estimate of the optimal solution $P^*$ as described in Theorem~\ref{Theorem 1}. Substituting the dynamics \eqref{off-sys} into Steps 1 and 2 of Theorem~\ref{Theorem 1}, we obtain: 
\begin{equation}\label{off-value}
\begin{aligned}
    x^\top (t_1) &P_ix(t_1)-x^\top (t_0)P_ix(t_0)= \\
    &\int_{t_0}^{t_1}\left(-x^\top(\tau) Mx(\tau)-x(\tau)K_i\right)^\top RK_ix(\tau) d\tau& \\
    &+2\int_{t_0}^{t_1}(u(\tau)+K_{i}x(\tau))^\top B^\top P_ix(\tau)d\tau
\end{aligned}
\end{equation}
Substituting (\ref{off-value}) into (\ref{advantage-equation}), yields:
\begin{equation}
\begin{aligned}
& \mathrm{vec}_{S}^\top (R)\nu_{0,1} - \mathrm{vec}^\top \left(K^\top _iRK_i \right)\delta_{0,1}+\\
& \quad + 2\mathrm{vec}^\top\left( P_i B \right) \mu_{0,1} +2 \mathrm{vec}^\top \left(K_i^\top B^\top P_i \right) \delta_{0,1}\\
&= \mathrm{vec}_{S}^\top \left(\Lambda_{xx(i)} \right)\phi_{0,1} +2\mathrm{vec}^\top \left(\Lambda_{xu(i)} \right)\mu_{0,1}\\
&\quad + \mathrm{vec}_{S}^\top \left(\Lambda_{uu(i)} \right)\nu_{0,1}
\end{aligned} 
\end{equation}
where $\delta_{0,1}=\int_{t_0}^{t_1}x(\tau)\otimes x(\tau)d\tau$.
Since equation \eqref{advantage-equation} holds, it follows that $\Lambda_{xx(i)} = M + P_iA + A^T P_i$, $ \Lambda_{xu(i)} = P_iB$ and $\Lambda_{uu(i)} = R$. To align with the policy improvement step in Theorem~\ref{Theorem 1}, the control policy should be satisfied at the current iteration as:$K_i= R^{-1}\Lambda_{xu(i-1)}^\top$ where $\Lambda_{xu(i-1)}$ is not a variable but rather a constant value from the previous iteration of $\Lambda_{xu(i)}$. With this analysis, the following relationship is derived:
\begin{equation} \label{advantage-relation}
\begin{aligned}
        2\mathrm{vec}\left(K_i^\top \Lambda_{xu(i)}^\top \right)\delta_{0,1} &- \mathrm{vec}\left(K_i^\top \Lambda_{xu(i-1)}^\top\right)\delta_{0,1}=\\
    &=\mathrm{vec}_S\left(\Lambda_{xx(i)}\right)\phi_{0,1}
\end{aligned}
\end{equation}
Substituting equation \eqref{advantage-relation} into equation \eqref{advantage-equation} results in the following modified equation for the Advantage function:
\begin{equation}\label{modified_advantage}
   \begin{aligned}
&\xi_{0,1}+  \mathrm{vec}_{S}^\top \left(P_i \right) \psi_{0,1}= \\
&2\mathrm{vec}^\top \left(\Lambda_{xu(i)} \right)\mu_{0,1}+ \mathrm{vec}_{S}^\top \left(\Lambda_{uu(i)}\right)\nu_{0,1}\\
&+\left[ 2\mathrm{vec}^\top \left(K_i^\top \Lambda_{xu(i)}^\top \right) - \mathrm{vec}^\top \left(K_i^\top \Lambda_{xu(i-1)}^\top \right) \right]\delta_{0,1}\\
       \end{aligned}
\end{equation}
Consider the following time instants $0<t_1<t_2< \hdots < t_N$, define the matrices as follows:
\begin{equation}\label{Matrices_Notation}
\begin{aligned}
&\bar{\psi}= \left[ \psi_{0,1},\psi_{1,2}, \hdots, \psi_{N-1,N}  \right]^\top, \bar{\xi}= \left[ \xi_{0,1},\xi_{1,2}, \hdots, \xi_{N-1,N}  \right]^\top\\
&\bar{\delta}= \left[ \delta_{0,1},\delta_{1,2}, \hdots, \delta_{N-1,N}  \right]^\top,\bar{\mu}= \left[ \mu_{0,1},\mu_{1,2}, \hdots, \mu_{N-1,N}  \right]^\top\\
&\bar{\nu}= \left[ \nu_{0,1},\nu_{1,2}, \hdots, \nu_{N-1,N}  \right]^\top, \bar{\phi}= \left[ \phi_{0,1},\phi_{1,2}, \hdots, \phi_{N-1,N}  \right]^\top\\
&\Delta_i= \left[ -\bar{\psi},2 \left(\bar{\mu}+\bar{\delta}\left(I_{n}\otimes K_i^\top \right) \right),\bar{\nu} \right]\\
&\Theta_i= \left[\mathrm{vec}_{S}^\top\left(P_i\right),\mathrm{vec}^\top\left(\Lambda_{xu(i)}\right), \mathrm{vec}_{S}^\top\left(\Lambda_{uu(i)}\right) \right]^\top\\
&\Xi_i= \bar{\xi}+ \bar{\delta}\mathrm{vec} \left(K_i^\top \Lambda_{xu(i-1)}^\top \right)
%\nonumber
\end{aligned}
\end{equation}
The equation \eqref{modified_advantage} is rewritten similarly at different time intervals $\left[t_1, t_2\right],..., \left[t_{N-1}, t_N\right]$, then we can achieve the following linear matrix equations:
\begin{equation}\label{updaterule1}
     \Delta_i \Theta_i = \Xi_i
\end{equation}
It can be observed that the parameter matrix $\Theta_i$ is determined from \eqref{updaterule1} if the matrix $\delta_i$ possesses full column rank, utilizing the pseudo-inverse:
\begin{equation} \label{pseudo-inv1}
    \Theta_i =   \left(\Delta_i^\top \Delta_i \right)^{-1}\Delta_i^\top \Xi_i
\end{equation}
Since the parameter matrix $\Theta_i$ obtained in \eqref{pseudo-inv1} contains not only the value function at each iteration of Theorem~\ref{Theorem 1} but also includes the components of the Advantage function, \\$\Lambda_{uu(i)}, \Lambda_{xu(i)}$, it follows that according to these components, the step of policy update can be performed as follows:
\begin{equation} \label{policy_update1}
    K_{i+1}= \Lambda_{uu(i)}^{-1}\Lambda_{xu(i)}^\top
\end{equation}
After deriving the key steps \eqref{pseudo-inv1} and \eqref{policy_update1} as the value evaluation and policy improvement procedures outlined in Theorem~\ref{Theorem 1}, our proposed method can now be summarized in Alg~\ref{Alg 2}.

\begin{remark}
    Our proposed algorithm qualifies as an off-policy method because the learning phase operates independently of the data collection phase \cite{JIANG20122699}. During data collection, any persistently exciting input can be used to drive the system without introducing bias into the learning process. 
\end{remark}
The implementation of our algorithm in practice requires the full rank condition of the matrix $\Delta_i$ in \eqref{updaterule1}, which is challenging, since this matrix $\Delta_i$ contains the term $K_i$ changing throughout iterations. To address this, we propose a method to verify the rank of $\Delta_i$ by examining the rank of the collected data alone, as detailed in the lemma below. 

\begin{lemma}\label{Lemma 1}
Assume there exists an integer $l_0>0$ such that, for all $l>l_0$:
\begin{equation}
\begin{aligned} \label{rank_cond1}
    &\mathrm{rank}[I_{xx},I_{xu},I_{uu}]= \frac{1}{2}n(n+1)+mn+\frac{1}{2}m(m+1)\\
\end{aligned}    
\end{equation}
where $I_{xx}= \int_{t_0}^{t_1} x\otimes xd\tau$, $I_{xu}= \int_{t_0}^{t_1} x\otimes ud\tau$ and $I_{uu}=\int_{t_0}^{t_1} u\otimes ud\tau$. Then, the matrix $\Delta_i$ in \eqref{updaterule1} is in full rank.
\end{lemma}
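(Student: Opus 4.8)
The plan is to establish that $\Delta_i$ has trivial right null space; since $\Delta_i$ has exactly $\tfrac12 n(n+1)+mn+\tfrac12 m(m+1)$ columns — precisely the number on the right-hand side of \eqref{rank_cond1} — this is equivalent to full column rank. The first step is to translate the hypothesis: condition \eqref{rank_cond1} is equivalent to the stacked data matrix $[\bar\phi,\bar\mu,\bar\nu]$ from \eqref{Matrices_Notation} having full column rank, because $\int x\otimes x$ and $\int x\otimes_S x$ span the same column space (symmetric vectorization only removes duplicated entries), likewise for $\int u\otimes u$, and $\int x\otimes u$, $\int u\otimes x$ differ by a fixed permutation; hence $\tfrac12 n(n+1)+mn+\tfrac12 m(m+1)$ is the largest rank $[I_{xx},I_{xu},I_{uu}]$ can have, and \eqref{rank_cond1} states that it is attained.

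Next I would rewrite the two ``composite'' column blocks of $\Delta_i$ in terms of $\bar\phi$ and $\bar\mu$. Applying the fundamental theorem of calculus to $\tfrac{d}{d\tau}\!\left(x^\top P x\right)$ along \eqref{sys1} gives, for every symmetric $P$, the identity $\bar\psi\,\mathrm{vec}_S(P)=\bar\phi\,\mathrm{vec}_S\!\left(A^\top P+PA\right)+2\,\bar\mu\,\mathrm{vec}(PB)$, so the $-\bar\psi$ block lies in the span of the columns of $\bar\phi$ and $\bar\mu$. A short trace/Kronecker computation, using $\int x^\top L K_i x\,d\tau=\int x^\top\mathrm{sym}(LK_i)\,x\,d\tau$, shows that $\bar\delta\,(I_n\otimes K_i^\top)\,\mathrm{vec}(L)=\bar\phi\,\mathrm{vec}_S\!\left(\mathrm{sym}(LK_i)\right)$ for every $L\in\mathbb{R}^{n\times m}$, so the extra term inside the middle block of $\Delta_i$ lies in the span of the columns of $\bar\phi$ as well. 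Consequently $\Delta_i=[\bar\phi,\bar\mu,\bar\nu]\,T_i$ for a \emph{square} matrix $T_i$ depending only on $A$, $B$ and $K_i$, and it suffices to show $T_i$ is nonsingular.

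Now take any $c=(c_1,c_2,c_3)$ partitioned conformally with $\bigl(\mathrm{vec}_S(P),\mathrm{vec}(\Lambda_{xu}),\mathrm{vec}_S(\Lambda_{uu})\bigr)$ and suppose $\Delta_i c=0$. Regrouping $\Delta_i c$ as a linear combination of the columns of $\bar\phi$, $\bar\mu$, $\bar\nu$ and using their linear independence, each aggregated coefficient vanishes. Let $P$ be the symmetric matrix with $\mathrm{vec}_S(P)=c_1$ and $L$ the matrix with $\mathrm{vec}(L)=c_2$. The $\bar\nu$-coefficient gives $c_3=0$; the $\bar\mu$-coefficient gives $L=PB$; and, substituting $L=PB$, the $\bar\phi$-coefficient gives $A^\top P+PA-PBK_i-K_i^\top B^\top P=0$, i.e. the homogeneous Lyapunov equation $\left(A-BK_i\right)^\top P+P\left(A-BK_i\right)=0$. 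Since $A-BK_i$ is Hurwitz at every iteration (a property preserved along the Kleinman-type recursion of Theorem~\ref{Theorem 1}), its only solution is $P=0$; then $c_1=0$, whence $c_2=\mathrm{vec}(PB)=0$ and $c_3=0$, so $c=0$. Therefore $\Delta_i$ has full column rank, equivalently $T_i$ is nonsingular.

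The step I expect to be the real obstacle is the second one: pushing the symmetric-vectorization and Kronecker conventions — together with the factor $2$ multiplying both $\bar\mu$ and $\bar\delta(I_n\otimes K_i^\top)$ in the middle block — through carefully enough that the reduced system in the third step collapses to exactly the \emph{closed-loop} Lyapunov operator for $A-BK_i$, not a perturbed variant. I would also flag explicitly that the Hurwitz property of $A-BK_i$ is indispensable here: if some nonzero symmetric $P$ solved $(A-BK_i)^\top P+P(A-BK_i)=0$, then $\bigl(\mathrm{vec}_S(P),\mathrm{vec}(PB),0\bigr)$ would belong to the null space of $\Delta_i$ regardless of how persistently exciting the behavior input is, so the data-richness condition \eqref{rank_cond1} on its own cannot yield the claim without the admissibility of $K_i$ guaranteed by the algorithm's construction.
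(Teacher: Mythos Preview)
Your argument is correct and follows essentially the same route as the paper's proof: both reduce $\Delta_i c=0$, via the rank condition on the stacked data, to the three identities that force $c_3=0$, $L=PB$, and the homogeneous closed-loop Lyapunov equation $(A-BK_i)^\top P+P(A-BK_i)=0$, and then invoke the Hurwitz property of $A-BK_i$ to conclude $P=0$. The only cosmetic differences are that the paper outsources the expansion of the first two blocks to \cite{JIANG20122699} rather than computing it directly, and does not package the argument through an explicit square factor $T_i$ as you do.
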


\begin{proof}
    To prove the lemma~\ref{Lemma 1}, it is shown that:
    \begin{equation}\label{trivial_solution}
        \Delta_iX=0
    \end{equation}
has the only solution $X=0$.
The proof begins with a contradiction. Assume $X=\left[Y^\top ,H^\top ,T^\top \right]^\top $ where $Y= \mathrm{vec}_S(p)$, $H= \mathrm{vec}(a_{ux})$, $T= \mathrm{vec}_S(a_{uu})$ and $a=\begin{bmatrix}
    a_{xx}&a_{xu}\\
    a_{xu}^\top &a_{uu}
\end{bmatrix}$, $p$ is a symmetric matrix, representing the solution at each iteration step for the advantage function and value function, respectively. Considering only the first two terms in both vectors \\ $\Delta_{i(1:2)}$, $X_{(1:2)}=\left[Y^\top ,H^\top \right]^\top $, the following equation is derived from the proof in \cite{JIANG20122699}: 
\begin{equation}
  \Delta_{i(1:2)}X_{1:2}=    I_{xx} \mathrm{vec}(O)+2I_{xu} \mathrm{vec}(N)
\end{equation}
where $O=A_i^\top Y+YA_i+K_i^\top \left(B^\top Y-H \right)+\left(YB-H^\top \right)K_i$ and $N= B^TY-H$.
The equation above aligns with (A.2) in \cite{JIANG20122699} when substituting $H=RZ$, where $Z$ corresponds to the solution for $K_{i+1}$ in each iteration.
This result leads to the following expanded representation:
\begin{equation}
    \Delta_iX= I_{xx} \mathrm{vec}(O)+2I_{xu} \mathrm{vec}(N)+I_{uu}T
\end{equation}
Applying \eqref{trivial_solution}, it can be concluded that:
\begin{equation}
\left[I_{xx}, 2I_{xu}, I_{uu} \right]\left[ \mathrm{vec}(O)^\top , \mathrm{vec}(N)^\top ,T\right]^\top =0    
\end{equation}
Under the rank condition stated in \eqref{rank_cond1}, the only valid solution to this equation is $O=0$ and $N=0$, $T=0$. Consequently, $a_{xu}= B^\top Y$ and $Y=0$ are shown by solving the Lyapunov equation $A^\top _iY+YA_i=0$.
In conclusion, $ \mathrm{vec}(a_{xu})=H=0$ and $Y=0$, $T=0$. This implies $X=0$, contradicting the assumption that $X \neq 0$. This completes the proof.
\end{proof}

\begin{theorem}\label{theorem 2}
    Under Assumption~\ref{Assumption 01}, starting from the initial admissible policy $K_0$, and the condition in Lemma~\ref{Lemma 1} is satisfied. Then, the matrices $K_i,i=1,2, \hdots,$ and $P_i,i=1,2, \hdots,$ generated by Alg~\ref{Alg 2} will converge to the optimal value $K^*,P^*$.
\end{theorem}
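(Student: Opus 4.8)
The plan is to reduce the convergence of Alg~\ref{Alg 2} to Kleinman's iteration in Theorem~\ref{Theorem 1} by showing, by induction on $i$, that the pair $(K_i,P_i)$ produced by the algorithm coincides exactly with the pair $(\hat K_i,\hat P_i)$ generated by the Lyapunov/policy-improvement recursion \eqref{value_evaluation}--\eqref{policy_improvement}. Once this identification is established, conclusions (1) and (3) of Theorem~\ref{Theorem 1} immediately give $K_i\to K^*$ and $P_i\to P^*$. The base case is the hypothesis of the theorem: $K_0$ is admissible, i.e. $A-BK_0$ is Hurwitz, which is precisely the starting requirement of Theorem~\ref{Theorem 1}.

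For the inductive step, assume $K_i$ is admissible and equals $\hat K_i$. I would first argue that the triple $\big(P_i^{\mathrm{Kle}},\,P_i^{\mathrm{Kle}}B,\,R\big)$, with $P_i^{\mathrm{Kle}}=\big(P_i^{\mathrm{Kle}}\big)^\top$ the unique solution of the Lyapunov equation \eqref{value_evaluation} for the policy $K_i$, is a solution of the modified advantage equation \eqref{modified_advantage} on every sampling interval. This is exactly what the derivation between \eqref{off-sys} and \eqref{modified_advantage} accomplishes: rewriting the dynamics in the off-policy form \eqref{off-sys}, integrating $\hat V(x)=x^\top P_i^{\mathrm{Kle}}x$ along the behavior trajectory to obtain \eqref{off-value}, and using the policy-consistency identity $K_i=R^{-1}\Lambda_{xu(i-1)}^\top$ together with \eqref{advantage-relation}. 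Stacking these identities over the $N$ intervals shows this triple solves the linear system $\Delta_i\Theta_i=\Xi_i$ of \eqref{updaterule1}. Now Lemma~\ref{Lemma 1} applies: the rank condition \eqref{rank_cond1}, assumed in the theorem, makes $\Delta_i$ full column rank, so \eqref{updaterule1} has a \emph{unique} solution. Hence the pseudo-inverse formula \eqref{pseudo-inv1} returns exactly $\Theta_i=\big[\mathrm{vec}_S^\top(P_i^{\mathrm{Kle}}),\,\mathrm{vec}^\top(P_i^{\mathrm{Kle}}B),\,\mathrm{vec}_S^\top(R)\big]^\top$, so in particular $\Lambda_{uu(i)}=R$ is invertible and the update \eqref{policy_update1} gives
\[
K_{i+1}=\Lambda_{uu(i)}^{-1}\Lambda_{xu(i)}^\top=R^{-1}B^\top P_i^{\mathrm{Kle}},
\]
which is precisely the policy-improvement step \eqref{policy_improvement}. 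Therefore $P_i=\hat P_i$ and $K_{i+1}=\hat K_{i+1}$, and by property (1) of Theorem~\ref{Theorem 1} the gain $K_{i+1}$ is again admissible, closing the induction. Consequently the whole sequence $\{(K_i,P_i)\}$ agrees with the Kleinman sequence, and property (3) of Theorem~\ref{Theorem 1} yields the claimed convergence to $(K^*,P^*)$.

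I expect the main obstacle to be the first part of the inductive step: rigorously justifying that the Kleinman triple actually satisfies \eqref{modified_advantage}. Two points need care. First, the off-policy reformulation \eqref{off-value} is only legitimate because $P_i^{\mathrm{Kle}}$ and $K_i$ are tied through the current Lyapunov equation, so one must verify that the algorithm indeed collects data under the gain $K_i$ for which this relation holds. Second, the substitution \eqref{advantage-relation} presupposes that the "previous-iteration" constant $\Lambda_{xu(i-1)}$ is consistent with the gain actually driving the system, i.e. $K_i=R^{-1}\Lambda_{xu(i-1)}^\top$; this is what makes the $\delta_{0,1}$-terms collapse correctly. One should also check that the recovered $P_i$ is symmetric — guaranteed here because $\Theta_i$ is parameterised through $\mathrm{vec}_S$ — so that the comparison with the symmetric Lyapunov solution is valid. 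Everything else (stacking interval identities, invoking Lemma~\ref{Lemma 1} for uniqueness, and quoting Theorem~\ref{Theorem 1}) is routine bookkeeping.
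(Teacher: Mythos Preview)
Your proposal is correct and follows essentially the same approach as the paper: both arguments establish that under the rank condition of Lemma~\ref{Lemma 1} the linear system \eqref{updaterule1} has a unique solution, that this solution coincides with the Kleinman iterate $(P_i,\,P_iB,\,R)$ from \eqref{value_evaluation}--\eqref{policy_improvement}, and then invoke Theorem~\ref{Theorem 1} for convergence. Your inductive presentation is in fact more carefully structured than the paper's own sketch, which simply asserts the identification without spelling out the induction or the verification that the Kleinman triple satisfies \eqref{modified_advantage}.
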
 
 
\begin{proof}
    When the condition in Lemma~\ref{Lemma 1} is satisfied, the matrix $\Delta_i$ is full-rank, then the matrix $P_i$, $\Lambda_{xu(i)}$, $\Lambda_{uu(i)}$ are uniquely determined. On the other hand, the solution $P_i$ also satisfies equation \eqref{value_evaluation} by its definition, then the updated policy follows this form $K_{i+1}=-R^{-1}B^\top P_i$, which indicates that Alg~\ref{Alg 2} has the same procedure as Step 1, 2 in Theorem~\ref{Theorem 1}. From that, Alg~\ref{Alg 2} will inherit the particularly appealing convergence features of the Kleinman algorithm, including the quadratic convergence induced by Newton’s strategy underlying the approach proposed by Kleinman.    
\end{proof}                 
 
\begin{remark}
   The proof of Lemma~\ref{Lemma 1} reveals that if the rank condition \eqref{rank_cond1} is satisfied, the matrix $\Lambda_{uu}$ is accurately estimated to match the true value of $R$ after the first iteration. Specifically, $P_i$, $\Lambda_{xu(i)}$, and $\Lambda_{uu(i)}$ have been demonstrated to be uniquely determined at each iteration step. Considering the first iteration step, we obtain from \eqref{updaterule1} the following equation:
\begin{equation}
\begin{aligned}
  &\Delta_{1(col=1:2)}X+\bar{\nu}\mathrm{vec}_{S}\left(\Lambda_{uu(0)} \right)= \\
  & \quad \quad \quad \bar{\phi} \mathrm{vec} _{S}(M)+ \bar{\delta} \mathrm{vec}\left(K_i^\top \Lambda_{xu(init)}^\top \right) +\bar{\nu}\mathrm{vec}_{S}\left(R \right) 
\end{aligned}
\end{equation}
where $\Delta_{1(col=1:2)}$ is the vector of the first two row of vector $\Delta$ at first iteration, $X=\left[ \mathrm{vec}_{S}\left(P_0 \right), \mathrm{vec}\left(\Lambda_{xu(0)}\right)\right]$ and $\Lambda_{xu(init)}$ is the initial value of $\Lambda_{xu}$.
Referring to \cite{JIANG20122699}, it follows that $\Delta_{1(col=1:2)}X= \bar{\phi} \mathrm{vec}^\top _{S}(M)+  \bar{\delta} \mathrm{vec}\left(K_i^\top \Lambda_{xu(init)}^\top \right) $, leaving the trivial solution $\Lambda_{uu(0)}=R$.
 Thus, the matrix $R$ is accurately determined after the first iteration, allowing it to be excluded from subsequent calculations. The remaining iterations can then proceed following the steps outlined in Alg~\ref{Alg 1} of \cite{JIANG20122699}. This adjustment significantly reduces computational demands and memory usage at each step of the iteration.
\end{remark}
  
\begin{remark}
    At the start of the algorithm, the iteration process has not yet begun, which means that $K$ does not adhere to the iterative rule $K_i= R^{-1}\Lambda_{xu(i-1)}^\top$ and instead is derived from a predefined initial value. This implies that our algorithm requires two initial parameters: $K_0, \Lambda_{xu(init)}$. To ensure that the algorithm remains consistent with its intended structure and does not deviate from its original nature, we propose the setting $\Lambda_{xu(init)}=0$. 
    In this configuration, the first step of the algorithm seamlessly transitions into the value evaluation step $A_0^\top P_0 + A_0 P_0 = -M$,  followed by the improvement step $K_{1}= \Lambda_{uu(0)}^{-1}\Lambda_{xu(0)}^\top=\Lambda_{uu(0)}^{-1}B^TP_0$. Since $\Lambda_{uu(0)}$ provides an accurate estimation of $R$ and $A_0$ is Hurwitz due to the admissible gain $K_0$, the solution $P_0$ is unique and $K_1$ is also an admissible control gain. 
    Thus, the first step can be interpreted as an identification phase for $R$, which does not alter the algorithm's properties, as it only updates the initial admissible gain $K_0$ to a new admissible gain $K_1$ .
\end{remark}

To complete the algorithm in model-free fashion we will introduce a method to obtain the stable controller gain that without using the information about the system model as mentioned in the previous section.
However, unlike the previous method that relied on solving BMI for the stable controller, in this section, we will propose a method that can find a stabilizing controller by only solving LMI as follows:
\begin{lemma}\label{Lemma 2}
    If Assumption~\ref{Assumption 01} is satisfied and the collected data $\bar{\phi}$, $\bar{\mu}$, $\bar{\psi}$ as defined in \eqref{updaterule1}, are used to calculate the variables $S, \hat{K}$ for the following feasible LMI: \begin{equation}\label{LMI}
     S \hat{\Pi} + \hat{\Pi} S -\hat{\Gamma}_{xx}  - \varpi \hat{B}\hat{B}^\top  < 0, \quad S  > 0, \varpi >0 
  \end{equation}
  s.t
  \begin{equation}\label{constraintLMI2_a}
   \begin{aligned}
   &\left[\mathrm{ \mathrm{vec}}_{S}^\top\left(\hat{\Gamma}_{xx}\right), \mathrm{ \mathrm{vec}}^\top\left(\hat{\Gamma}_{xu}\right)\right]^\top =\\
   & \quad \quad \quad =\left(\Delta^\top \Delta \right)^{-1}\Delta^\top \left(  \bar{\psi} \mathrm{vec}_S(S) \right)        
   \end{aligned}
   \end{equation}
    
  \begin{equation} \label{constraintLMI2_b}
   \left[\mathrm{ \mathrm{vec}}_{S}^\top\left(\hat{\Pi}\right), \mathrm{ \mathrm{vec}}^\top\left(\hat{B}\right)\right]^\top = \left(\Delta^\top \Delta \right)^{-1}\Delta^\top   \bar{\varrho} 
\end{equation}
where $\Delta= \left[\bar{\phi},2\bar{\mu}\right]$ is assumed to be full column rank, $\varrho_{0,1}= x(t_1)^\top x(t_1)-x(t_0)^\top x(t_0)$ and $\bar{\varrho}= \left[ \varrho_{0,1},\varrho_{1,2}, \hdots, \varrho_{N-1,N}  \right]^\top$. Then, the feasible solution $\hat{K}= \frac{\varpi}{2}\hat{B}^\top S^{-1}$ is a stabilizing controller.
\end{lemma}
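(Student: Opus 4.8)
The plan is to decode the two data-driven constraints \eqref{constraintLMI2_a}--\eqref{constraintLMI2_b} into exact model-based identities in the unknown matrices $A$, $B$ and the decision variable $S$, so that the matrix inequality \eqref{LMI} collapses — after one algebraic cancellation — to a classical model-based stabilization LMI, from which the stabilizing property of $\hat K=\tfrac{\varpi}{2}\hat B^\top S^{-1}$ follows by a Lyapunov/congruence argument. First I would integrate $\tfrac{d}{d\tau}(x^\top x)$ along trajectories of \eqref{sys1}: on each interval $[t_{k-1},t_k]$,
\begin{equation*}
\varrho_{k-1,k}=\int_{t_{k-1}}^{t_k}\!\big(x^\top(A+A^\top)x+2x^\top Bu\big)\,d\tau ,
\end{equation*}
so that, using $x^\top Wx=\mathrm{vec}_S^\top(W)(x\otimes_S x)$ for symmetric $W$, $x^\top Bu=\mathrm{vec}^\top(B)(u\otimes x)$, and the notation of \eqref{Matrices_Notation},
\begin{equation*}
\bar\varrho=\bar\phi\,\mathrm{vec}_S(A+A^\top)+2\bar\mu\,\mathrm{vec}(B)=\Delta\big[\mathrm{vec}_S^\top(A+A^\top),\ \mathrm{vec}^\top(B)\big]^\top .
\end{equation*}
Since $\Delta=[\bar\phi,\,2\bar\mu]$ has full column rank, $(\Delta^\top\Delta)^{-1}\Delta^\top$ is an exact left inverse, hence \eqref{constraintLMI2_b} yields $\hat\Pi=A+A^\top$ and $\hat B=B$ with no bias.

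Next I would repeat the computation with $\tfrac{d}{d\tau}(x^\top Sx)=x^\top(SA+A^\top S)x+2x^\top SBu$, noting $\psi_{k-1,k}^\top\mathrm{vec}_S(S)=x(t_k)^\top Sx(t_k)-x(t_{k-1})^\top Sx(t_{k-1})$, which gives $\bar\psi\,\mathrm{vec}_S(S)=\Delta\big[\mathrm{vec}_S^\top(SA+A^\top S),\ \mathrm{vec}^\top(SB)\big]^\top$; by the same full-rank argument, \eqref{constraintLMI2_a} returns $\hat\Gamma_{xx}=A^\top S+SA$ (and, incidentally, $\hat\Gamma_{xu}=SB$, which does not enter \eqref{LMI}). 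Substituting these identities into \eqref{LMI} and using the cancellation
\begin{equation*}
S\hat\Pi+\hat\Pi S-\hat\Gamma_{xx}=S(A+A^\top)+(A+A^\top)S-(SA+A^\top S)=AS+SA^\top ,
\end{equation*}
the LMI \eqref{LMI} becomes equivalent to $AS+SA^\top-\varpi BB^\top<0$ with $S>0$, $\varpi>0$. Congruence by $S^{-1}$ and setting $P:=S^{-1}>0$ gives $PA+A^\top P-\varpi PBB^\top P<0$; then for $\hat K=\tfrac{\varpi}{2}\hat B^\top S^{-1}=\tfrac{\varpi}{2}B^\top P$ and $A_{cl}:=A-B\hat K$ one has $PA_{cl}+A_{cl}^\top P=PA+A^\top P-\varpi PBB^\top P<0$, so by the Lyapunov stability theorem $A_{cl}$ is Hurwitz, i.e. $\hat K$ is a stabilizing gain (hence admissible in the sense of Theorem~\ref{Theorem 1}). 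For completeness I would note that feasibility of \eqref{LMI} is not an extra hypothesis: by Assumption~\ref{Assumption 01} the pair $(A,B)$ is stabilizable, which is exactly the condition under which $AS+SA^\top-BB^\top<0$ admits an $S>0$ (take $\varpi=1$).

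The integrations by parts and the Kronecker/vectorization identities are routine. I expect the only delicate points to be (i) keeping the symmetric versus standard vectorization conventions consistent throughout, so that the least-squares solutions in \eqref{constraintLMI2_a}--\eqref{constraintLMI2_b} are genuinely exact — this relies on $\Delta$ having full column rank and on the integral data being noise-free — and (ii) the sign cancellation $S\hat\Pi+\hat\Pi S-\hat\Gamma_{xx}=AS+SA^\top$, which is precisely what makes the model-free LMI coincide with the classical stabilization LMI and is therefore the crux of the lemma.
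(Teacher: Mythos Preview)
Your proposal is correct and follows essentially the same route as the paper: integrate $\tfrac{d}{d\tau}(x^\top x)$ and $\tfrac{d}{d\tau}(x^\top Sx)$ along trajectories of \eqref{sys1} to show that, under the full-column-rank assumption on $\Delta$, the data-driven constraints \eqref{constraintLMI2_a}--\eqref{constraintLMI2_b} recover $\hat\Pi=A+A^\top$, $\hat B=B$, $\hat\Gamma_{xx}=SA+A^\top S$ exactly, so that \eqref{LMI} reduces to the classical stabilizability LMI $AS+SA^\top-\varpi BB^\top<0$. Your explicit congruence-by-$S^{-1}$ Lyapunov computation for the closed loop $A-B\hat K$ is a welcome addition; the paper simply cites a textbook (Theorem~6.4 of \cite{duan2013lmis}) for both feasibility and the stabilizing property of $\hat K=\tfrac{\varpi}{2}B^\top S^{-1}$.
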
 
\begin{proof}
Note that, using the dynamic constraint \eqref{sys1}, the following equations always hold:
\begin{equation}\label{LMIproof1}
    \psi_{0,1}  \mathrm{vec}_{S}^\top\left(S\right) = \mathrm{vec}_{S}^\top\left(\hat{\Gamma}_{xx}\right)\phi_{0,1} + 2\mathrm{vec}^\top\left(\hat{\Gamma}_{xu}\right) \mu_{0,1}
\end{equation} \label{LMIproof2}        
 \begin{equation}
     \varrho = \mathrm{vec}_{S}^\top\left(\hat{\Pi}\right)\phi_{0,1} + 2\mathrm{vec}^\top\left(\hat{B}\right) \mu_{0,1}
 \end{equation}  
where $\hat{\Pi}= A^\top+A$, $\hat{\Gamma}_{xx}= SA+A^\top S$, $\hat{\Gamma}_{xu}= SB$.
Suppose that matrix $\Delta$ is the full column rank, the approximation terms $\hat{\Pi}$, $\hat{B}$, $\hat{\Gamma}_{xx}$, $\hat{\Gamma}_{xu}$  will be uniquely determined from equations \eqref{constraintLMI2_a}, \eqref{constraintLMI2_b}. Then  the LMI \eqref{LMI} is equivalent to :
\begin{equation}
    SA^\top \ + AS -\varpi BB^\top<0, \quad S>0, \varpi>0 
\end{equation}
The LMI above is always feasible if the pair $(A,B)$ is stabilizable (see Theorem 6.4, \cite{duan2013lmis}). Furthermore, it is shown that choosing $K= \frac{\varpi}{2}B^\top S^{-1}$ ensures the stability of the system \eqref{sys1}.
\end{proof}
Using the proposed method, the stabilizing controller can be found by solving an LMI instead of a BMI as in the method \cite{Possieri}. This approach leads to simplification by reducing the complexity of optimization, making it more computationally efficient. Unlike BMI, which often involves non-convexities and may lead to difficulties in finding a global solution, LMI offers a convex formulation that guarantees feasibility and optimality under the given conditions, thus improving both the tractability and reliability of the solution.

\begin{algorithm}
    \caption{Off-policy Q-learning for LQR problem}\label{Alg 2}
\begin{algorithmic}[1]
\State Collect trajectory data $\bar{\phi},\bar{\delta}, \bar{\mu}, \bar{\xi}, \bar{\nu}, \bar{\psi}$ \eqref{Matrices_Notation} from the system under excitation by the behavior policy $u$.
\State Solve $K_0$ from the LMI depicted in Lemma~\ref{Lemma 2} and initiate $\Lambda_{xu(init)}$
\State \textbf{loop}:
\State Let $\Delta_i= \left[ -\bar{\psi},2 \left(\bar{\mu}+\bar{\delta}\left(I_{n}\otimes K_i^\top \right) \right),\bar{\nu} \right], \Xi_i= \bar{\xi}+ \bar{\delta}\mathrm{vec} \left(K_i^\top \Lambda_{xu(i-1)}^\top \right) $
\State Get $P_i$, $\Lambda_{xu(i)}$ and $\Lambda_{uu(i)}$ from the following expression:
$ \Theta_i =   \left(\Delta_i^\top \Delta_i \right)^{-1}\Delta_i^\top \Xi_i$
\State Update the control policy $K_{i+1}=\Lambda_{uu(i)}^{-1}\Lambda_{xu(i)}^\top $
\State Let $i \leftarrow i+1$
\State \textbf{until} $\|P_i-P_{i-1}\| < \epsilon_t$, $\epsilon_t$ is a small real positive number. $\epsilon_t$ is a small real positive number
\State Return $P^*= P_i$, $u^*(x)=-K_{i+1}x$

\end{algorithmic}
\end{algorithm}
 
Before addressing the problem~\ref{Prob. 2}, we will introduce an alternative implementation of Alg~\ref{Alg 2}, which will be crucial for the subsequent method that addresses problem~\ref{Prob. 2}. 

By following the derivation steps of the explorized policy iteration from \cite{LEE20122850}, the resulting formulation is as follows:
\begin{equation}\label{explored_HJB}
    \begin{aligned} 
&x^\top(t)P_i \left[ Ax(t) + B \left(u_i(x) + w(t)\right) \right]  \\
&= -M_m\left(x(t) - u_i^\top(x)\right)Ru_i(x) + x^\top(t)P_i  B w(t)
\end{aligned}
\end{equation}
where $w(t)$ is the exploration noise and in this framework, the behavior control policy is considered as $u(t)= u_i(t)+w(t)$. The above equation is obtained by introducing the exploration noise term $x^\top(t)P_i  B w(t)$ to both sides of the Hamilton-Jacobi-Bellman (HJB) equation. Integrating both sides of \eqref{explored_HJB} yields:
\begin{equation}\label{explored_irl}
    \begin{aligned}
&x^\top(t_0)  P_i x(t_0) + \int_{t_0}^{t_1} x^\top(\tau)  L_i w(\tau) d\tau\\ 
&\quad\quad= \int_{t_0}^{t_1} r \left(x(\tau), u_i(\tau) \right)  d\tau + x^\top(t_1)  P_i x(t_1)
\end{aligned}
\end{equation}
where $L_i= B^\top P_{i-1}$
The following Lemma~\ref{Lemma 3} is introduced to develop the computation in next Alg~\ref{Alg 3}.
\begin{lemma}\label{Lemma 3}
    Define $\Lambda_{xx},\Lambda_{xu}, \Lambda_{uu}$ as in \eqref{advantage_func}, the following equation holds:
    \begin{equation}\label{equiv}
    \begin{aligned}
&x^\top(t_0)  P_i x(t_0) +2 \int_{t_0}^{t_1} x^\top(\tau)  \Lambda_{xu(i)} w(\tau) d\tau  \\ 
+& \int_{t_0}^{t_1} w^\top(\tau)  \Lambda_{uu(i)} w(\tau) d\tau+ 2\int_{t_0}^{t_1} u_i^\top(\tau)  \Lambda_{uu(i)} w(\tau) d\tau\\
= &\int_{t_0}^{t_1} r(x(\tau), u(\tau)) \, d\tau + x^\top(t_1)  P_i x(t_1)     
    \end{aligned}
    \end{equation}
    where $u(t)= w(t)+u_i(t)=w(t)-K_ix$
\end{lemma}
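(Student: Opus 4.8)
The plan is to reduce \eqref{equiv} to the explorized IRL identity \eqref{explored_irl} by an elementary expansion of the running cost as a quadratic in the exploration signal $w$. The key observation is that, by the definition in \eqref{advantage_func}, the advantage blocks at iteration $i$ are simply $\Lambda_{xu(i)}=P_iB$ and $\Lambda_{uu(i)}=R$, so every term on the left-hand side of \eqref{equiv} other than the two boundary terms $x^\top(t_0)P_ix(t_0)$ and $x^\top(t_1)P_ix(t_1)$ is explicitly quadratic in the already available quantities $P_i$, $R$, $x$, $u_i$ and $w$. First I would rewrite the cross term via $\Lambda_{xu(i)}=P_iB$ so that $2\int_{t_0}^{t_1}x^\top\Lambda_{xu(i)}w\,d\tau=2\int_{t_0}^{t_1}x^\top P_iBw\,d\tau$, which is precisely the exploration term carried in \eqref{explored_irl}.

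Substituting \eqref{explored_irl} for the boundary-plus-cross-term part, the left-hand side of \eqref{equiv} becomes
\[
\begin{aligned}
&\int_{t_0}^{t_1}\! r(x,u_i)\, d\tau + x^\top(t_1) P_i x(t_1)\\
&\qquad + \int_{t_0}^{t_1}\!\big(w^\top R w + 2 u_i^\top R w\big)\, d\tau,
\end{aligned}
\]
and it remains only to recognize, using $u=u_i+w$, symmetry of $R=\Lambda_{uu(i)}$, and $r(x,u_i)=x^\top Mx+u_i^\top Ru_i$, that
\[
\begin{aligned}
r(x,u_i) &+ 2 u_i^\top R w + w^\top R w \\
&= x^\top M x + (u_i+w)^\top R (u_i+w) = r(x,u),
\end{aligned}
\]
so the last two integrands collapse to $r(x,u)$ and the right-hand side of \eqref{equiv} is recovered. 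This is the whole argument; the only care needed is the bookkeeping of the $w$-quadratic terms and the sign convention inherited from $u_i=-K_ix$.

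The one point that deserves attention — and what I would treat as the main (though minor) obstacle — is matching the coefficient on the exploration term between \eqref{explored_irl} and \eqref{equiv} (the displayed $L_i$ in \eqref{explored_irl} must be read so that the exploration term is $2\int x^\top P_iBw\,d\tau=2\int x^\top\Lambda_{xu(i)}w\,d\tau$). Rather than relying on \eqref{explored_irl} verbatim, I would re-derive it directly along the behavior trajectory: differentiating $x^\top P_ix$ and using $\dot x=Ax+Bu$ with $u=u_i+w$ gives
\[
\frac{d}{dt}\big(x^\top P_i x\big) = 2 x^\top P_i (Ax + B u_i) + 2 x^\top P_i B w,
\]
and the value-evaluation (Lyapunov) equation \eqref{value_evaluation} for the pair $(P_i,u_i)$ turns $2x^\top P_i(Ax+Bu_i)$ into $-r(x,u_i)$; integrating over $[t_0,t_1]$ then yields \eqref{explored_irl} with the exploration term in exactly the form needed here. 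Once this identification is in place, the remainder is the routine algebra sketched above.
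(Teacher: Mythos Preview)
Your argument is correct. You directly invoke the identifications $\Lambda_{xu(i)}=P_iB$ and $\Lambda_{uu(i)}=R$ from \eqref{advantage_func}, re-derive the explorized IRL relation \eqref{explored_irl} from the Lyapunov equation \eqref{value_evaluation} (thereby fixing the factor/index ambiguity in the displayed $L_i$), and then simply expand $r(x,u_i)+2u_i^\top Rw+w^\top Rw=r(x,u)$ to finish. This is clean and complete.

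The paper proceeds differently: rather than plugging in $\Lambda_{xu(i)}=P_iB$, $\Lambda_{uu(i)}=R$ at the outset, it subtracts \eqref{explored_irl} from the advantage-function integral identity \eqref{advantage-equation}, decomposes the right-hand side by writing $u=u_i+w$, and then invokes the Lyapunov equation only to show that the residual term $H_i$ (the integrated closed-loop advantage at $u_i$) vanishes. This yields an equality between the exploration-term block in \eqref{explored_irl} and the advantage-function block $2\!\int x^\top\Lambda_{xu(i)}w+\int w^\top\Lambda_{uu(i)}w+2\!\int u_i^\top\Lambda_{uu(i)}w$, which is then substituted back into a shifted version of \eqref{explored_irl}. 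Your route is shorter and more transparent; the paper's route stays entirely inside the advantage-function formalism and never explicitly collapses $\Lambda_{xu(i)},\Lambda_{uu(i)}$ to $P_iB,R$, which is perhaps more in keeping with the data-driven viewpoint the section is building toward. Either way, the substantive ingredient is the same---the Lyapunov identity for $(P_i,K_i)$---so the two arguments are equivalent in content.
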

\begin{proof}
Let \eqref{advantage-equation} be derived from the system dynamics with the behavior policy $u$. Subtracting \eqref{explored_irl} from this equation yields:
\begin{equation}
\begin{aligned}
     &\int_{t_0}^{t_1} x^\top(\tau)  L_i w(\tau) d\tau + \int_{t_0}^{t_1} w^\top(\tau)  R w(\tau) d\tau\\
     &+ 2\int_{t_0}^{t_1} u_i^\top(\tau)  R w(\tau) d\tau\\
     =&\int_{t_0}^{t_1} x^\top(\tau)  \Lambda_{xx(i)} x(\tau) d\tau
     + 2\int_{t_0}^{t_1} x^\top(\tau)  \Lambda_{xu(i)} u(\tau) d\tau\\
     +&\int_{t_0}^{t_1} u^\top(\tau)  \Lambda_{uu(i)} u(\tau) d\tau
\end{aligned}
\end{equation}
    Note that the right hand side of the equation can be rewrite as follows:
    \begin{equation}
        \begin{aligned}
            &\int_{t_0}^{t_1} x^\top(\tau)  \Lambda_{xx(i)} x(\tau) d\tau
     + 2\int_{t_0}^{t_1} x^\top(\tau)  \Lambda_{xu(i)} u(\tau) d\tau\\
     &\quad +\int_{t_0}^{t_1} u^\top(\tau)  \Lambda_{uu(i)} u(\tau) d\tau\\
     & = H_i+2\int_{t_0}^{t_1} x^\top(\tau)  \Lambda_{xu(i)} w(\tau) d\tau + \\
     &\quad +\int_{t_0}^{t_1} w^\top(\tau)  \Lambda_{uu(i)} w(\tau) d\tau + \\
     & \quad + 2\int_{t_0}^{t_1} u_i^\top(\tau)(\tau)  \Lambda_{uu(i)} w(\tau) d\tau
        \end{aligned}
    \end{equation}
where 
\begin{equation}
\begin{aligned}
    H_i=& \mathrm{vec}^\top \left(\Lambda_{xx(i)} \right)\phi_{0,1}- 2\mathrm{vec}^\top \left(\Lambda_{xu(i)}K_i\right)\phi_{0,1}\\
    &+ \mathrm{vec}^\top \left(K_i\top\Lambda_{uu(i)}K_i\right)\phi_{0,1}
    \nonumber    
\end{aligned}
\end{equation} 
From the definition of the Advantage function \eqref{advantage_func}, it is evident that $H_i$ corresponds to the Lyapunov equation \eqref{value_evaluation} at each iteration step, which is equal to zero. Therefore, we can deduce:
\begin{equation}
   \begin{aligned}
        &\int_{t_0}^{t_1} x^\top(\tau)  L_i w(\tau) d\tau + \int_{t_0}^{t_1} w^\top(\tau)  R w(\tau) d\tau\\
        &\quad+2\int_{t_0}^{t_1} u_i^\top(\tau)  R w(\tau) d\tau\\
        &=2\int_{t_0}^{t_1} x^\top(\tau)  \Lambda_{xu(i)} w(\tau) d\tau + \int_{t_0}^{t_1} w^\top(\tau)  \Lambda_{uu(i)} w(\tau) d\tau\\
        &\quad+2\int_{t_0}^{t_1} u_i^\top(\tau)  \Lambda_{uu(i)} w(\tau) d\tau
   \end{aligned} \label{lem3proof1}
\end{equation}
Adding $\int_{t_0}^{t_1} w^\top(\tau)  R w(\tau) d\tau+2\int_{t_0}^{t_1} u_i^\top(\tau)  R w(\tau) d\tau$ to both sides of equation \eqref{explored_irl} gives:
\begin{equation}
\begin{aligned}
    &x(t)^\top  P_i x(t) + \int_{t_0}^{t_1} x^\top(\tau)  L_i w(\tau) d\tau  \\ 
    &\quad+ \int_{t_0}^{t_1} w^\top(\tau)  R w(\tau) d\tau+2\int_{t_0}^{t_1} u_i^\top(\tau)  R w(\tau) d\tau\\
&= \int_{t_0}^{t_1} r(x(\tau), u(\tau)(\tau)) \, d\tau + x_{t+T}^\top  P_i x_{t+T}     
\end{aligned}\label{lem3proof2}
\end{equation}
Substituting \eqref{lem3proof1} into \eqref{lem3proof2} leads to the complement of the proof. 
\end{proof}
In view of the Lemma~\ref{Lemma 3}, we can derive the update rule as follows:
\begin{equation}
\label{adupdate}
  \Delta_i \Theta_i=\bar{\xi}
\end{equation}
where \begin{equation}
\begin{aligned}
&\eta_{0,1}= \int_{t_0}^{t_1} x(\tau)\otimes w(\tau) d\tau, \\
&I_{eu(0,1)}= \int_{t_0}^{t_1} w(\tau) \otimes_S w(\tau)+2u_i(\tau)\otimes_S w(\tau) d\tau,\\
&\bar{\eta}=\left[ \eta_{0,1},\eta_{1,2}, \hdots, \eta_{N-1,N}  \right]^\top,\\
&\bar{I}_{eu}=\left[ I_{eu(0,1)},I_{eu(1,2)}, \hdots, I_{eu(N-1,N)}  \right]^\top,\\
&\Delta_i = \left[-\bar{\psi},2\bar{\eta},\bar{I}_{eu} \right],\\
&\Theta_i= \left[\mathrm{vec}_{S}^\top\left(P_i\right),\mathrm{vec}^\top\left(\Lambda_{xu(i)}\right), \mathrm{vec}_{S}^\top\left(\Lambda_{uu(i)}\right) \right]^\top
\end{aligned}
\end{equation}
To sum up the method, we propose Alg~\ref{Alg 3} below:

\begin{algorithm}
    \caption{Time-Iterative Q-Learning}\label{Alg 3}
\begin{algorithmic}[1]
\State Collect trajectory data $\bar{\phi}, \bar{\mu}, \bar{\xi}, \bar{\nu}, \bar{\psi}$ \eqref{Matrices_Notation} from the system under excitation by the behavior policy $u$.
\State Solve $K_0$ from the LMI depicted in Lemma~\ref{Lemma 2} 
\State \textbf{loop}:
\State Collect trajectory data $\bar{\eta}, \bar{I}_{eu}$ \eqref{adupdate}, $ \bar{\xi}, \bar{\psi}$ \eqref{Matrices_Notation} from the system under excitation by the behavior policy $u$.
\State Let $\Delta_i = \left[-\bar{\psi},2\bar{\eta},\bar{I}_{eu} \right] $
\State Get $P_i$, $\Lambda_{xu(i)}$ and $\Lambda_{uu(i)}$ from the expression following:
$ \Theta_i =   \left(\Delta_i^\top \Delta_i \right)^{-1}\Delta_i^\top \bar{\xi}$
\State Update the control policy $K_{i+1}=\Lambda_{uu(i)}^{-1}\Lambda_{xu(i)}^\top $
\State Let $i \leftarrow i+1$, $t \leftarrow t +\Delta t$
\State \textbf{until} $\|P_i-P_{i-1}\| > \epsilon_t$.
\State Return $P^*= P_i$, $u^*(x)=-K_{i+1}x$

\end{algorithmic}
\end{algorithm}
Since Alg~\ref{Alg 3} follows an online update scheme, where the policy is updated incrementally as new data are collected in real time, the following lemma~\ref{Lemma 4} investigates the stability of the system during the training phase under the control policy induced by Alg~\ref{Alg 3}.
\begin{lemma}\label{Lemma 4} \cite{LEE20122850} If the initial policy $u_0$ is stabilizing, $M > 0$ and assume that matrix $\Delta_i$ is full-rank at each iteration, let $M_i=M+K_i^{\top} RK_i$, $C_i= \|RK_i\|/\lambda_{min}(M_i)$ and $\|w(t)\| \leq w_M$ then the closed-loop system $x = A_i x + Bw$ is uniformly ultimately bounded (UUB) with each ultimate bound $\|x\| \leq 2w_M C_i$ for $i \neq 1, i \in \mathbb{N}$. Moreover, as $i \rightarrow \infty$, $V_i$ and $u_i$ converge to the optimal solution $V^*$ and $u^*$, respectively.
\end{lemma}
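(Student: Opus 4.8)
The statement bundles two essentially independent claims: a uniform ultimate boundedness (UUB) property of the explored closed loop at each iteration, and convergence of the value/policy sequence to the optimum. I would prove them separately, using in both cases the quadratic storage function $V_i(x)=x^\top P_i x$, where $P_i=P_i^\top>0$ is the solution of the Lyapunov equation \eqref{value_evaluation} for the pair $(A_i,M_i)$ with $A_i=A-BK_i$ and $M_i=M+K_i^\top RK_i$; positivity of $P_i$ follows from $M>0$ together with $A_i$ being Hurwitz (admissibility of $K_i$, Theorem~\ref{Theorem 1}), and the uniqueness/identifiability of $P_i$, $\Lambda_{xu(i)}=P_iB$, $\Lambda_{uu(i)}=R$ from the full-rank hypothesis on $\Delta_i$ together with Lemma~\ref{Lemma 3}.

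For the UUB part, the key point is to use at iteration $i$ \emph{not} $V_i$ but the storage function $V_{i-1}(x)=x^\top P_{i-1}x$ of the \emph{previous} policy, which is the one actually available before $P_i$ has been computed. Differentiating along the explored loop $\dot x=A_ix+Bw$ and substituting the policy-improvement identity $B^\top P_{i-1}=RK_i$ (from \eqref{policy_improvement}) gives, after a completion-of-squares step,
\[
A_i^\top P_{i-1}+P_{i-1}A_i=-M_i-(K_{i-1}-K_i)^\top R(K_{i-1}-K_i)\ \le\ -M_i ,
\]
so that
\[
\dot V_{i-1}\ \le\ -x^\top M_i x+2x^\top P_{i-1}Bw\ \le\ -\lambda_{\min}(M_i)\,\|x\|\big(\|x\|-2w_MC_i\big),
\]
using $\|P_{i-1}B\,w\|=\|K_i^\top Rw\|\le\|RK_i\|\,w_M$ and $C_i=\|RK_i\|/\lambda_{\min}(M_i)$. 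Hence $\dot V_{i-1}<0$ whenever $\|x\|>2w_MC_i$, which by the standard Lyapunov UUB argument (see, e.g., the development in \cite{LEE20122850}) yields the ultimate bound $\|x\|\le 2w_MC_i$. The index $i=1$ must be excluded precisely because $P_0$ — the storage function of the merely stabilizing initial policy $u_0$ — is never computed in the off-policy scheme, so for $i=1$ boundedness follows only from the BIBO stability of $\dot x=A_0x+Bw$, without the clean constant $C_1$.

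For the convergence part, once the full-rank hypothesis forces the data-driven update \eqref{adupdate} to return exactly $P_i$ solving \eqref{value_evaluation} and $K_{i+1}=R^{-1}B^\top P_i$, the sequences $\{P_i\}$ and $\{K_i\}$ coincide with the Kleinman iterates of Theorem~\ref{Theorem 1}; part~3 of that theorem then gives $P_i\to P^*$ and $K_i\to K^*$, whence $V_i=x^\top P_ix\to V^*$ and $u_i=-K_ix\to u^*$ pointwise. As $i\to\infty$ the constants $C_i$ converge to $C^*=\|RK^*\|/\lambda_{\min}(M+K^{*\top}RK^*)<\infty$, so the ultimate bounds stay uniformly bounded; switching the exploration input $w$ off in the limit collapses the bound and the trajectory itself converges.

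The main obstacle is the UUB step: choosing the correct (previous-policy) Lyapunov function and carrying out the Kleinman completion-of-squares identity so that the indefinite cross term $x^\top(A_i^\top P_{i-1}+P_{i-1}A_i)x$ is dominated by $-x^\top M_ix$. This identity is exactly what pins the ultimate-bound constant to $C_i$ rather than $C_{i+1}$ and what forces the $i\neq 1$ exclusion; everything else reduces to invoking Theorem~\ref{Theorem 1} and checking that, thanks to the off-policy construction in Lemma~\ref{Lemma 3}, \eqref{adupdate} is an unbiased realization of the Kleinman recursion. Since the lemma is quoted from \cite{LEE20122850}, this last verification is the only genuinely new work.
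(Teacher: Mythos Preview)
The paper does not actually prove this lemma: its entire proof is ``See the proof in \cite{LEE20122850}.'' Your proposal is therefore not competing with any argument in the paper but rather supplying the argument the paper outsources, and it does so correctly: the completion-of-squares identity $A_i^\top P_{i-1}+P_{i-1}A_i=-M_i-(K_{i-1}-K_i)^\top R(K_{i-1}-K_i)$ together with $P_{i-1}B=K_i^\top R$ is exactly the mechanism behind the explorized-PI UUB bound in \cite{LEE20122850}, and the convergence half is, as you say, just the observation that full-rank data make \eqref{adupdate} an exact realization of the Kleinman recursion, so Theorem~\ref{Theorem 1} applies.

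One minor caution: from $\dot V_{i-1}<0$ outside the ball $\{\|x\|\le 2w_MC_i\}$ the standard Lyapunov argument only traps the trajectory in the smallest \emph{sublevel set} of $V_{i-1}$ containing that ball, which is an ellipsoid of radius $2w_MC_i\sqrt{\lambda_{\max}(P_{i-1})/\lambda_{\min}(P_{i-1})}$ rather than $2w_MC_i$ itself; the sharper bound stated in the lemma is inherited verbatim from \cite{LEE20122850} and is slightly loose as written. Your explanation of the excluded index is the right idea (the argument needs $K_i$ to come from a policy-improvement step so that $B^\top P_{i-1}=RK_i$ holds), though the precise index that must be dropped depends on whether one starts counting at $0$ or $1$; the paper's ``$i\neq 1$'' is simply copied from \cite{LEE20122850}'s convention.
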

\begin{proof}
See the proof in \cite{LEE20122850}.
\end{proof} 
 
\subsection{Off-policy Q-learning for optimal tracking problem with constrained input}
In this section, the main focus will be on solving the problem~\ref{Prob. 2} along with the optimal tracking consideration, utilizing and expanding on the theoretical results and analyses that were systematically presented in the earlier sections of this paper. To present the problem in a more specific analysis, this section will focus on designing an optimal feedback signal to drive the system \eqref{sys1} to follow a given reference signal, as defined by the following formula.
\begin{equation}
    \dot{x}_r(t)= A_rx_r(t)
\end{equation}
and the control input is restricted by $|u_i(t)| \leq \lambda$.
Define $\tilde{x}(t) = x(t)-x_r(t)$ as the tracking error, then the augmented system can be rewritten as:
\begin{equation}\label{aug_sys}
    \dot{X}(t)=FX(t)+Gu(t)
\end{equation}
where the augmented state variables vector \\ $X(t)=\left[\tilde{x}(t),x_r(t)\right]^\top $ and $F=\begin{bmatrix}
    A&A-A_r\\
    0&A_r
\end{bmatrix}$, $G=[B, 0]^\top $. 
For the cost function \eqref{cost_function2} described in Problem~\ref{Prob. 2}, but within the framework of optimal tracking, this cost function \eqref{cost_function2} can be modified as follows:
\begin{equation} \label{cost_cons}
    J=\int_0^T  e^{-\gamma (\tau-t)} r(X(t),u(t))d\tau
\end{equation}
where $r(X,u)= \tilde{Q}(X(t))+U(u(t))$, $\tilde{Q}(X)=  \begin{bmatrix}
\tilde{x}\\
x_r^\top 
\end{bmatrix}^\top 
\begin{bmatrix}
    M&0\\
    0&0
\end{bmatrix}$ $\begin{bmatrix}
\tilde{x}\\
x_r^\top 
\end{bmatrix}$ and $U(u) = 2 \int_0^u (\lambda \beta^{-1} (v/\lambda))^\top  R dv$, with $\gamma$ being the discount factor. Under Assumption~\ref{Assumption 01}, the discounted component is added to the cost function \eqref{cost_cons} to ensure that it remains bounded for certain values of gamma, even when the reference signal is unstable \cite{Modares2014linear}. Without this addition, the optimization problem would lose its significance as the cost function would become unbounded.

Before deriving the model-free solution for this optimal tracking control, it is essential to revisit the concept of optimality and the properties associated with it. Specifically, the optimal Hamiltonian for the optimal tracking problem \cite{MODARES20141780} is formulated as follows:
\begin{equation} \label{opt_cons}
\begin{aligned}
       \mathcal{H}(x,u^*,\nabla V^*)&=r(X,u^*) - \gamma V^* + (\nabla V^*)^\top \left( FX + Gu^* \right)\\
       &= X^{\top}  Q X - \gamma V^{*}(X) + (\nabla V^*)^\top FX\\
       &+ \lambda^2 \bar{R} \ln\left(1 - \tanh^2\left(\frac{1}{2}(\lambda R)^{-1} G^\top  \nabla V^* \right)\right) \\
       &=0
\end{aligned}
\end{equation}
where the optimal feedback control is 
\begin{equation} \label{opt_fb_cons}
    u^*=-\lambda \tanh\left(\frac{1}{2}(\lambda R)^{-1} G^\top  \nabla V^{*} \right)
\end{equation}
The optimal feedback \eqref{opt_fb_cons} has been proven to be the solution to equation \eqref{opt_cons}, and it ensures the asymptotic stability of the system \eqref{aug_sys} in the limit as the discount factor approaches zero \cite{MODARES20141780}. From \eqref{opt_cons} and \eqref{opt_fb_cons}, the extended version of Kleinman's algorithm can be derived as follows:
\begin{equation} \label{value_eva_cons}
    \begin{aligned}
        \mathcal{H}\left(x,u_i,\nabla V_i\right)=r(X,u_i) & - \gamma V_i \\
        & + (\nabla V_i)^\top \left( FX + Gu_i \right)=0
    \end{aligned}
\end{equation}
After solving the value evaluation step above, the policy update step will be performed as follows:
\begin{equation}\label{policy_up_cons}
     u_{i+1}=-\lambda \tanh\left(\frac{1}{2}(\lambda R)^{-1} G^\top  \nabla V_i \right)
\end{equation}

As shown in \cite{MODARES20141780}, the extended version of Kleinman's algorithm defined in \eqref{value_eva_cons}–\eqref{policy_up_cons} is guaranteed to converge, as formalized in the following lemma. 
\begin{lemma}\label{Lemma 5}\cite{MODARES20141780} 
    Consider the optimal tracking problem with constrained input for the system \eqref{sys1}, subject to the cost function defined in equation \eqref{cost_cons}. Assume that \(V^*\) is smooth and positive definite solutions to the Hamilton-Jacobi tracking equations, as expressed in \eqref{opt_cons} with optimal control inputs $u^*(X)=$ $-\lambda$ $\tanh\left(\frac{1}{2}(\lambda R)^{-1} G^\top  \nabla V^* \right)$. The iterative procedure \eqref{value_eva_cons} and \eqref{policy_up_cons} will generate series of $V_i,i=0,1,...$ and $u_i,i=0,1,...$ that will converge to the optimal solution $V^*$, $u^*$ and $u^*$ also make the augmented system \eqref{aug_sys} asymptotically stable with the Lyapunov function $V^*$ at the limit of $\gamma \rightarrow 0$.
\end{lemma}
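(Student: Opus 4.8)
I will follow the standard policy-iteration convergence argument for generalized (non-quadratic, discounted) Hamilton--Jacobi equations, as developed in \cite{MODARES20141780}, starting — as required by the policy-iteration framework and assumed there — from an admissible (stabilizing, finite-cost) initial policy $u_0$. The proof decomposes into three parts: admissibility propagation through the iteration, monotonicity of $\{V_i\}$ together with a lower bound at $V^*$, and passage to the limit; the asymptotic-stability claim then follows from the $\gamma\to0$ limit of \eqref{opt_cons}. The only computational input is the gradient $\nabla U(u)=2\lambda R\,\beta^{-1}(u/\lambda)$ of the non-quadratic utility, the observation that by \eqref{policy_up_cons} one has $G^\top\nabla V_i=-\nabla U(u_{i+1})$, and the fact that $U$ is convex because its Hessian $2R\,\mathrm{diag}((\beta^{-1})'(u_j/\lambda))$ is positive definite ($R>0$, $\beta^{-1}=\tanh^{-1}$ strictly increasing).

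For admissibility propagation, assume $u_i$ is admissible so that \eqref{value_eva_cons} admits a smooth positive definite $V_i$. Differentiating $V_i$ along $\dot X=FX+Gu_{i+1}$ and eliminating $(\nabla V_i)^\top(FX+Gu_i)$ via \eqref{value_eva_cons} gives $\dot V_i=\gamma V_i-\tilde Q(X)-U(u_i)+(\nabla V_i)^\top G(u_{i+1}-u_i)$. The first-order convexity inequality $U(u_i)-U(u_{i+1})\ge\nabla U(u_{i+1})^\top(u_i-u_{i+1})$ combined with $\nabla U(u_{i+1})=-G^\top\nabla V_i$ shows $(\nabla V_i)^\top G(u_{i+1}-u_i)-U(u_i)\le -U(u_{i+1})$, hence $\dot V_i\le\gamma V_i-\tilde Q(X)-U(u_{i+1})$. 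Under Assumption~\ref{Assumption 01} and the bound on $\gamma$ that keeps \eqref{cost_cons} finite, this makes $V_i$ a Lyapunov function for the closed loop under $u_{i+1}$, so $u_{i+1}$ is admissible; induction completes this step.

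For monotonicity, subtract \eqref{value_eva_cons} written at steps $i$ and $i+1$ and evaluate $\tfrac{d}{dt}(V_{i+1}-V_i)$ along $\dot X=FX+Gu_{i+1}$; the same convexity inequality yields $\tfrac{d}{dt}(V_{i+1}-V_i)\ge\gamma(V_{i+1}-V_i)$, so $e^{-\gamma t}(V_{i+1}-V_i)(X(t))$ is nondecreasing in $t$. Since admissibility of $u_{i+1}$ forces $e^{-\gamma t}V_j(X(t))\to0$ for $j=i,i+1$, this nondecreasing quantity tends to $0$ and is therefore $\le0$ for all $t$; taking $t=0$ with arbitrary $X(0)$ gives $V_{i+1}\le V_i$. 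Repeating the comparison with $V^*$ in place of $V_{i+1}$ — now using that $u^*$ in \eqref{opt_fb_cons} is the pointwise minimizer and $V^*$ solves \eqref{opt_cons} — gives $V^*\le V_i$ for every $i$. Thus $\{V_i\}$ decreases pointwise to some $V_\infty\ge V^*$; passing to the limit in \eqref{value_eva_cons}--\eqref{policy_up_cons} (continuity of $\tanh$ and $\tanh^{-1}$) shows $V_\infty$ solves \eqref{opt_cons}, and uniqueness of its smooth positive definite solution forces $V_\infty=V^*$, whence \eqref{policy_up_cons} gives $u_i\to u^*$. Finally, setting $\gamma=0$ in \eqref{opt_cons} yields $\dot V^*=(\nabla V^*)^\top(FX+Gu^*)=-\tilde Q(X)-U(u^*)\le0$ along \eqref{aug_sys} under $u^*$, which is negative definite by the observability part of Assumption~\ref{Assumption 01} (and $U(u^*)>0$ off the origin), so $V^*$ is a Lyapunov function and \eqref{aug_sys} is asymptotically stable as $\gamma\to0$.

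The main obstacle is not the algebra — the convexity inequality makes both the admissibility and the monotonicity steps routine — but the analytic well-posedness that lets the induction run: at each step one must know that the discounted generalized-HJB equation \eqref{value_eva_cons} actually possesses a smooth positive definite solution $V_i$, and that the discounted closed loop is well-behaved enough for the terminal relation $e^{-\gamma t}(V_{i+1}-V_i)(X(t))\to0$ to hold. This is exactly where the restriction on the discount factor $\gamma$ (already needed to bound \eqref{cost_cons} for a possibly unstable reference $x_r$) and the saturating structure of $U$ must be invoked jointly, which is why the statement is phrased with $V^*$ assumed smooth and the full argument deferred to \cite{MODARES20141780}.
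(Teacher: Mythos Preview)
The paper does not supply its own proof of this lemma; it simply attributes the result to \cite{MODARES20141780} and moves on. Your proposal is therefore not competing against an in-paper argument but reconstructing the cited one, and it does so faithfully: the three-step scheme (admissibility propagation via convexity of $U$, monotone decrease $V_{i+1}\le V_i$ with lower bound $V^*$, and passage to the limit plus the $\gamma\to0$ Lyapunov conclusion) is exactly the standard policy-iteration convergence argument for non-quadratic discounted HJB equations. The algebra you give — in particular the identities $G^\top\nabla V_i=-\nabla U(u_{i+1})$ and the use of first-order convexity of $U$ to control the cross term — is correct, and your closing paragraph correctly flags the genuine analytic assumptions (existence of smooth positive definite $V_i$ at each step and the discounted transversality $e^{-\gamma t}V_j(X(t))\to0$) that the cited reference handles and that the present paper does not attempt to reprove.
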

Note that taking the time derivative of the value function $V_i$, which is the solution of equation \eqref{opt_cons} at each iteration, along the trajectory of the augmented system \eqref{aug_sys} generated by the behavior policy $u(t)$, yields:
\begin{equation}\label{general_devV}
    \dot{V_i}= \nabla V^\top _i FX+ \nabla V^\top _i G(u-u_i) + \nabla V_iGu_i
\end{equation}
where $u_i= -\lambda \tanh\left(\frac{1}{2}(\lambda R)^{-1} G^\top  \nabla V_{i-1} \right)$.
From \eqref{opt_cons} and \eqref{general_devV}, we propose the IRL form of the off-policy Bellman equation for the optimal tracking problem as follows:
\begin{equation}\label{off_policy_cons}
\begin{aligned}
&e^{-\gamma T} V_{i+1}(X(t+T)) - V_{i+1}(X(t))\\
&=\int_{t}^{t+T} e^{-\gamma (\tau-t)} \left( -M_m(x) - U(u_i)\right) d\tau \\
&+\int_t^{t+T}e^{-\gamma (\tau-t)} \left(\nabla V^\top _i G (u - u_i) \right) d\tau
\end{aligned}
\end{equation}
 From this, it can be observed that equation \eqref{off_policy_cons} serves as the value evaluation step, while the policy improvement step will have a form similar to equation \eqref{policy_up_cons}.
Since this is a nonlinear optimization problem, the solution for the value function $V(x)$ may be difficult to obtain or may not have an exact mathematical form. Therefore, according to \cite{MODARES20141780}, we can use a neural network to approximate the optimal value function as follows:
\begin{equation}\label{eqn56}
    V^*(x)= (W^*)^\top \Phi(x)+\varepsilon(x)
\end{equation}
The error $\varepsilon(x)$ is the approximation error that will reach zero if the number of independent elements of the basis function $\Phi(x)$ reaches $\infty$. When the number of neurons is sufficient, the current estimate of the optimal value function can be approximated with arbitrary precision and can be represented as $\hat{V}=\hat{W}^\top \Phi(x)$. Under this representation, the value function $V_i(x)$ at each iteration of the algorithm described in Lemma~\ref{Lemma 5} can be expressed as follows:
\begin{equation} \label{network}
    V_i(x) = W^\top _i\Phi(x)
\end{equation}
Then the off-policy Bellman equation \eqref{off_policy_cons} for the off-policy framework can be transformed into the following.
\begin{equation}\label{integral_offpolicy_cons}
\begin{aligned}
e^{-\gamma T} \Phi^\top &(X(t+T))W_{i+1} - \Phi^\top (X(t))W_{i+1}\\
&=\int_{t}^{t+T} e^{-\gamma (\tau-t)} \left( -M_m(x) - U(u_i)\right) d\tau \\
&+\int_t^{t+T}e^{-\gamma (\tau-t)} \left(\nabla V^\top _i G (u - u_i) \right) d\tau
\end{aligned}
\end{equation}
According to \eqref{policy_up_cons} and \eqref{eqn56}, the formulation of the policy update can be expressed as:
\begin{equation}
\begin{aligned}
    u_{i+1}&=-\lambda \tanh\left(\frac{1}{2}R^{-1}G^\top \nabla\Phi^\top  W_i \right)\\
        &=-\lambda \tanh\left(\frac{1}{\lambda}K_{i+1} \mathrm{vec}\left(\nabla\Phi^\top \right)\right)
\end{aligned}
\end{equation}
where $K_{i+1}=\frac{1}{2}R^{-1}(W_i^\top \otimes G^\top )$.
The Advantage function can be formulated by the fact that $ U(u)= 2 \lambda \left(\beta^{-1} \left(\frac{u}{\lambda}\right)\right)^\top  R u + \lambda^2 \bar{R} \ln\left( \underline{\bold{1}} - \left(\frac{u}{\lambda}\right)^2\right)$, where $\bar{R}$ is the row vector with each element being the diagonal element of matrix $R$ and $\underline{\bold{1}}$ is the column matrix with the ones:
\begin{equation}
\begin{aligned}
    A(x,u)=& Q(x,u)-V(x)\\
    =&\Lambda_{xx}(x)+  \mathrm{vec}^\top \left(\Lambda_{xu}\right)\left(u\otimes  \mathrm{vec}\left(\nabla\Phi^\top \right)\right)\\
    &+ \mathrm{vec}^\top (\Lambda_{uu})\bigg(u \otimes 2 \lambda \beta^{-1}\left( \frac{u}{\lambda}\right)\\
    & \quad \quad \quad+ \ln \left( \underline{\bold{1}} - \left(\frac{u}{\lambda}\right)^2\right) \otimes \lambda^2 \underline{\bold{1}}\bigg)
    \end{aligned}
\end{equation}
where $\Lambda_{xx}(x)= \tilde{Q}(X)-\gamma V_i + \nabla V_i^\top  FX$, $\Lambda_{xu}= W_i^\top \otimes G^\top $, $\Lambda_{xu}= R$.
Then the integral form of the advantage function is expressed as:
\begin{equation} \label{integral_advan_cons}
\begin{aligned}
    &e^{-\gamma T} \Phi^\top (X(t+T))W_{i+1} - \Phi^\top (X(t))W_{i+1}\\
    &+\int_{t}^{t+T} e^{-\gamma (\tau-t)} \left( M_m(x) + U(u)\right) d\tau\\
    &=\int_0^T  e^{-\gamma (\tau-t)} \biggl( \Lambda_{xx}(x)+  \mathrm{vec}^\top (\Lambda_{xu})\left(u\otimes  \mathrm{vec}\left(\nabla\Phi^\top \right)\right)+ \\
    & \mathrm{vec}^\top (\Lambda_{uu})\left(u \otimes 2 \lambda \beta^{-1}\left( \frac{u}{\lambda}\right)+ \ln \left( \underline{\bold{1}} - \left(\frac{u}{\lambda}\right)^2\right) \otimes \lambda^2 \underline{\bold{1}} \right) \biggr)d\tau
\end{aligned}
\end{equation}
Substituting \eqref{integral_offpolicy_cons} into \eqref{integral_advan_cons}, we obtain the following equation:
\begin{equation}
\begin{aligned}
    &\int_0^T  e^{-\gamma (\tau-t)} \left(- U(u_i)- \nabla V^\top _i G u_i \right)\\
    &\quad\quad= \int_0^T  e^{-\gamma (\tau-t)} \Lambda_{xx}(x)
\end{aligned}
\end{equation}
Then we can derive the following equation for advantage function and value function:
\begin{equation} \label{final_advantge_cons}
   \begin{aligned}
\tilde{\xi}_{0,1}+\tilde{\psi}_{0,1} W_i&+\tilde{\omega}_{0,1(i)}=- \mathrm{vec}^\top (\Lambda_{xu(i)})\tilde{\phi}_{0,1(i)}+\\  &+ \mathrm{vec}^\top (\Lambda_{xu(i)})\tilde{\mu}_{0,1} + \mathrm{vec}^\top (\Lambda_{uu(i)})\tilde{\nu}_{0,1}
       \end{aligned}
\end{equation}
where $\tilde{\psi}_{0,1}= e^{-\gamma T}\Phi^\top (X(t_1))-\Phi^\top (X(t_0))$, \\
$\tilde{\omega}_{0,1(i)}= \int_{t_0}^{t_1}  e^{-\gamma (\tau-t)}\int_0^{u_i} (\lambda \beta^{-1} (v/\lambda))^\top  \Lambda_{uu(i-1)}d\tau$, \\
$\tilde{\xi}_{0,1}=\int_{t_0}^{t_1} e^{-\gamma (\tau-t)} r(X,u)d\tau$,\\
$\tilde{\phi}_{0,1(i)}= \int_{t_0}^{t_1}  e^{-\gamma (\tau-t)} u_i\otimes  \mathrm{vec}(\nabla\Phi^\top )d\tau$, \\
$\tilde{\nu}_{0,1}= \int_{t_0}^{t_1}  e^{-\gamma (\tau-t)} \biggr(u \otimes 2 \lambda \beta^{-1}\left( \frac{u}{\lambda}\right)+ \ln \left( \underline{\bold{1}} - \left(\frac{u}{\lambda}\right)^2\right) \otimes$ $ \lambda^2 \underline{\bold{1}}\biggr)d\tau$, \\
$\tilde{\mu}_{0,1}= \int_{t_0}^{t_1}  e^{-\gamma (\tau-t)} u\otimes  \mathrm{vec}(\nabla\Phi^\top )d\tau$.

Define the following matrices:
\begin{equation}
\begin{aligned}
&\tilde{\psi}= \left[ \tilde{\psi}_{0,1},\tilde{\psi}_{1,2}, \hdots, \tilde{\psi}_{N-1,N}  \right]^\top, \tilde{\xi}= \left[ \tilde{\xi}_{0,1},\tilde{\xi}_{1,2}, \hdots, \tilde{\xi}_{N-1,N}  \right]^\top,\\
&\tilde{\nu}= \left[ \tilde{\nu}_{0,1},\tilde{\nu}_{1,2}, \hdots, \tilde{\nu}_{N-1,N}  \right]^\top,\tilde{\mu}= \left[ \tilde{\mu}_{0,1},\tilde{\mu}_{1,2}, \hdots, \tilde{\mu}_{N-1,N}  \right]^\top,\\
& \tilde{\phi}_i= \left[ \tilde{\phi}_{0,1(i)},\tilde{\phi}_{1,2(i)}, \hdots, \tilde{\phi}_{N-1,N(i)}  \right]^\top,\\
&\tilde{\omega}_i= \left[ \tilde{\omega}_{0,1(i)},\tilde{\omega}_{1,2(i)}, \hdots, \tilde{\omega}_{N-1,N(i)}  \right]^\top,\\
&\Delta_i= \left[ -\tilde{\psi},\tilde{\mu}-\tilde{\phi}_i,\tilde{\nu} \right],\\
&\Theta_i= \left[W^\top_i,\mathrm{vec}^\top\left(\Lambda_{xu(i)}\right), \mathrm{vec}^\top\left(\Lambda_{uu(i)}\right) \right]^\top,\\
&\Xi_i= \tilde{\xi}+ \tilde{\omega}_i,
\nonumber
\end{aligned}
\end{equation}
Then we can rewrite \eqref{final_advantge_cons} as linear matrix equations:
\begin{equation} \label{least_square1}
\Delta_i \Theta_i=\Xi_i
\end{equation}
The policy update step is then expressed in the following form:
\begin{equation}\label{alg4_policy_improve}
K_{i+1}=\frac{1}{2} \Lambda_{uu(i)}^{-1}\Lambda_{xu(i)}
\end{equation}
Then to construct the algorithm based on this update rule we will propose an iteration scheme together with time as described in Alg~\ref{Alg 4}. In this algorithm, in every time interval, the data term $\omega, \phi$ is calculated with the policy $u_i= -\lambda \tanh\left(\frac{1}{\lambda}K_{i} \mathrm{vec}\left(\nabla\Phi^\top \right)\right)$.
\begin{algorithm}
    \caption{Constrained input Off-policy Q-learning}\label{Alg 4}
\begin{algorithmic}[1]
\State Initiate admissible control policy gain $K_0$, $\epsilon_t$ is a small real positive number.
\State \textbf{loop}:
\State Collect trajectory data $\tilde{\phi}_i, \tilde{\mu}, \tilde{\xi}, \tilde{\nu}, \tilde{\psi}, \tilde{\omega}_i$ from the system under excitation by the behavior policy $u$.
\State Let $\Delta_i= \left[ -\tilde{\psi},\tilde{\mu}-\tilde{\phi}_i,\tilde{\nu} \right]$, $\Xi_i= \tilde{\xi}+ \tilde{\omega}_i$
\State Obtain $W_i$, $\Lambda_{xu(i)}$ and $\Lambda_{uu(i)}$ from the expression following:
$ \Theta_i =   \left(\Delta_i^\top \Delta_i \right)^{-1}\Delta_i^\top \Xi_i$
\State Update the control policy $K_{i+1}=\frac{1}{2}\Lambda_{uu(i)}^{-1}\Lambda_{xu(i)}^\top $
\State Let $u_{i+1}(x)= -\lambda \tanh\left(\frac{1}{\lambda}K_{i+1} \mathrm{vec}\left(\nabla\Phi^\top(x) \right)\right)$
\State Let $i \leftarrow i+1$, $t \leftarrow t +\Delta t$
\State \textbf{until} $\|W_i-W_{i-1}\| < \epsilon_t$
\State Return $V^*(x)= W_i^\top\Phi(x)$ and $u^*(x)=-\lambda \tanh\left(\frac{1}{\lambda}K_{i+1} \mathrm{vec}\left(\nabla\Phi^\top(x) \right)\right)$
\end{algorithmic}
\end{algorithm}
\begin{remark}
  The Alg~\ref{Alg 4} is required by the full-rank condition of the data $\Delta_i$ collected at each time step. However, in a practical system, several selections of activation will make this rank condition unfulfilled, so we can treat \eqref{least_square1} as a least-square problem. The Alg~\ref{Alg 4} in this study performs iteratively along with time, which means that the time intervals \( t_0 \) to \( t_N \) in each iteration are different. This approach is highly practical, as it allows for continuous data collection while the system is running without the need to reset it. However, this is not the only way to implement Alg~\ref{Alg 4} in practice. Alternatively, Alg~\ref{Alg 4} can be iteratively executed over a single time interval to avoid prolonged system excitation and to reduce the wear on the actuators. However, this approach requires additional memory to store continuous signals of \( x(t) \) and \( u(t) \) throughout the interval \( t_0 \) to \( t_N \) and the vector data $\Delta_i$ at each iteration may experience a rank reduction, which is detrimental to the least-squares method unless new data are sampled.
\end{remark}
\begin{theorem}\label{theorem 3}
Consider the optimal tracking control problem with constrained input for the augmented system \eqref{aug_sys} and the infinite-horizon cost function defined in \eqref{cost_cons}, under the Assumption~\ref{Assumption 01}. Assume that the data collection term \( \Delta_i \), as defined in \eqref{least_square1}, is full-rank, and let the initial policy \( u = -\lambda \tanh\left(K_0 \mathrm{vec}\left(\nabla \Phi^\top(x)\right)\right) \) be an admissible control policy. Under these conditions, the sequence \( W_i, i\in\mathbb{N}\) generated by Alg~\ref{Alg 5} will converge to the optimal solution \( W^* \) as $i \to \infty$.
\end{theorem}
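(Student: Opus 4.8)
The plan is to mirror the argument used for Theorem~\ref{theorem 2}: under the stated hypotheses, Alg~\ref{Alg 4} reproduces exactly the extended Kleinman iteration \eqref{value_eva_cons}--\eqref{policy_up_cons}, and the conclusion then follows from the convergence guarantee of Lemma~\ref{Lemma 5}. The one feature that must be handled in addition to the LQR case is the neural-network parametrization $V_i(x)=W_i^\top\Phi(x)$ together with the basis-function approximation error $\varepsilon(x)$ from \eqref{eqn56}.

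First I would establish uniqueness of the quantities produced at each iteration. Since $\Delta_i$ in \eqref{least_square1} has full column rank, the normal equation $\Delta_i\Theta_i=\Xi_i$ has the unique solution $\Theta_i=(\Delta_i^\top\Delta_i)^{-1}\Delta_i^\top\Xi_i$, so $W_i$, $\Lambda_{xu(i)}$ and $\Lambda_{uu(i)}$ are uniquely determined. Next I would identify these with the true objects, namely $\Lambda_{uu(i)}=R$, $\Lambda_{xu(i)}=W_i^\top\otimes G^\top$, and $W_i$ the weight vector of the value function $V_i$ that solves the on-policy evaluation step \eqref{value_eva_cons}. This is because the advantage relation \eqref{final_advantge_cons}, which follows from Lemma~\ref{Lemma 3}, is equivalent along trajectories to the off-policy IRL Bellman equation \eqref{integral_offpolicy_cons}; and \eqref{integral_offpolicy_cons}, being obtained by integrating \eqref{general_devV} and adding the exploration term $\nabla V_i^\top G(u-u_i)$ to both sides of the Hamiltonian identity \eqref{opt_cons}, is satisfied by the same $V_i=W_i^\top\Phi$ that solves \eqref{value_eva_cons} for any exciting behavior policy, the exploration contribution cancelling between the two sides. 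Uniqueness of the full-rank least-squares solution then forces the recovered triple to be this one; note that the state-dependent block $\Lambda_{xx}(x)=\tilde Q(X)-\gamma V_i+\nabla V_i^\top FX$ never needs to be identified, since it is eliminated when \eqref{integral_offpolicy_cons} is substituted into \eqref{integral_advan_cons}.

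Then I would verify that the policy-update rule \eqref{alg4_policy_improve} coincides with the policy-improvement step \eqref{policy_up_cons}. Using the Kronecker identity $G^\top(\nabla\Phi^\top)W_i=(W_i^\top\otimes G^\top)\mathrm{vec}(\nabla\Phi^\top)$ together with $\Lambda_{uu(i)}=R$ and $\Lambda_{xu(i)}=W_i^\top\otimes G^\top$, one gets
\begin{equation*}
u_{i+1}(x)=-\lambda\tanh\!\left(\tfrac{1}{\lambda}K_{i+1}\mathrm{vec}(\nabla\Phi^\top)\right)=-\lambda\tanh\!\left(\tfrac{1}{2}(\lambda R)^{-1}G^\top\nabla V_i\right),
\end{equation*}
which is precisely \eqref{policy_up_cons}. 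Hence, starting from the admissible policy induced by $K_0$, Alg~\ref{Alg 4} generates the same sequence $\{V_i,u_i\}$ as the exact iteration \eqref{value_eva_cons}--\eqref{policy_up_cons}; admissibility of each iterate is inherited along the induction exactly as in Lemma~\ref{Lemma 5}. Invoking Lemma~\ref{Lemma 5} gives $V_i\to V^*$ and $u_i\to u^*$, and since $\Phi$ is a set of linearly independent basis functions the map $W\mapsto W^\top\Phi$ is injective, so $W_i\to W^*$.

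The main obstacle I expect is the reconciliation of the neural-network representation with the exact iteration: \eqref{value_eva_cons} and Lemma~\ref{Lemma 5} concern the exact value functions, whereas Alg~\ref{Alg 4} only ever manipulates $W_i^\top\Phi(x)$. One must therefore argue that, when $\Phi(x)$ is chosen as a complete independent basis, the approximation error $\varepsilon(x)$ of \eqref{eqn56} and the residual it induces in \eqref{integral_offpolicy_cons} can be made arbitrarily small, uniformly on the compact operating region, as the number of neurons grows, so that the $W_i$ extracted from the full-rank least-squares problem track the exact iterates and converge to $W^*$ in the limit of an infinite basis. A residual technical point, relative to the LQR case, is that the full-rank hypothesis on $\Delta_i$ now has to hold at every iteration even though $\Delta_i$ depends on the time-varying data $(\tilde\phi_i,\tilde\omega_i)$ generated by the current policy $u_i$; absent an analog of Lemma~\ref{Lemma 1}, this is assumed directly in the statement, and in practice \eqref{least_square1} is solved in the least-squares sense.
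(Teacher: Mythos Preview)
Your proposal is correct and shares the paper's overall strategy: establish that, under the full-rank hypothesis, the data-driven iteration of Alg~\ref{Alg 4} reproduces exactly the extended Kleinman scheme \eqref{value_eva_cons}--\eqref{policy_up_cons}, and then invoke Lemma~\ref{Lemma 5}.

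The one notable difference is in how the equivalence of the integral off-policy Bellman relation \eqref{off_policy_cons}/\eqref{integral_offpolicy_cons} with the pointwise evaluation equation \eqref{value_eva_cons} is argued. You reason \emph{structurally}: since \eqref{integral_offpolicy_cons} is obtained by integrating \eqref{general_devV} with the exploration term added symmetrically, any solution of \eqref{value_eva_cons} automatically satisfies the integral identity, and full-rank uniqueness pins down the recovered triple. The paper instead runs an explicit \emph{analytic} argument: it divides \eqref{off_policy_cons} by $T$, sends $T\to 0$, and applies L'H\^opital's rule to recover \eqref{value_eva_cons} (their \eqref{proof_irl}) directly from the integral form. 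The paper's route makes the reverse implication (integral $\Rightarrow$ differential) completely explicit, whereas yours relies on uniqueness to close that direction; both are valid. Your write-up is slightly more careful than the paper's on two points it glosses over: the role of the basis-function approximation error $\varepsilon(x)$ from \eqref{eqn56}, and the fact that the full-rank condition on $\Delta_i$ must be assumed at every iteration rather than checked via an analog of Lemma~\ref{Lemma 1}. One small slip: you cite Lemma~\ref{Lemma 3} for the advantage relation \eqref{final_advantge_cons}, but Lemma~\ref{Lemma 3} is stated only for the LQR case; the constrained-input version is derived directly in the text leading to \eqref{final_advantge_cons}, so you should point there instead.
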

\begin{proof}
First, dividing both sides of equation \eqref{off_policy_cons} by \( T \) and taking the limit yields:
\begin{equation}\label{lim_offpolicy_cons}
\begin{aligned}
        &\lim_{T\to 0} \frac{e^{-\gamma T} V_i(X(t+T)) - V_i(X(t))}{T} \\
    =& -\lim_{T\to 0} \frac{1}{T} \int_{t}^{t+T} e^{-\gamma(\tau-t)} \left( \tilde{Q}(X) + U(u_i) \right) d\tau  \\
    &+\lim_{T\to 0} \frac{1}{T}\int_t^{t+T}e^{-\gamma (\tau-t)} \left(\nabla V^\top _i G (u - u_i) \right) d\tau
\end{aligned}
\end{equation}
Note that the left hand side of \eqref{lim_offpolicy_cons}, by applying L'Hopital's rule, can be expressed as:
    \begin{equation}\label{left_hand}
    \begin{aligned}
        &\lim_{T\to 0} \frac{e^{-\gamma T} V_i(X(t+T)) - V_i(X(t))}{T} \\
        &= \lim_{T\to 0} \left[-\gamma e^{-\gamma T} V_i(X(t+T)) + e^{-\gamma T} V_i(X(t+T))\right] \\
        &= -\gamma V_i + \nabla V_i(F + Gu_i + G(u-u_i)) 
    \end{aligned}
    \end{equation}
Similarly, by applying the same calculation to the right-hand side of \eqref{lim_offpolicy_cons}, we can express the second term as:
\begin{equation} \label{right_hand}
\begin{aligned}
        &\lim_{T\to 0} \frac{1}{T} \int_{t}^{t+T} e^{-\gamma(\tau-t)} \left( \tilde{Q}(X) + U(u_i) \right) d\tau \\
        &+\lim_{T\to 0} \frac{1}{T}\int_t^{t+T}e^{-\gamma (\tau-t)} \left(\nabla V^\top _i G (u - u_i) \right) d\tau \\
    &=\tilde{Q}(X) + U(u_i)+ G(u-u_i)
\end{aligned}
\end{equation}
From \eqref{left_hand} and \eqref{right_hand} we obtain the following expression:
\begin{equation} \label{proof_irl}
   \tilde{Q}(X) + U(u_i)+ \nabla V_i \left( FX + Gu_i \right)=0
\end{equation}
The equation above is equivalent to the value evaluation step \eqref{opt_cons}, and since Alg~\ref{Alg 4} employs the same policy improvement step as in \eqref{opt_fb_cons}, this demonstrates that Alg~\ref{Alg 4} is equivalent to the iterative algorithm described in Alg~\ref{Alg 5}. If we represent the value function as \( V_i(x) = W_i^\top\Phi (x)\), which is uniquely determined in each iteration from the full-rank data \( \Delta_i \), then starting from the initial admissible gain \( K_0 \), the Alg~\ref{Alg 4} inherits the convergence property, as demonstrated in Lemma~\ref{Lemma 5}.
\end{proof}
On the other hand, it is also worth mentioning the extended version of Alg~\ref{Alg 3}, which solves the optimal tracking problem with constrained input and serves as an alternative way of implementing Alg~\ref{Alg 4}. To derive this algorithm, we begin by modifying the off-policy HJB equation \eqref{explored_HJB} into the form of the tracking off-policy HJB equation as follows:
\begin{equation} \label{tracking_explorized_HJB}
     r(X,u_i) - \left(\nabla V_i\right) (Gw) =\gamma V_i - \left(\nabla V_i\right) \left( FX + G(u_i+w) \right)
\end{equation}
Taking into account \eqref{tracking_explorized_HJB}, we can propose a similar IRL form as shown in \eqref{off_policy_cons}, given by:
\begin{equation}\label{IRL_cons}
\begin{aligned}
e^{-\gamma T} &V_{j+1}(X(t+T)) - V_{j+1}(X(t))=\\
&=\int_{t}^{t+T} e^{-\gamma (\tau-t)} \left( -M_m(x) - U(u_i)\right) \\
&\quad\quad\quad+\int_t^{t+1}e^{-\gamma (\tau-t)} \left( \nabla V^\top _i G w \right) d\tau
\end{aligned}
\end{equation}
Adding $U(u)$ to both sides of \eqref{IRL_cons}, and applying the definition of the advantage function, it follows that:
\begin{equation}\label{advantage_cons}
\begin{aligned}
\tilde{\xi}_{0,1}+\tilde{\psi}_{0,1} W_j &= \mathrm{vec}^\top \left(\Lambda_{xu(i)} \right)\tilde{\eta}_{0,1}\\
&+ \mathrm{vec}^\top \left(\Lambda_{uu(i)}\right)\left(\tilde{\nu}_{0,1}-\tilde{\nu}_{0,1(i)} \right)
\end{aligned}
\end{equation}
where $\tilde{\eta}_{0,1}= \int_{t_0}^{t_1} e^{-\gamma (t-\tau)} w\otimes vec(\nabla\Phi^T)d\tau$,\\
$\tilde{\nu}_{0,1(i)}= \int_{t_0}^{t_1}  e^{-\gamma (\tau-t)} \biggr(u_i \otimes 2 \lambda \beta^{-1}\left( \frac{u_i}{\lambda}\right)+ \ln \left( \underline{\bold{1}} - \left(\frac{u_i}{\lambda}\right)^2\right)$ $ \otimes \lambda^2 \underline{\bold{1}}\biggr)d\tau$.\\
The expression above can also be rewritten in the following form:
\begin{equation}
\label{advupdate}
\Delta_i \Theta_i=\tilde{\xi}
\end{equation}
where \\
$\Delta_i= \left[ -\tilde{\psi},\tilde{\eta},\tilde{\nu}_i-\tilde{\nu} \right]$\\
$\Theta_i= \left[W^\top_i,\mathrm{vec}^\top\left(\Lambda_{xu(i)}\right), \mathrm{vec}^\top\left(\Lambda_{uu(i)}\right) \right]^\top$\\
$\tilde{\eta}= \left[ \tilde{\eta}_{0,1},\tilde{\eta}_{1,2}, \hdots, \tilde{\eta}_{N-1,N}  \right]^\top$,\\
$\tilde{\nu}_i= \left[ \tilde{\nu}_{0,1(i)},\tilde{\nu}_{1,2(i)}, \hdots, \tilde{\nu}_{N-1,N(i)}  \right]^\top$\\

Meanwhile, the policy improvement step of the proposed algorithm remains the same as \eqref{alg4_policy_improve}. Since Algorithm~\ref{Alg 5} provides an alternative implementation of Alg~\ref{Alg 4}, with the value evaluation step and the policy improvement step following the same principles as in Alg~\ref{Alg 4}, the convergence properties remain the same as those of Alg~\ref{Alg 5}.
\begin{algorithm}
    \caption{Time-Iterative Off-policy Q-learning subject to Input Constraint}\label{Alg 5}
\begin{algorithmic}[1]
\State Initiate admissible control policy gain $K_0$.
\State \textbf{loop}:
\State Collect trajectory data $  \tilde{\xi}, \tilde{\nu}, \tilde{\psi}, \tilde{\nu}_i, \tilde{\eta}$ from the system under excitation by the behavior policy $u$.
\State Let $\Delta_i= \left[ -\tilde{\psi},\tilde{\eta},\tilde{\nu}_i-\tilde{\nu} \right]$.
\State Get $W_i$, $\Lambda_{xu(i)}$ and $\Lambda_{uu(i)}$ from the expression following:
$ \Theta_i =   \left(\Delta_i^\top \Delta_i \right)^{-1}\Delta_i^\top \tilde{\xi}$.
\State Update the control policy $K_{i+1}=\frac{1}{2}\Lambda_{uu(i)}^{-1}\Lambda_{xu(i)}^\top $.
\State Let $u_{i+1}(x)= -\lambda \tanh\left(\frac{1}{\lambda}K_{i+1} \mathrm{vec}\left(\nabla\Phi^\top(x) \right)\right)$.
\State Let $i \leftarrow i+1$, $t \leftarrow t +\Delta t$.
\State \textbf{until} $\|W_i-W_{i-1}\| < \epsilon_t$, where $\epsilon_t$ is a small real positive number.
\State Return $V^*(x)= W_i^\top\Phi(x)$ and $u^*(x)=-\lambda \tanh\left(\frac{1}{\lambda}K_{i+1} \mathrm{vec}\left(\nabla\Phi^\top(x) \right)\right)$.
\end{algorithmic}
\end{algorithm}

\begin{remark}
    Algorithm~\ref{Alg 4} accumulates a larger volume of data in Step 3 compared to Algorithm~\ref{Alg 5} in the same system time, due to its higher-dimensional data representation. Consequently, Algorithm~\ref{Alg 5} demonstrates superior memory efficiency, making it more suitable for large-scale systems or deployment on resource-constrained microcontrollers. However, Algorithm~\ref{Alg 5} exhibits an inherent limitation: the collected data incorporate the noise term \( w(t)= u(t) - u_i(t) \), which may attenuate the amplitude of \( w(t) \) relative to \( u(t)\). Although the frequency components remain sufficiently rich, the diminished amplitude can induce numerical instability when employing least squares estimation, particularly when the control input constraints are tight. This trade-off should be meticulously evaluated when selecting an appropriate control strategy for a given application.
\end{remark}
Note that, to fully implement Algorithms~\ref{Alg 4} and~\ref{Alg 5} in a completely model-free fashion, a method is required to initialize an admissible gain without relying on the system dynamics model. This heuristic method is based on the intuitive assumption that:
\begin{assumption}\label{hypothesis}
    The stabilizing controller for the constrained input problem (Problem~\ref{Prob. 2}) will be close to the stabilizing controller for Problem~\ref{Prob 1} if the control signal remains within the constraint bounds.
\end{assumption}
Consider the case of the nominal LQT problem, which minimizes the following cost function:
\begin{equation}\label{nominal_cost}
    J=\int_0^T  e^{-\gamma (\tau-t)} \left(
    \tilde{Q}(X)+ u^\top Ru\right) d\tau
\end{equation}
subjected to the dynamic \eqref{aug_sys}. Note that $R$ is assumed to be a positive definite diagonal matrix and the dynamic \eqref{aug_sys} in this case can be rewritten in the following form:
\begin{equation}\label{nominal_sys}
    \dot{X}(t)=ZX(t)+Gu(t)
\end{equation}
where $X(t)=[x(t),x_r(t)]^\top $ and $Z=\begin{bmatrix}
    A&0\\
    0&A_r
\end{bmatrix}$, $G=[B, 0]^\top $ and the reward function can also be rewritten as $\tilde{Q}(X)= X^\top C^\top MCX$, $C=[I,-I]$. Under the assumption that the discount factor $\gamma$ can be chosen such that $\left(A_r-\gamma I\right)$ is stable, the solution for the tracking optimizing problem \eqref{nominal_cost}-\eqref{nominal_sys} can be obtained by solving the LQT Riccati \cite{Modares2014linear}. However, in this case, our interest lies in finding the admissible gain by solving the equation of the observability gramian \cite{rugh1996linear}: 
\begin{equation}\label{Gramian}
    \left(Z- \frac{1}{2}\gamma I\right)^\top S + S\left(Z- \frac{1}{2}\gamma I\right) + C^\top M C =0
\end{equation}
Note that the equation above can be satisfied if the pair $(Z - \gamma I, \sqrt{M}C)$ is observable \cite{rugh1996linear}. Since $(A, \sqrt{M})$ is observable by Assumption~\ref{Assumption 01}, it follows that $(A - \gamma I, \sqrt{M})$ is also observable. Furthermore, $\gamma$ can be chosen such that $\left(A_r - \gamma I\right)$ is stable, which ensures that $(Z -\gamma I, \sqrt{M}C)$ is observable. In \eqref{Gramian}, the solution $S$, which represents the Gramian observability, can also be interpreted as the solution to an optimal control problem with constrained input, as we will show later. Meanwhile, a data-driven method should be used to solve the equation efficiently in terms of computation. Specifically, using the approximation $\hat{\Gamma}_{xx}= Z^\top S + SZ$ in \eqref{LMIproof1} can lead to computational inefficiencies because the matrix $Z$ could be low-rank. Therefore, in some cases where resetting the state of the system is allowed, this issue can be resolved. In an establishment where the system state is reset randomly within a bounded open set containing the origin, setting the control input to zero ($u(t)=0, \forall t \geq 0 $) ensures that the following equation holds:
\begin{equation}
\label{Gramian modelfree1}
     \mathrm{vec}_{S}^\top \left(S \right)\sigma_{0}(t)=\mathrm{vec}_{S}^\top \left(\hat{\Lambda}_{xx} \right)\rho_{0}(t) 
\end{equation}
where $\sigma_{0}(t)=  X(t)\otimes_{S}X(t)-X(t_0)\otimes_{S}X(t_0)$,\\ 
$\rho_{0}(t)=\int_{t_0}^{t}  X(\tau)\otimes_{S}X(\tau)d\tau$.\\
And within the same setup, $M$ can also be approximated from the following equation:
\begin{equation}\label{Gramian modelfree2}
    \bar{\vartheta}= \mathrm{vec}_{S}^\top(M) \bar{\varsigma}
\end{equation}
where $\bar{\vartheta}= \left[ \vartheta^1_{0},\vartheta^2_{0}, \hdots, \vartheta^N_{0}  \right]^\top$, $\bar{\varsigma}= \left[ \varsigma^1_{0},\varsigma^2_{0}, \hdots, \varsigma^N_{0}  \right]^\top$ and $\vartheta^i_0 = \int_{t_0}^{t} r(X(\tau),u(\tau)) d\tau$, $\varsigma^i_0 = \int_{t_0}^{t} \tilde{x}(\tau) \otimes_S  \tilde{x}(\tau) d\tau$ are the data collected from different trajectories when resetting state. Resetting the state of the system multiple times can be assumed that the data term $\bar{\varsigma}$ has the full column rank, and $\hat{M}$ can be uniquely determined.
Hence, using \eqref{Gramian modelfree1} and \eqref{Gramian modelfree2}, the equation \eqref{Gramian} can be rewritten as follows:
     \begin{equation}\label{Gramian modelfree3}
     \begin{aligned}
         \mathrm{vec}_{S}^\top \left(S \right)\sigma_0(t) & - \gamma\mathrm{vec}_{S}^\top (S) \rho_0(t) \\
         & = - \mathrm{vec}_{S}^\top \left(C^\top \hat{M}C \right)\rho_0(t) 
     \end{aligned}
\end{equation}
Define\\
$\chi= \bar{\sigma} - \gamma\bar{\rho}, \quad \bar{\sigma}= \left[ \sigma^1_{0},\sigma^2_{0}, \hdots, \sigma^N_{0}  \right]^\top ,\quad \bar{\rho}= \left[ \rho^1_{0},\rho^2_{0}, \hdots, \rho^N_{0}  \right]^\top$.\\
Suppose that the data term $\chi$ is full column rank by resetting system state at random in a bounded open set containing the origin, $S$ can be uniquely determined from the following:
      \begin{equation} \label{LMI_cons2}
 \mathrm{vec}_S(S)  = - \left( \chi^\top \chi \right)^{-1}\chi^\top\mathrm{vec}_{S}^\top \left(C^\top \hat{M}C \right)\bar{\rho} 
  \end{equation}
Note that if we solve the \eqref{Gramian}, it implies that we also solve the following Lyapunov equation:
\begin{equation}\label{LQT}
    \begin{aligned}
     X^\top  Z^\top  S X  & + X^\top SZX -\gamma X^\top SX +\\
    &+X^\top(C^\top MC + SBR^{-1}B^\top S)X \\
    & -X^\top SBR^{-1}B^\top SX = 0   
    \end{aligned}
\end{equation}
This implies that $S$ is the solution for the value function of the LQT problem with a cost function weighted by $R$ and $Q_I= \left(C^\top MC + B^\top R^{-1}B\right)$. In addition, if the weight $R$ for the value function can be obtained, the admissible gain can also be calculated as $\hat{K}=\hat{B}^\top S$, where $\hat{B}$ is calculated using the equation \eqref{constraintLMI2_b}.
Similarly, the same derivation can be applied to the LQT problem with constrained input as follows:
\begin{equation}\label{cons_quadratic}
    \begin{aligned}
        X^\top  Z^\top  S X + X^\top SZX & -\gamma X^\top SX + \\ & + \tilde{Q}_I(X) + D(X)  =0
\end{aligned}
\end{equation}
where \\
$\tilde{Q}_I(X) = X^\top C^\top MC X - D(X)\geq 0$ \\
$D(X)=\lambda^2 \bar{R} \ln \left(1 - \tanh^2\left(\frac{1}{2\lambda}R^{-1}B^\top SX \right) \right) \leq 0$.\\
This indicates that \( V^*(X) = X^\top S X \) is the solution to the optimal control problem with some non-linear cost function that includes an additional nonlinear term $D(X)$ in $\tilde{Q}(X)$, which also ensures the stability of the system if $D(X)$ is bounded. However, if the solution \( V^*(X) \) of the optimal HJB equation \eqref{opt_cons} is not in quadratic form, the equality \eqref{cons_quadratic} will be modified accordingly.
\begin{equation}
    \begin{aligned}
        &X^\top  Z^\top  S X + X^\top SZX -\gamma X^\top SX + \tilde{Q}_I(X)\\
        &\quad+\lambda^2 \bar{R} \ln \left(1 - \tanh^2\left(-\frac{1}{2\lambda}R^{-1}B^\top SX \right) \right)  =E(X)
\end{aligned}
\end{equation}
where $E(X)$ is some arbitrary nonlinear function. Then, if we consider the quadratic solution \( \hat{V}(X) = X^\top S X \) as an approximation of the non-quadratic optimal solution \( V^* (X)\), and if \( E(X) \) is bounded, it can be inferred that there might exist a bounded nonlinear function \( \epsilon_I(X) \) such that:
\begin{equation}\label{optimal_HJB_robust_cons}
\begin{aligned}
    & \tilde{Q}_I(X)- \gamma V^*(X) + (\nabla V^*(X))^{\top} ZX
\\
&+ \lambda^2 \bar{R} \ln\left(1 - \tanh^2\left(\frac{1}{2}(\lambda R)^{-1} G^\top  \nabla V^* \right)\right) = 0    
\end{aligned}
\end{equation}
where $V^*(X)=\hat{V}(X)+\epsilon_I(X)$.
Equation \eqref{optimal_HJB_robust_cons} suggests that $
\hat{V}(X)$ is the projection of the optimal value function $V^*(X)$ onto the subspace of quadratic functions of the state variable $X$. This weakly supports the assumption~\ref{hypothesis}; however, this assumption~\ref{hypothesis} only holds conditionally on the assumptions mentioned above, such as $E(X)$ being bounded and the existence of a bounded function $\epsilon_I(X)$. Based on the hypothesis, after obtaining $S$ by solving equation \eqref{Gramian modelfree3}, the solution $\hat{V}(X)$ can be considered an approximation to the optimal solution $V^*(X)$ of \eqref{optimal_HJB_robust_cons} in the subspace of quadratic functions of $X$. Despite that, the optimal solution still needs to be searched for around the hypothesis solution $\hat{V}(X)$ through trial and error, due to the fact that some of the conditions for the hypothesis to hold may not be satisfied, in general. Also, even if the choice of activation function for the cost function differs from the quadratic basis, this method can still be leveraged. Since we can project the hypothesis solution $\hat{V}(X)$ onto different subspaces depending on the choice of activation function by using the least squares method, the same search procedure can be applied around the projected solution through trial and error. Although this heuristic search method requires trials and errors, it narrows down the search space for the admissible gain.

\section{Simulation Results}\label{Section 4}         
\subsection{Simulation Setup}
The feasibility of the off-policy Algorithm \ref{Alg 2}, the time-iterative Q-learning Algorithm \ref{Alg 3} and
off-policy Q-learning for the optimal tracking problem with constraint input (Algs. \ref{Alg 4}, \ref{Alg 5}) will be demonstrated by simulations. Furthermore, in the simulation, the on-policy Q-learning algorithms (Algorithm \ref{Alg 1}) \cite{Possieri} and the Online Q-learning algorithms \cite{VAMVOUDAKIS201714} will also be considered to compare with the proposed algorithms in the setup where the state is not allowed to reset during training, as stated in the previous section. An example of a F-16 aircraft autopilot is applied to verify the effectiveness of these algorithms. In light of \cite{f16original}, \cite{dao2025h}, an F-16 model is given with the vector of the three variables $x=[\alpha ,\omega ,\sigma_e ]^T$ driven by the dual engine, where $\alpha ,\omega ,\sigma_e $ are the angle of attack, the pitch rate and the inclination of the elevator, respectively. For feedback control problems, Algorithms \ref{Alg 2}, \ref{Alg 3}, \ref{Alg 4}, and \ref{Alg 5} were trained over episodes $N_{train} = 39$, collecting data for $T = 1 (s)$ in each episode. Furthermore, Algorithms \ref{Alg 4} and \ref{Alg 5} were used to track problems and trained over $N_{train} = 39$ episodes, collecting data for $T = 2.5 (s)$ seconds in each episode. Meanwhile, the behaviour policy used for exploration in this simulation is defined as the sum of sinusoidal signals with random frequencies.

\begin{equation}
    u(t) = \bar{C} \sum_{i=1}^{100} \sin(\omega_i t)
\end{equation}
where $\omega_i $  with $i = 1, \dots, 100$, are randomly selected from $[-500, 500]$, $\bar{C}$ is set to $\frac{50}{15}$ for the feedback case and $\frac{50}{2}$ for the tracking case.

The simulation is implemented on a PC equipped with a six-core 2.50 GHz Intel i5-12400F CPU, 16.0 GB RAM, 64-bit Windows 11 operating system, and Julia 1.10.2.  Open-source implementations of the proposed methods are also publicly available \footnote{\url{https://github.com/Huy-Quang-Dao/OffQLCTS.jl}}.

\subsection{F-16 feedback control problem}
In this simulation case, we consider a linear F16 aircraft plant, where the system dynamics is described by \eqref{sys1}, and the system matrices given by \cite{dao2025h,f16original}:
\begin{equation}
    A=\begin{bmatrix}
-1.01887 & 0.90506 & -0.00215 \\
0.82225 & -1.07741 & -0.17555 \\
0 & 0 & -1
    \end{bmatrix};
    B=\begin{bmatrix}
0 \\
0 \\
1
    \end{bmatrix}
\end{equation}
The performance index is considered with $M=\mathrm{diag}(1,1,1)$ and $R=I$. The Algorithms \ref{Alg 2} and \ref{Alg 3} are verified, along with two other methods: Q-learning on policy \cite{Possieri} and Q-learning in policy \cite{VAMVOUDAKIS201714} to the F-16 model without input constraints. In addition, Algorithms \ref{Alg 4} and \ref{Alg 5} with constrained input \(\lambda = 1.5\) are also taken into account in this case to demonstrate that these algorithms can solve Problem \ref{Prob. 2} without considering tracking control. Subsequently, for the linear system model of the F-16 aircraft, a natural choice for the activation function in the neural network \eqref{network} is $\Phi(x(t))=x(t)\otimes_S x(t)$, as this basis allows the representation of $\hat{V}(x)= x^T\hat{P}x$. 

\begin{figure}
    \centering
    {\includegraphics[width=3.3in]{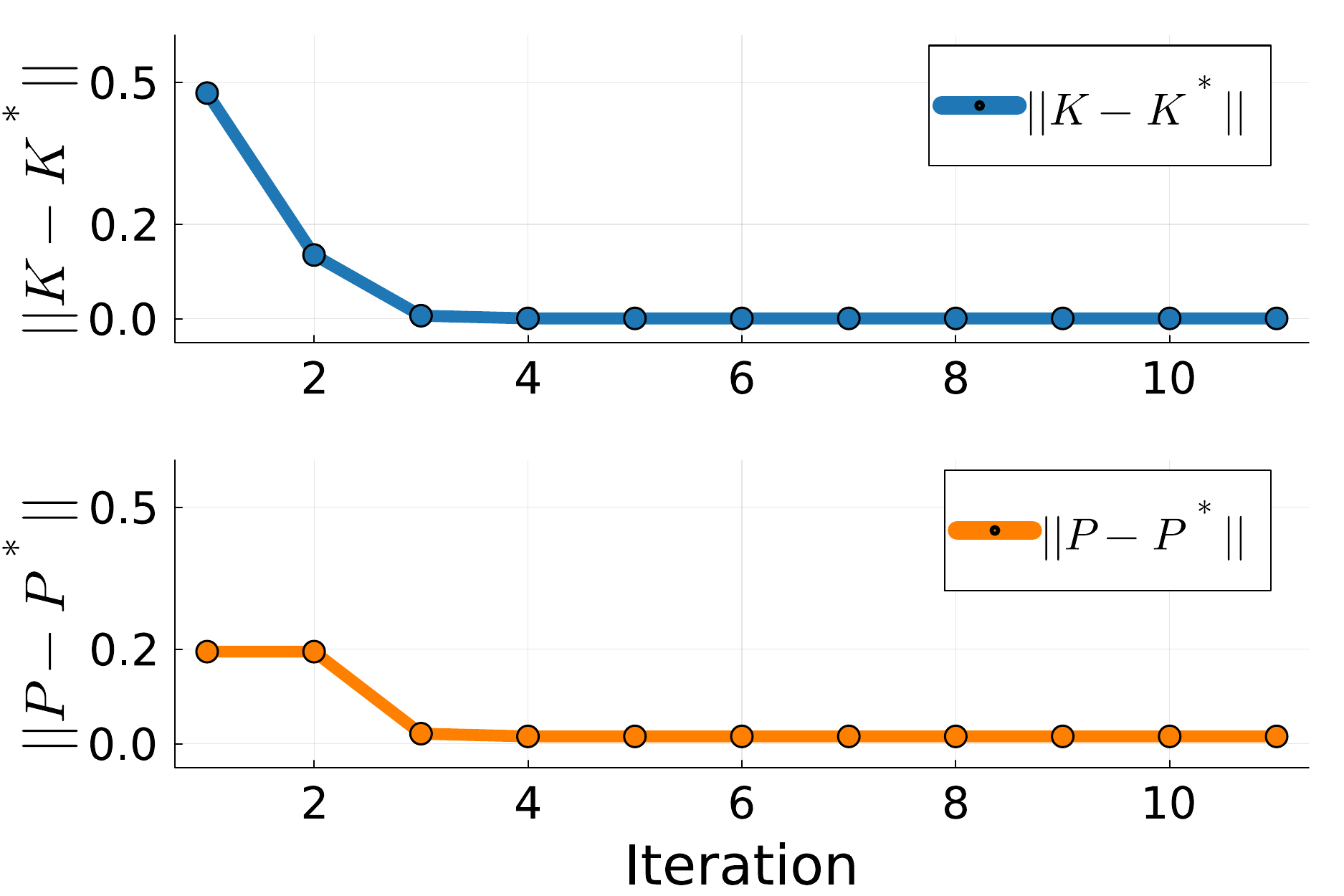}}
    \caption{ $P^i$ and $K_1^i$ using Alg~\ref{Alg 2} for the LQR case.}
    \label{fig:F16_Al2_KP}
\end{figure}

\begin{figure}
    \centering
    {\includegraphics[width=3.3in]{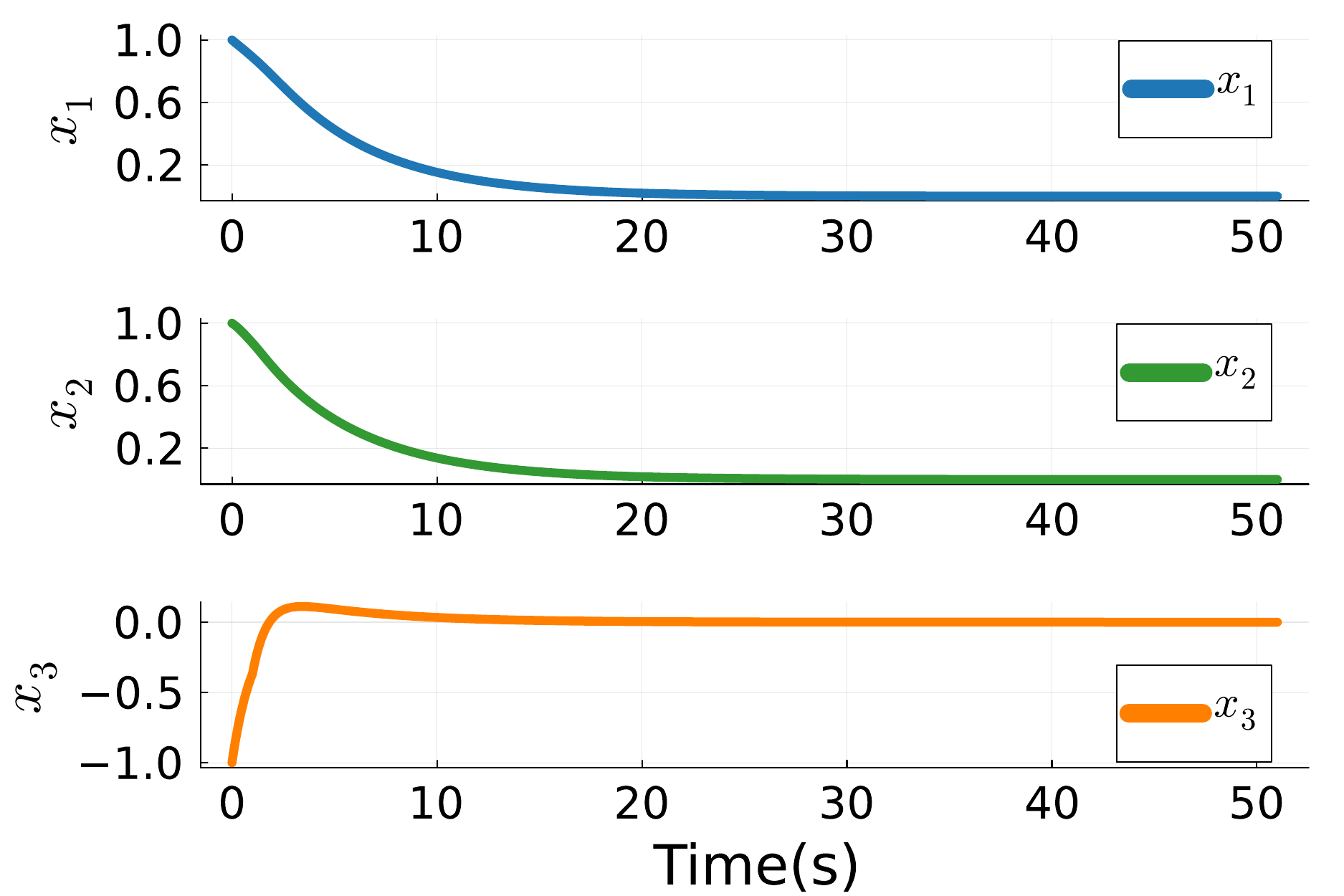}}
    \caption{ F-16 aircraft system state using Alg~\ref{Alg 2} for the LQR case.}
    \label{F16_Al2_state}
\end{figure}

\begin{figure}
    \centering
    {\includegraphics[width=3.3in]{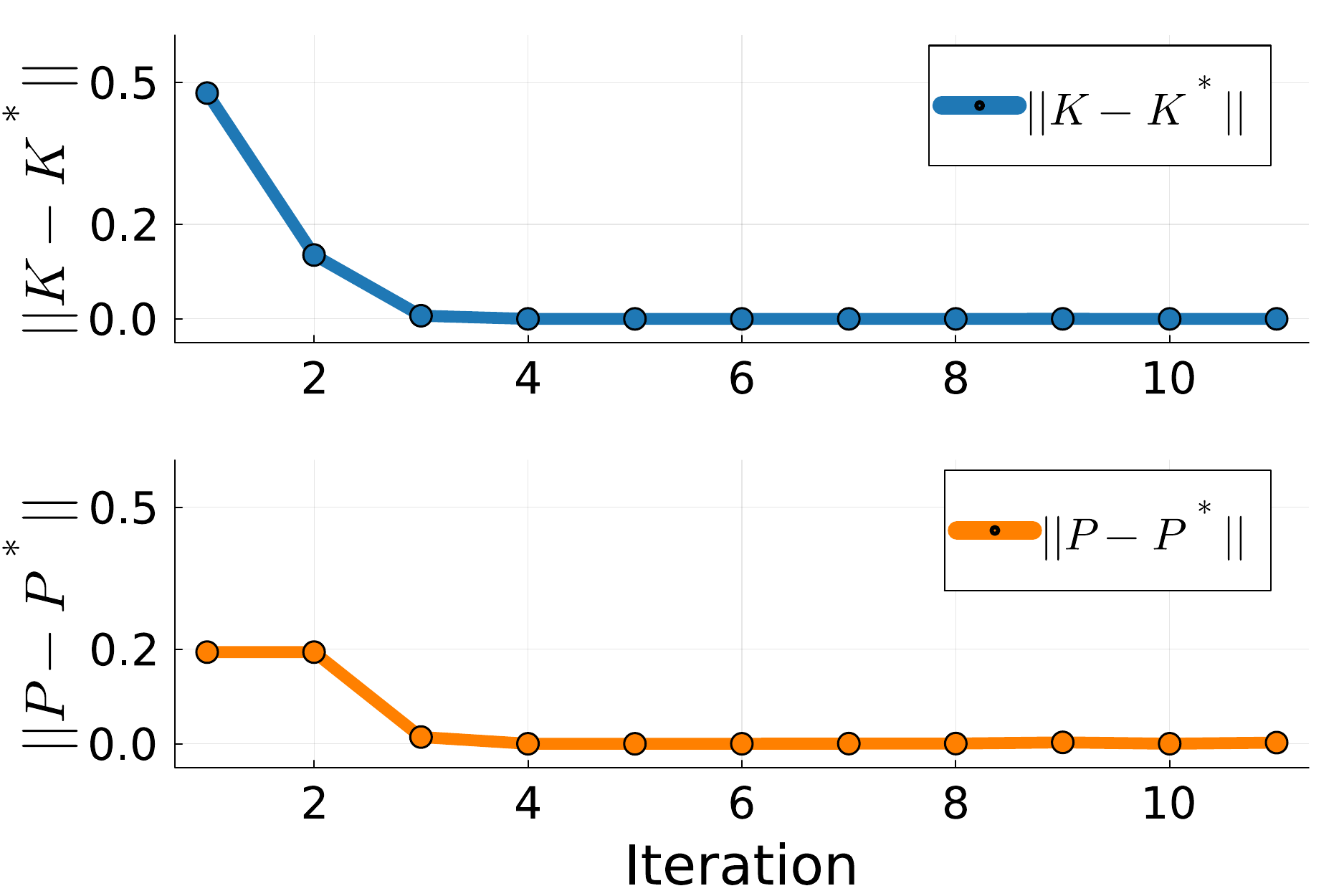}}
    \caption{ $P^i$ and $K_1^i$ using Alg~\ref{Alg 3} for the LQR case.}
    \label{fig:F16_Al3_KP}
\end{figure}

\begin{figure}
    \centering
    {\includegraphics[width=3.3in]{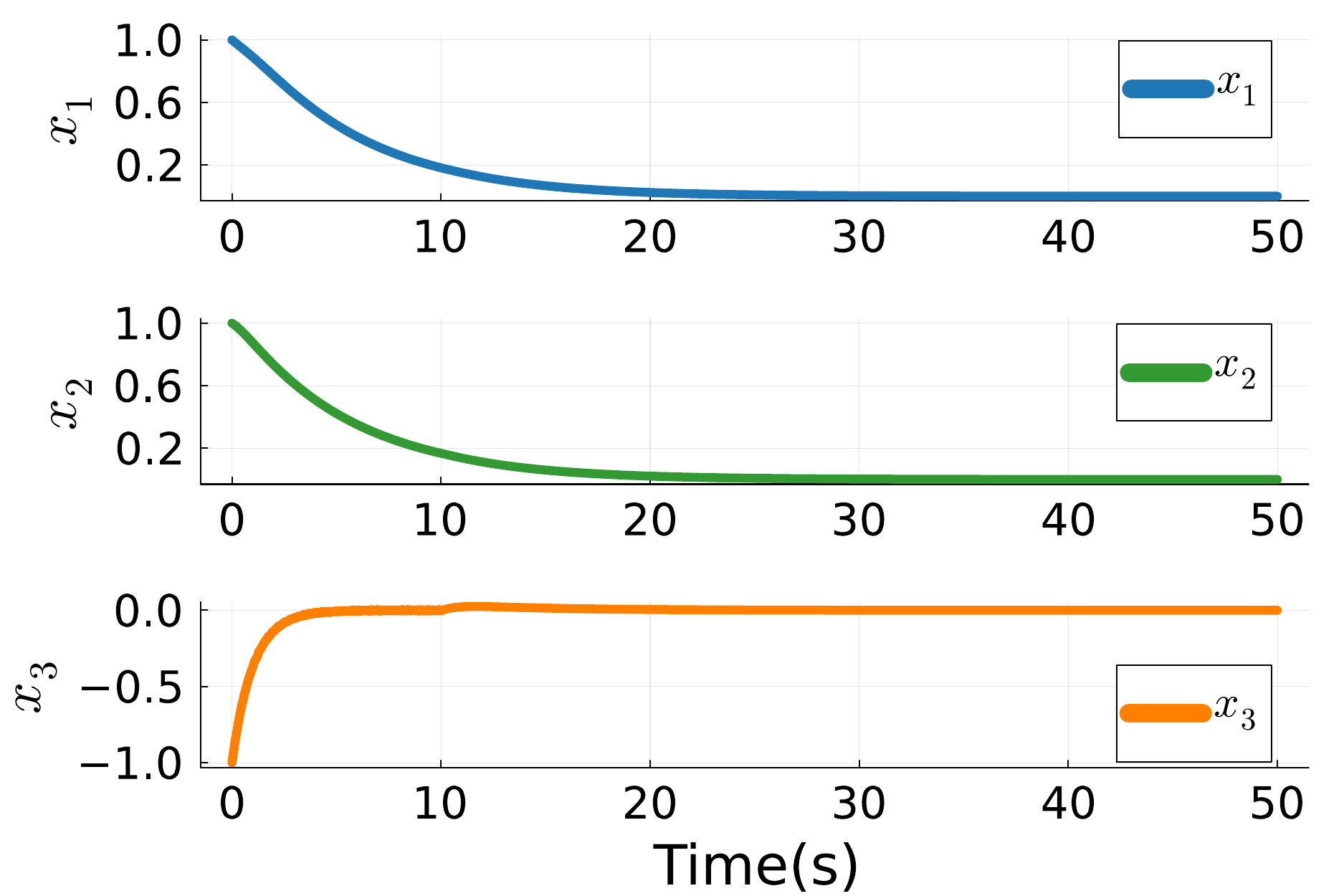}}
    \caption{F-16 aircraft system state using Alg~\ref{Alg 3} for the LQR case.}
    \label{F16_Al2_state}
\end{figure}

\begin{figure}
    \centering
    {\includegraphics[width=3.3in]{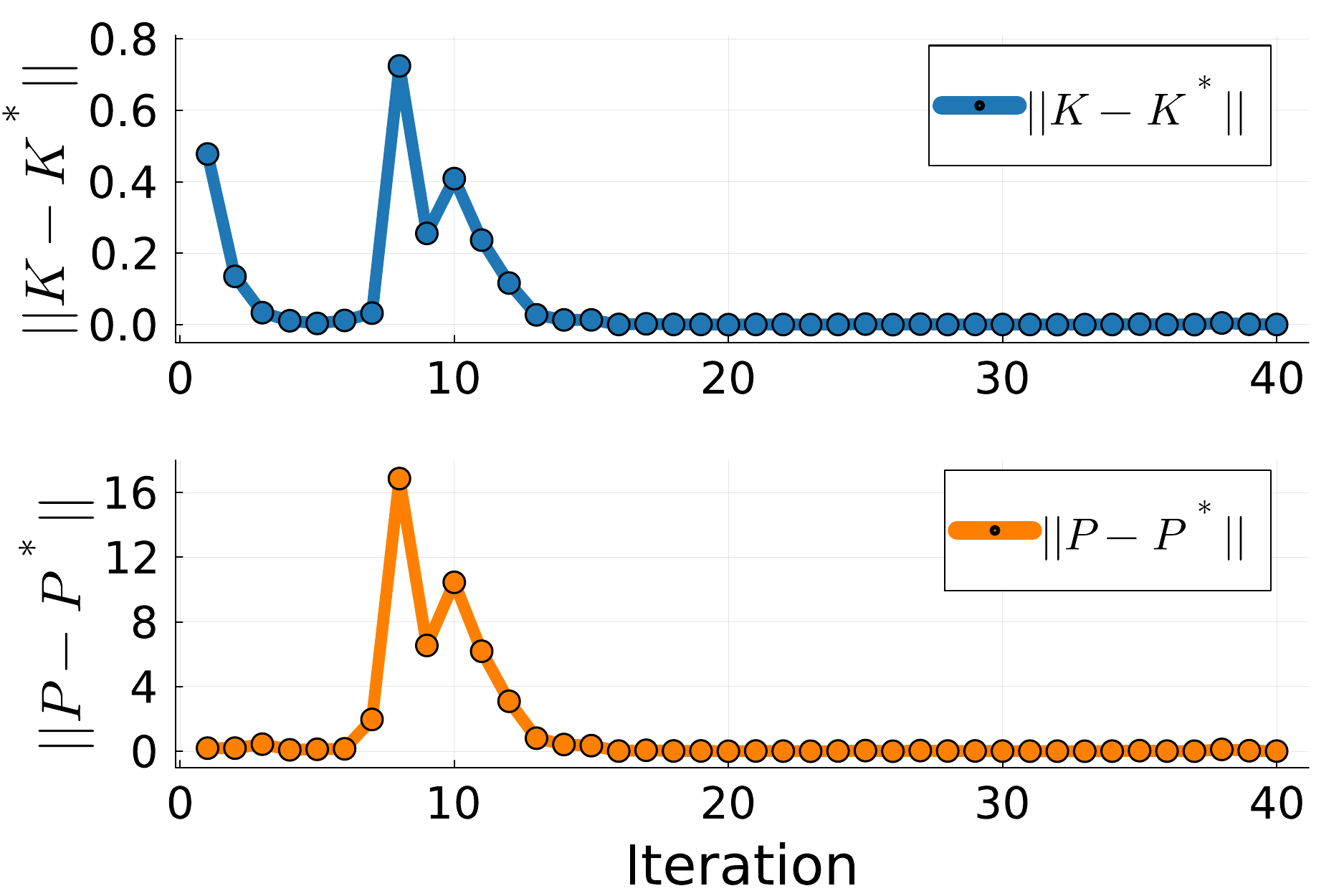}}
    \caption{ $P^i$ and $K_1^i$ using Alg~\ref{Alg 4} for the LQR case.}
    \label{fig:F16_Al4con_KP}
\end{figure}

\begin{figure}
    \centering
    {\includegraphics[width=3.3in]{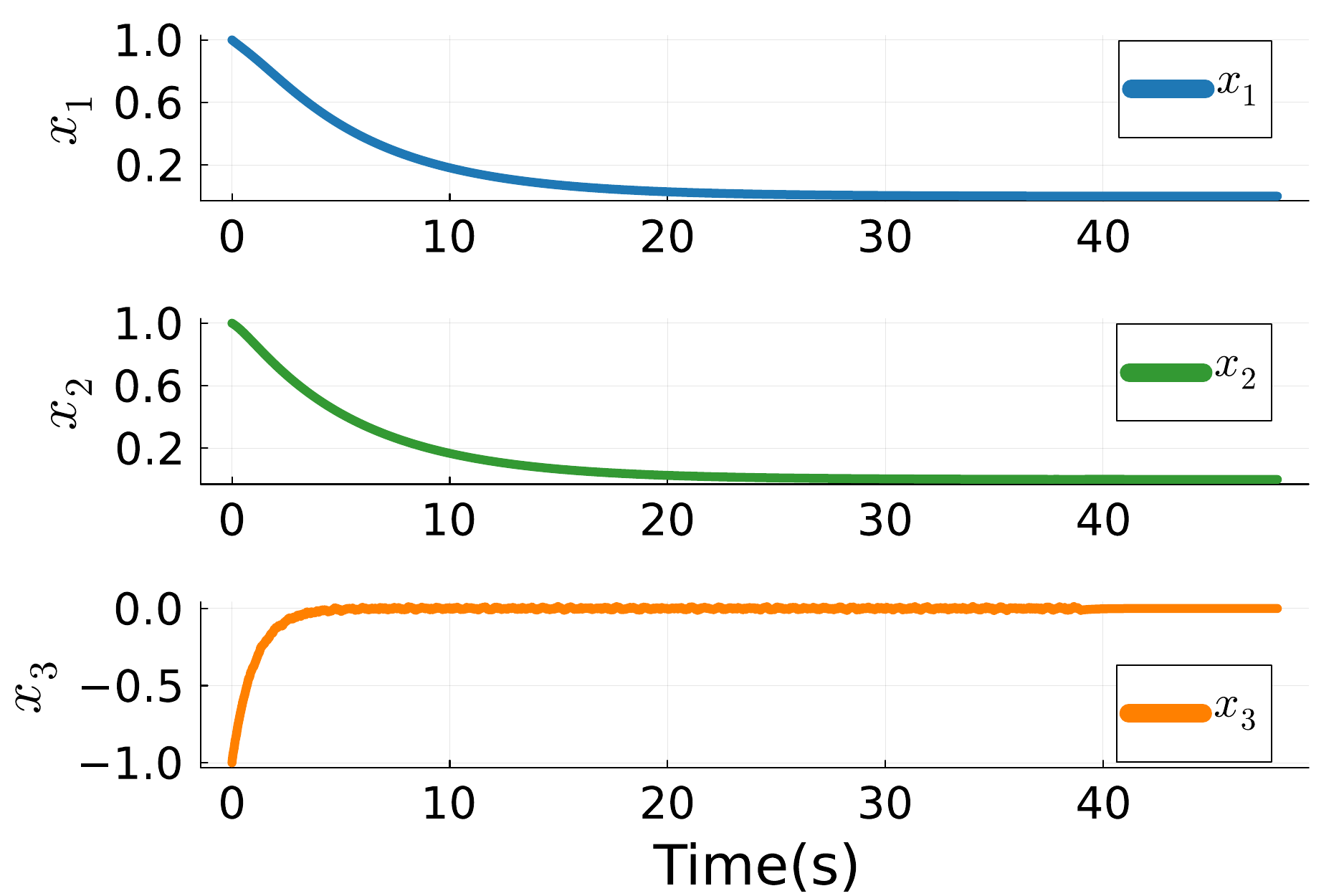}}
    \caption{F-16 aircraft system state using Alg~\ref{Alg 4} for the LQR case.}
    \label{F16_Al4con_state}
\end{figure}

\begin{figure}
    \centering
    {\includegraphics[width=3.3in]{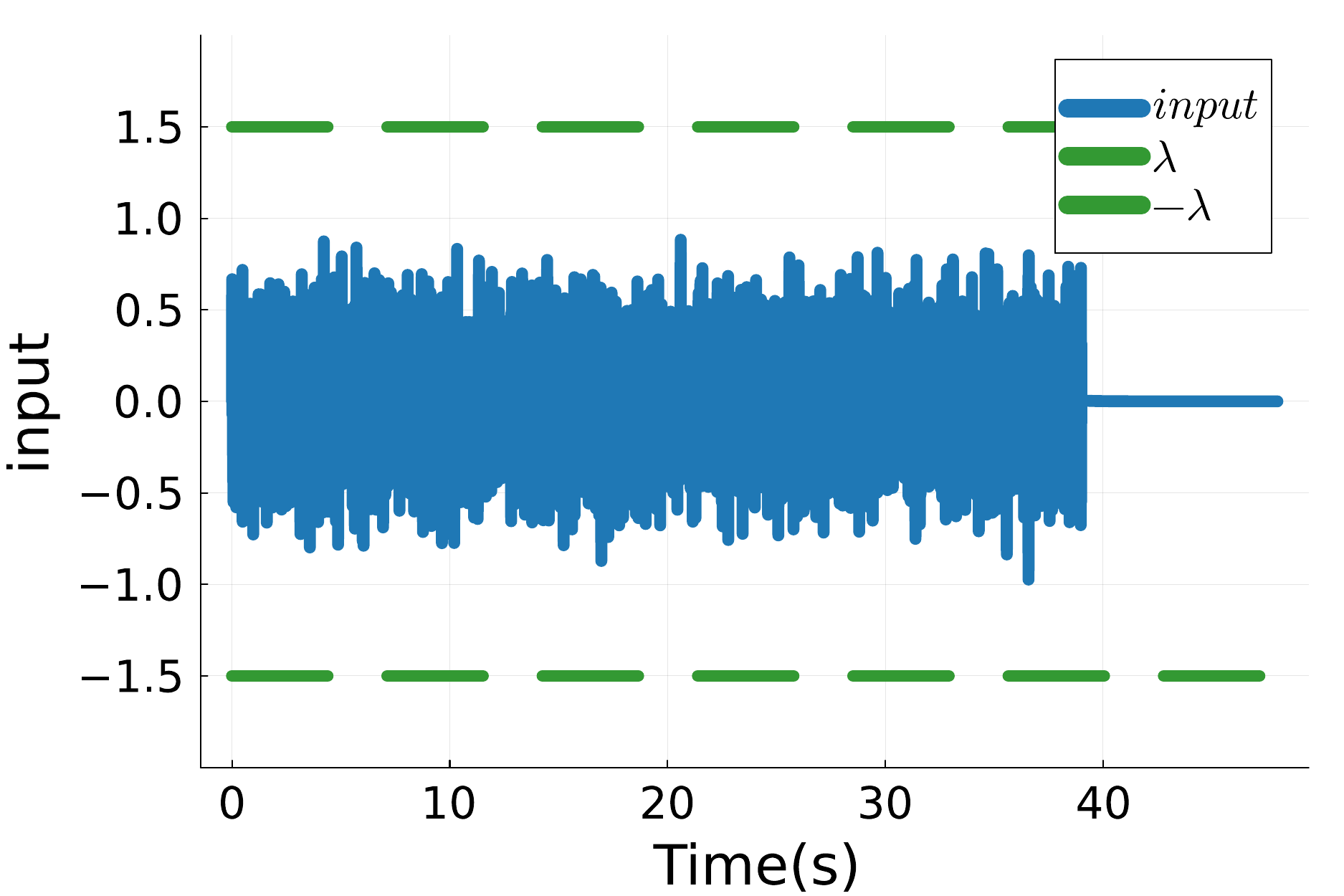}}
    \caption{Input of model using Alg~\ref{Alg 4} for the LQR case.}
    \label{F16_Al4con_input}
\end{figure}

\begin{figure}
    \centering
    {\includegraphics[width=3.5in]{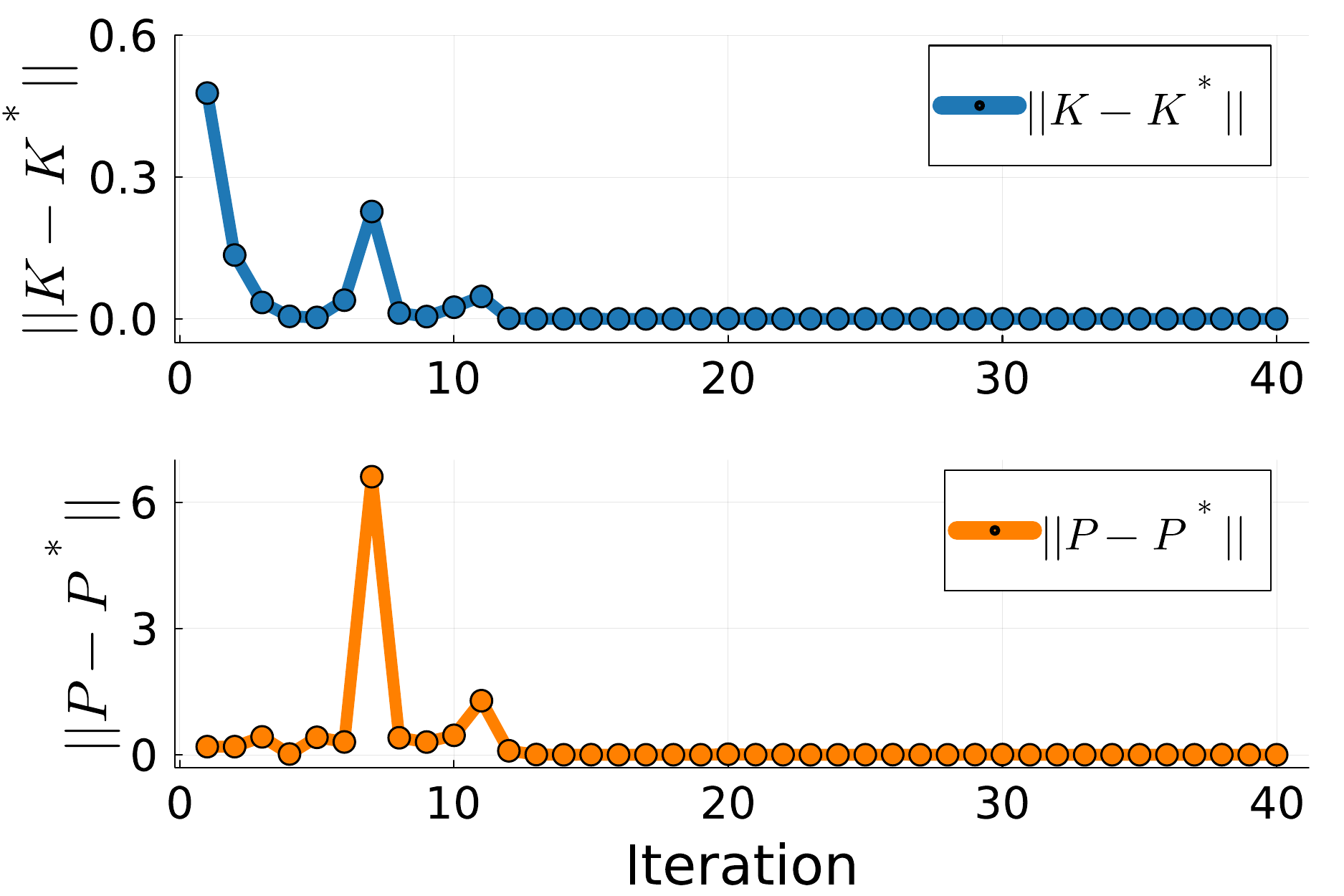}}
    \caption{ $P^i$ and $K_1^i$ using Alg~\ref{Alg 5} for the LQR case.}
    \label{fig:F16_Al5con_KP}
\end{figure}

\begin{figure}
    \centering
    {\includegraphics[width=3.5in]{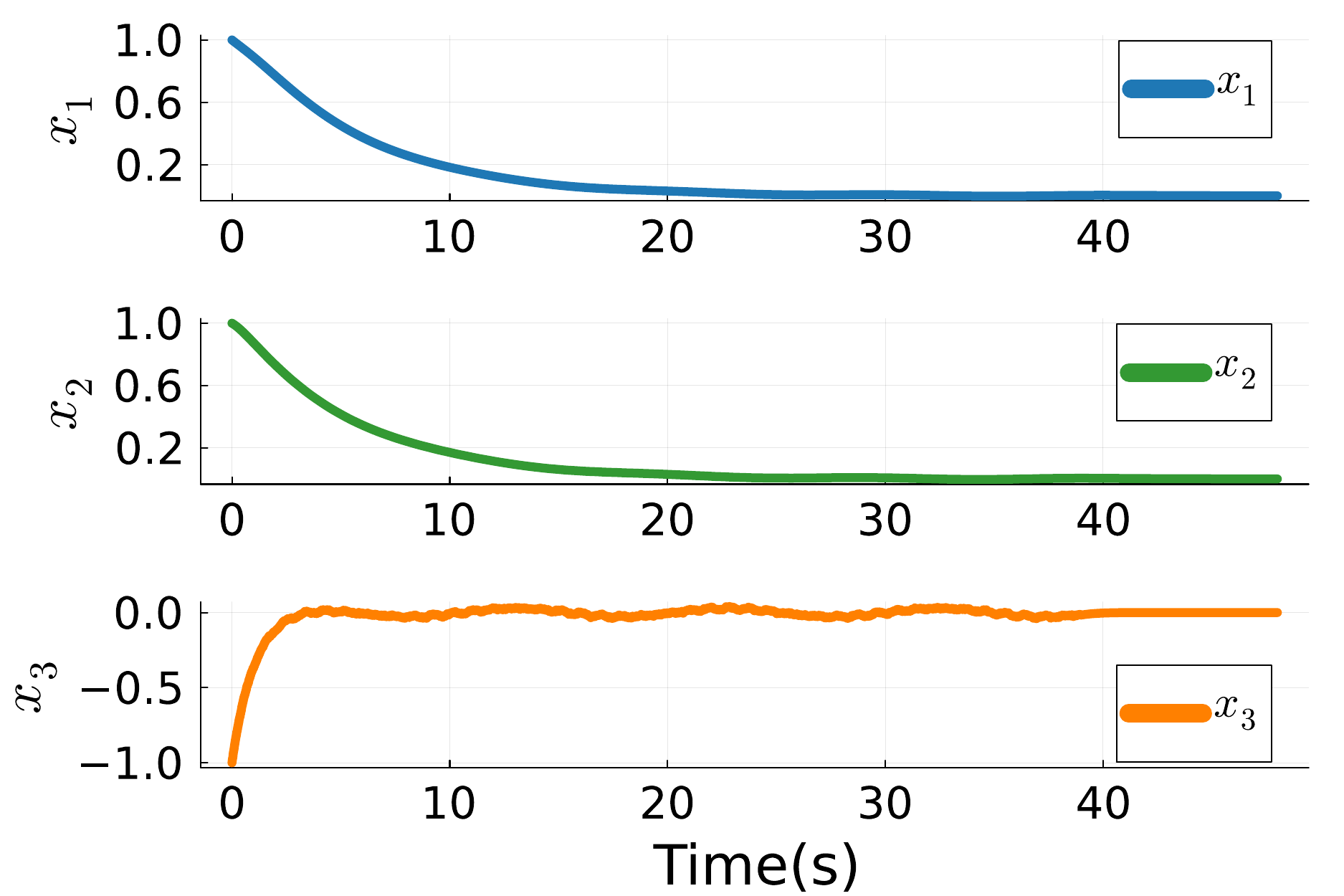}}
    \caption{F-16 aircraft system state using Alg~\ref{Alg 5} for the LQR case.}
    \label{F16_Al5con_state}
\end{figure}

\begin{figure}
    \centering
    {\includegraphics[width=3.3in]{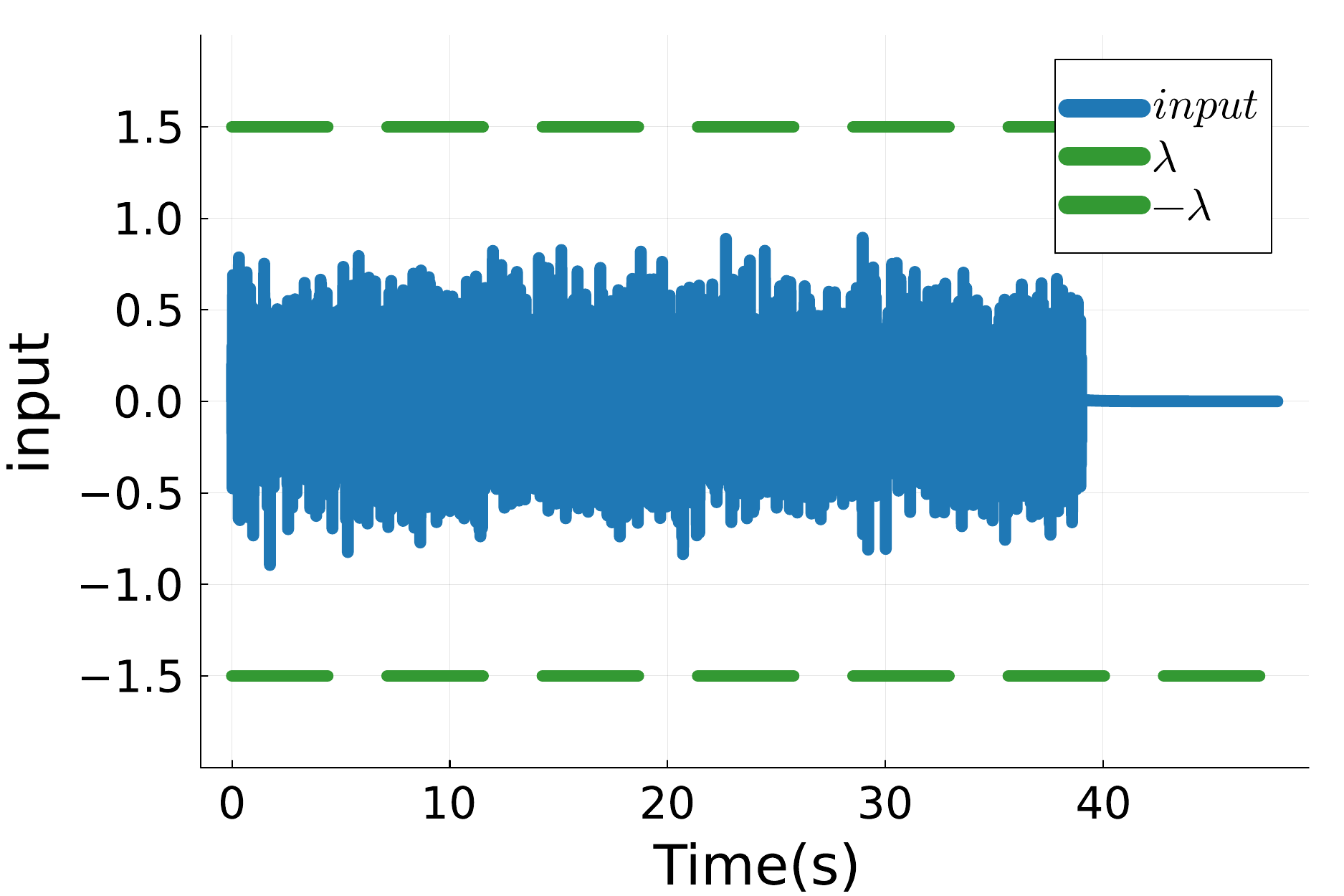}}
    \caption{Input of model using Alg~\ref{Alg 5} for the LQR case.}
    \label{F16_Al5con_input}
\end{figure}

\begin{figure}
    \centering
    {\includegraphics[width=3.3in]{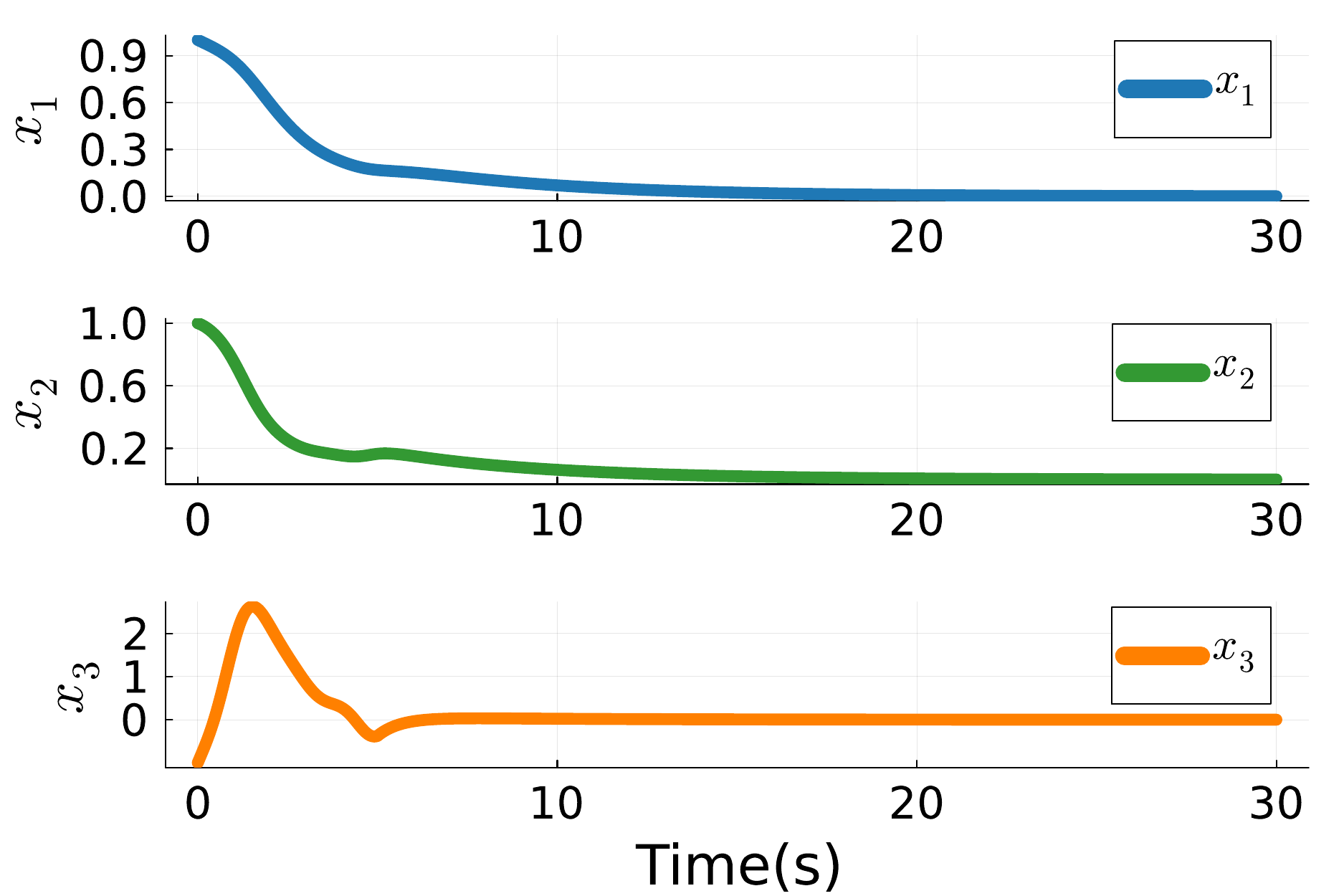}}
    \caption{F-16 aircraft system state using Online Q-learning \cite{VAMVOUDAKIS201714} for the LQR case.}
    \label{fig:comparation_3_state}
\end{figure}

\begin{figure}
    \centering
    {\includegraphics[width=3.3in]{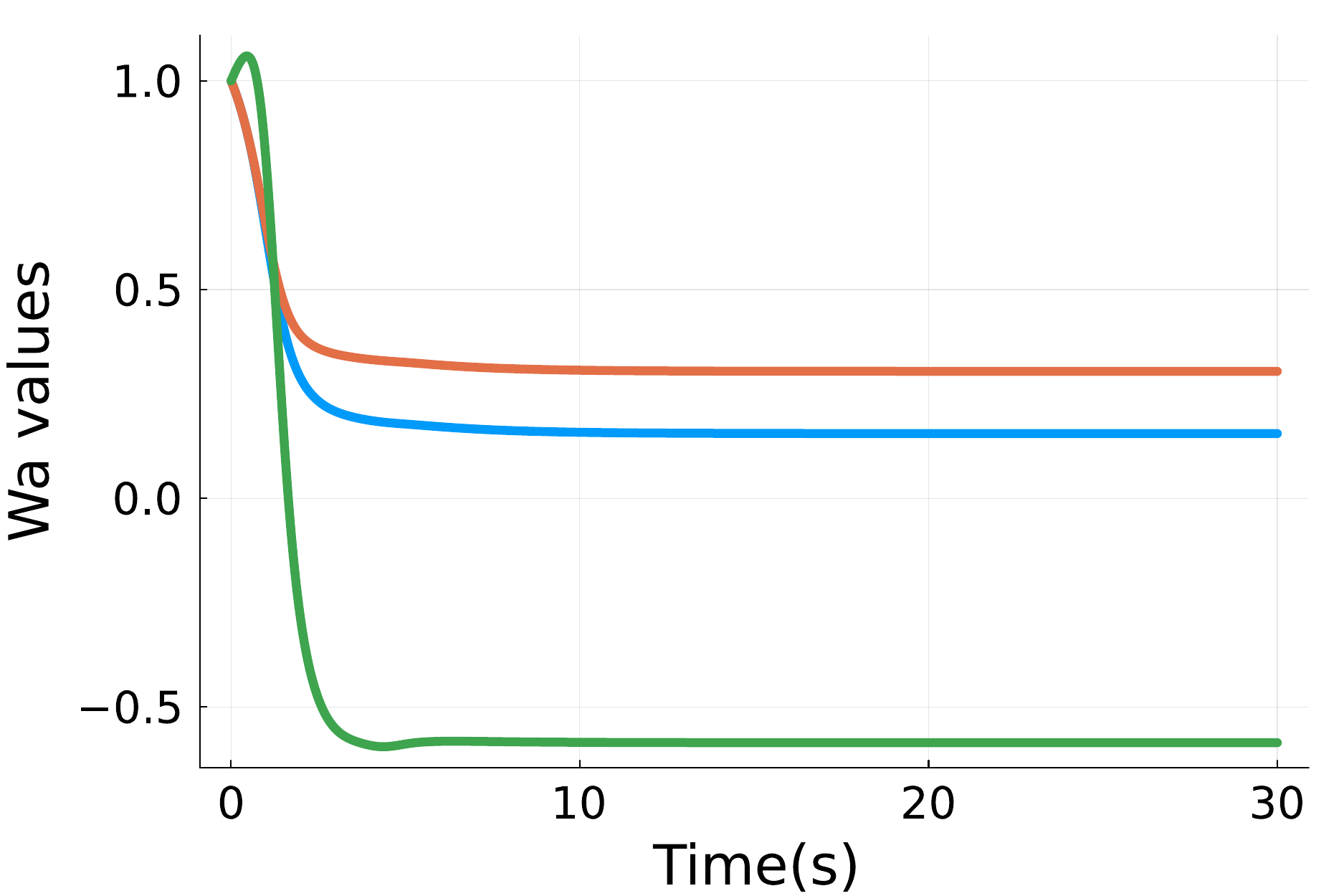}}
    \caption{ Actor weight of Online Q-learning \cite{VAMVOUDAKIS201714} for the LQR case.}
    \label{fig:comparation_3_Wa}
\end{figure}

\begin{figure}
    \centering
    {\includegraphics[width=3.3in]{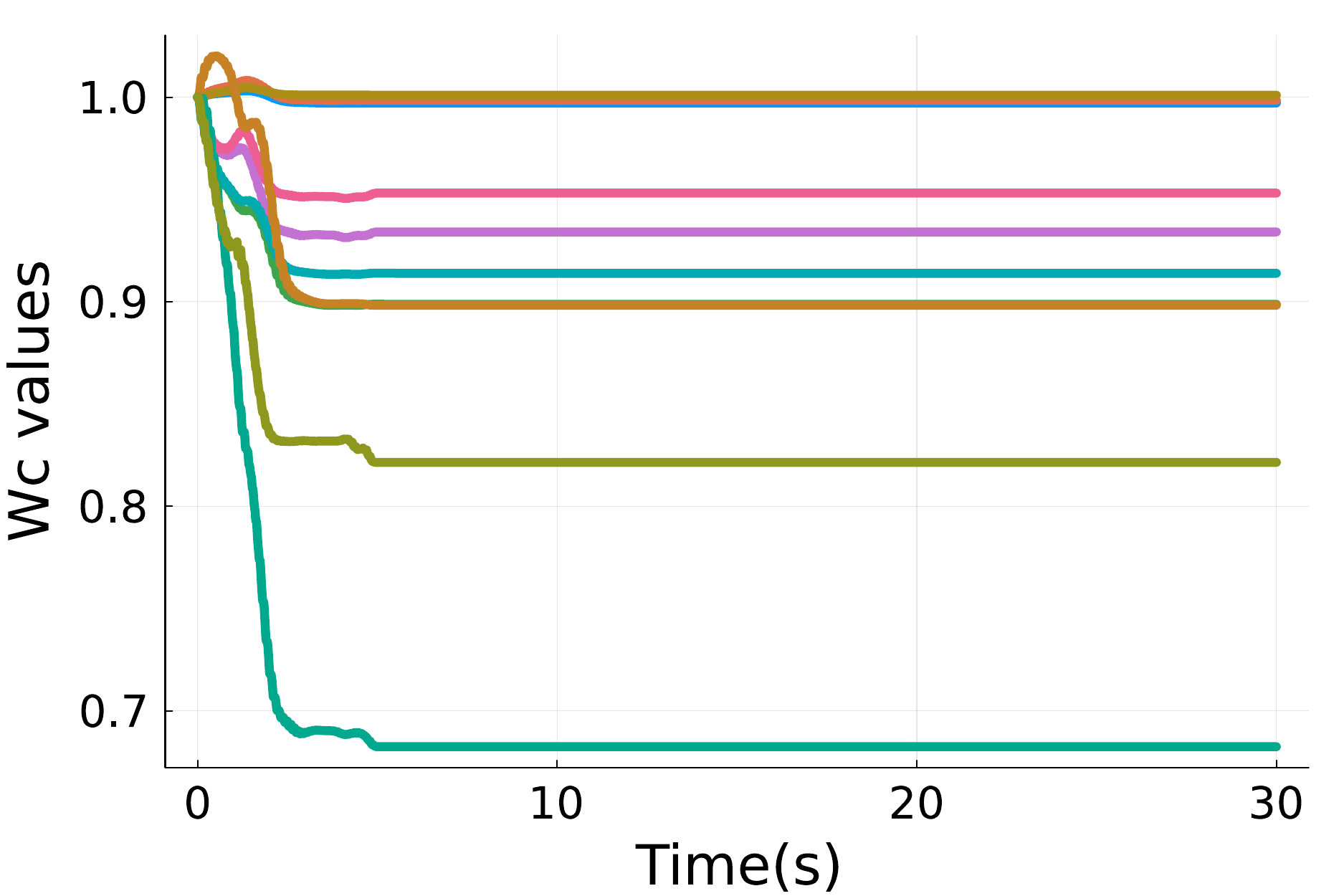}}
    \caption{ Critic weight of Online Q-learning \cite{VAMVOUDAKIS201714} for the LQR case.}
    \label{fig:comparation_3_Wc}
\end{figure}

\begin{figure}
    \centering
    {\includegraphics[width=3.3in]{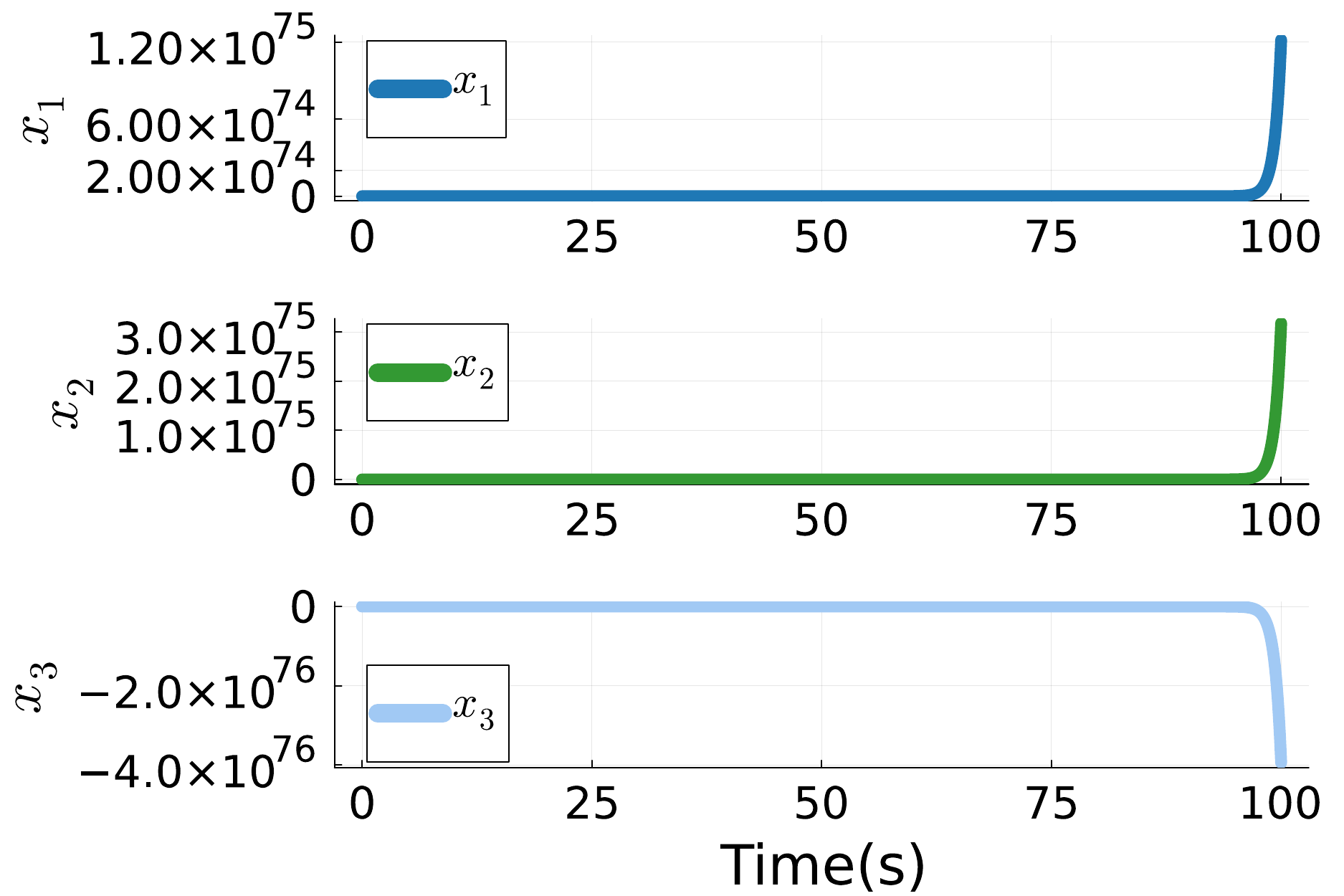}}
    \caption{F-16 aircraft system state using On-policy Q-learning \cite{Possieri} for the LQR case.}
    \label{fig:comparation_4_state}
\end{figure}

\begin{figure}
    \centering
    {\includegraphics[width=3.3in]{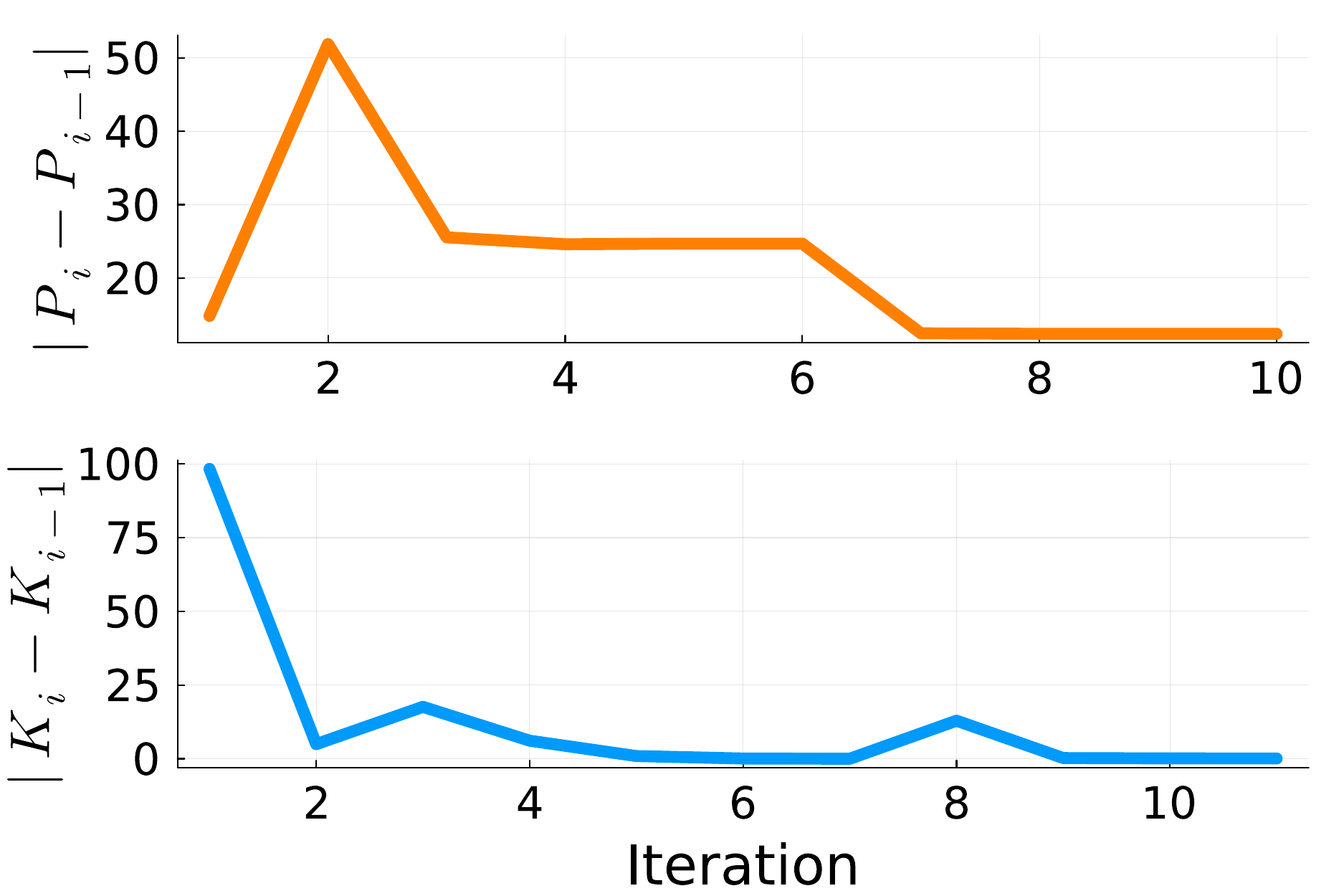}}
    \caption{ $P^i$ and $K_1^i$ using On-policy Q-learning for the LQR case.}
    \label{fig:comparation_4_KP}
\end{figure}

The weights of the value function and the policy for Algorithms~\ref{Alg 2} and~\ref{Alg 3}, shown in Figures~\ref{fig:F16_Al2_KP} and~\ref{fig:F16_Al3_KP}, demonstrate that both algorithms converge in approximately four iterations, with a comparable number of iterations required. Although both methods achieve convergence in a similar number of iterations, as they are merely different implementations of the Kleinman algorithm \textcolor{blue}{\cite{Kleinman}}, Alg~\ref{Alg 1} is more data efficient. Specifically, it collects data within a single second, whereas Alg~\ref{Alg 2} requires four seconds. Both algorithms exhibit higher accuracy and faster convergence compared to the benchmark methods in \cite{Possieri} and \cite{VAMVOUDAKIS201714}. As observed in Figures~\ref{fig:comparation_3_state}-\ref{fig:comparation_4_KP}, the convergence time for Online Q-learning is 6 seconds, while On-policy Q-learning takes 9 seconds. However, both methods exhibit bias in the weights of the value function and the policy. Furthermore, Figure~\ref{fig:comparation_4_state} highlights the instability of the system, indicating that the On-policy method is unsuitable when exploration noise is introduced during training. Furthermore, while Algorithms~\ref{Alg 4} and~\ref{Alg 5} require a longer convergence time, as shown in Figures~\ref{fig:F16_Al4con_KP} and~\ref{fig:F16_Al5con_KP}, they ensure system stability (Figures~\ref{F16_Al4con_state} and \ref{F16_Al5con_state}) and maintain control input within specified constraints (Figures~\ref{F16_Al4con_input} and \ref{F16_Al5con_input}).

\subsection{F-16 tracking control problem}
In this case, the matrices $A$ and $B$ of the F-16 aircraft model are defined similarly to the feedback control problem, and the reference signal $\dot{x}_r(t)=A_rx_r(t)$ is given as:
\begin{align}
    \dot{x}_r(t)=\begin{bmatrix}
-2 & 0 & 0 \\
0 & -2 & 0 \\
0 & 0 & 0
    \end{bmatrix}x_r(t)
\end{align}
where $x_r(0)=[0,0,0.5]^\top$.
As this setup is specifically designed for Algorithms~\ref{Alg 4} and \ref{Alg 5}, we will implement these algorithms for tracking control of the F-16 aircraft with input constraint $\lambda =10$ to evaluate their performance. To assess the effectiveness of the proposed methods, the in-policy Q-learning algorithm \cite{Possieri} and the Online Q-learning algorithm \cite{VAMVOUDAKIS201714} are also simulated for the F-16 to track reference \( x_r(t) \) for comparison. For this case, the activation function for Algorithms~\ref{Alg 4} and \ref{Alg 5} is selected as: $\Phi(X(t))=X(t)\otimes_S X(t)$.

\begin{figure}
    \centering
    {\includegraphics[width=3.3in]{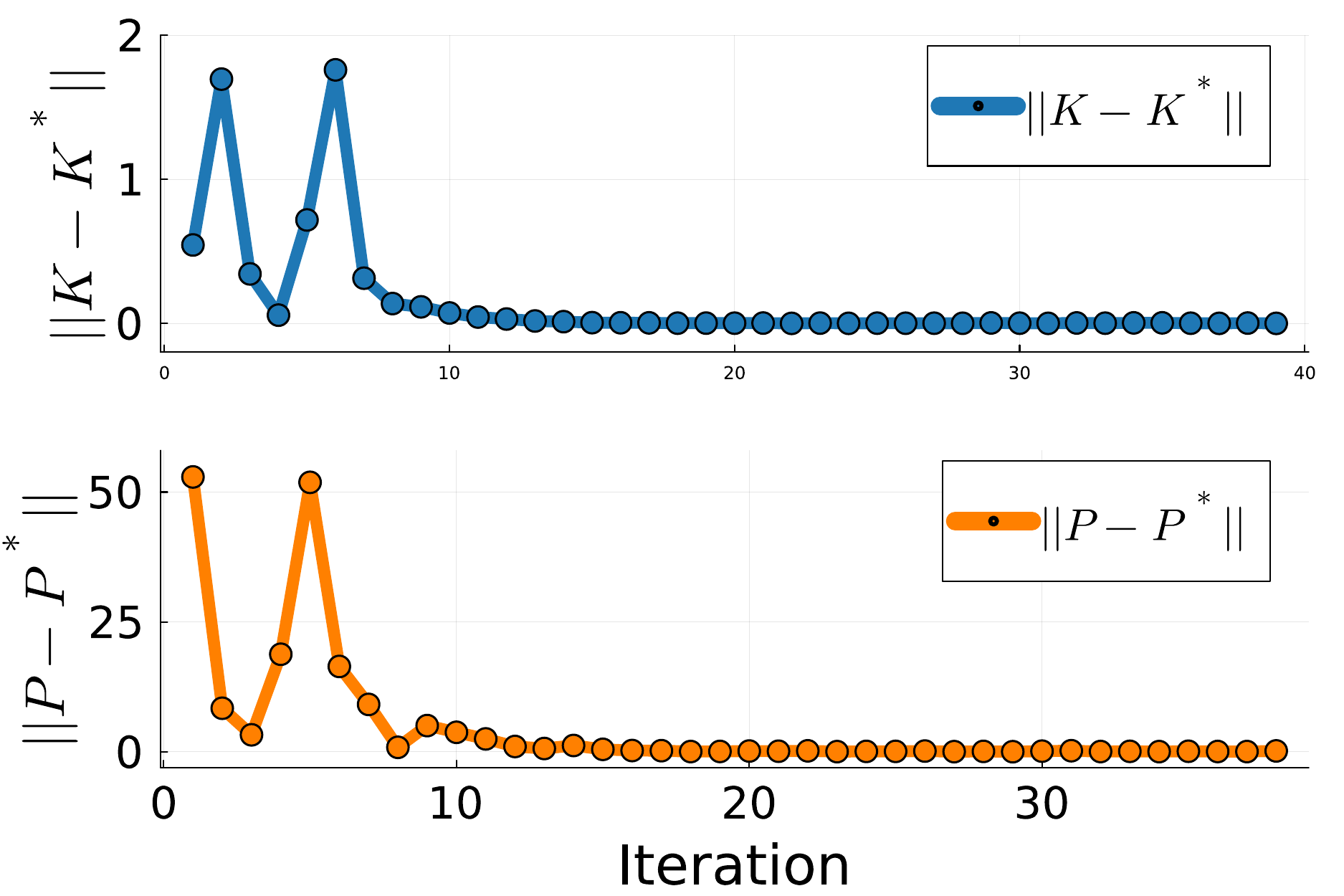}}
    \caption{ $P^i$ and $K_1^i$ using Algorithm 4 for the Tracking Control case.}
    \label{fig:F16_Al4con_KP_track}
\end{figure}

\begin{figure}
    \centering
    {\includegraphics[width=3.3in]{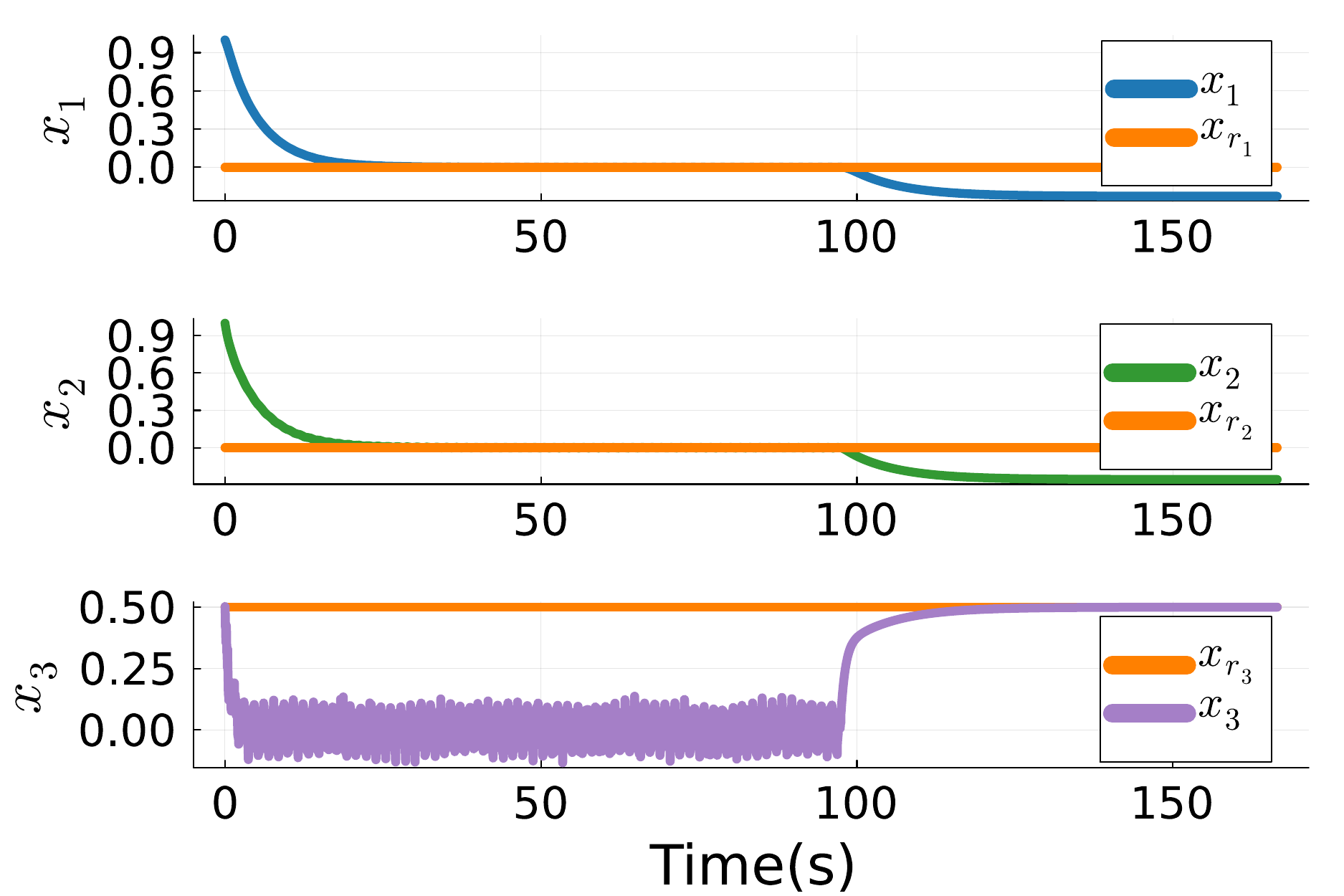}}
    \caption{ F-16 aircraft system state using Algorithm 4 for the Tracking Control case.}
    \label{F16_Al4con_state_track}
\end{figure}

\begin{figure}
    \centering
    {\includegraphics[width=3.3in]{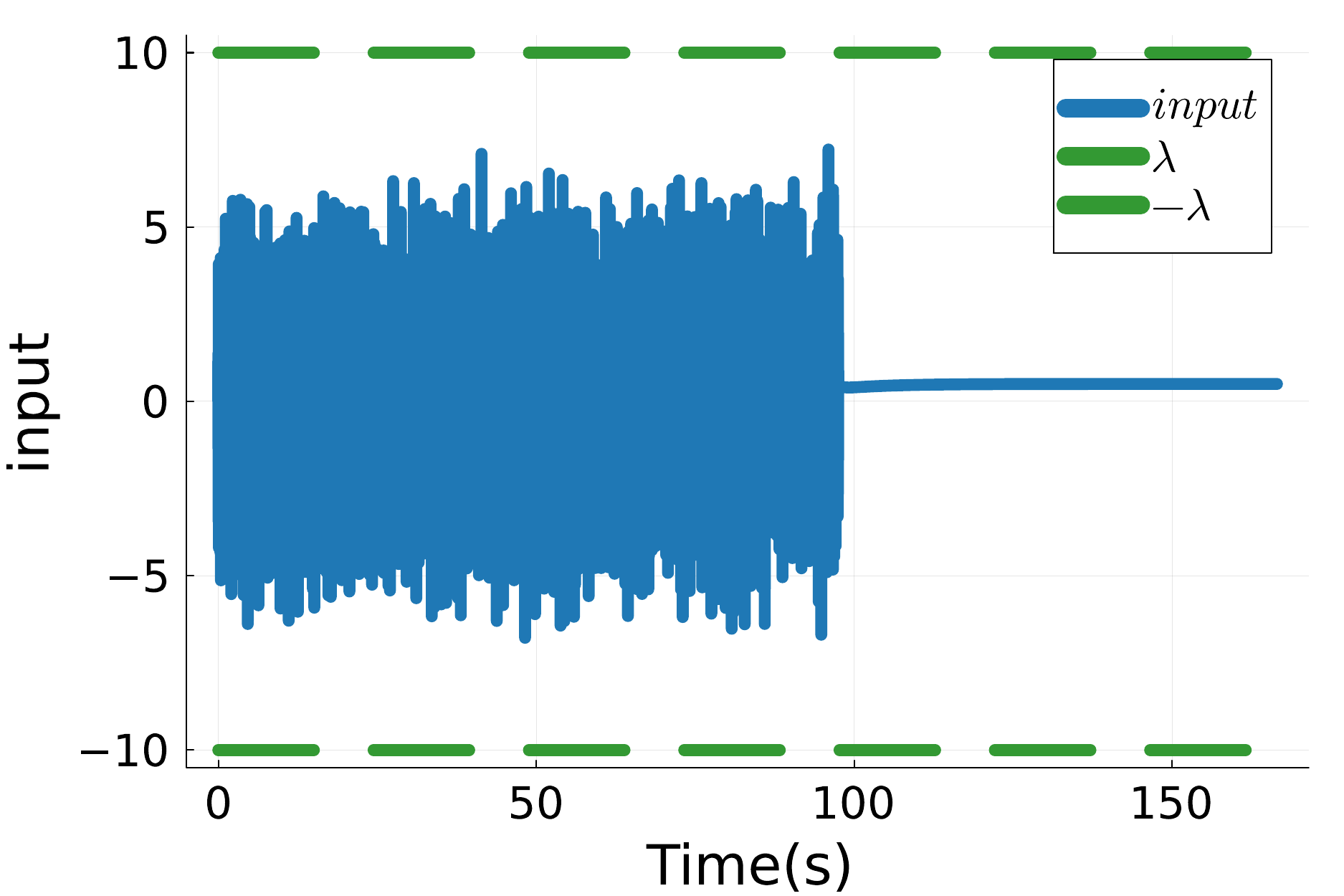}}
    \caption{Input of model using Algorithm 4 for the Tracking Control case.}
    \label{F16_Al4con_input_track}
\end{figure}

\begin{figure}
    \centering
    {\includegraphics[width=3.3in]{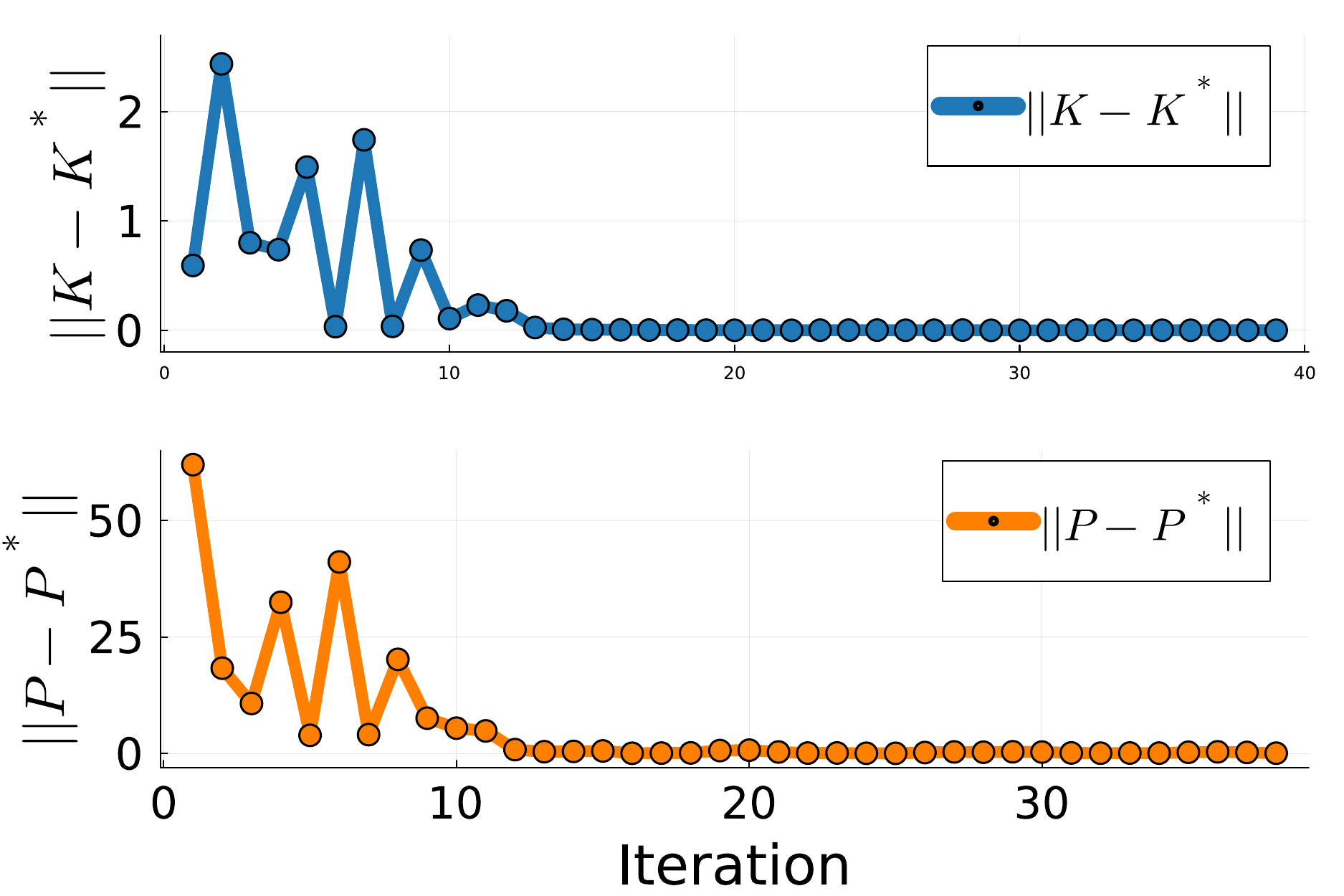}}
    \caption{ $P^i$ and $K_1^i$ using Algorithm 5 for the Tracking Control case.}
    \label{fig:F16_Al5con_KP_track}
\end{figure}

\begin{figure}
    \centering
    {\includegraphics[width=3.3in]{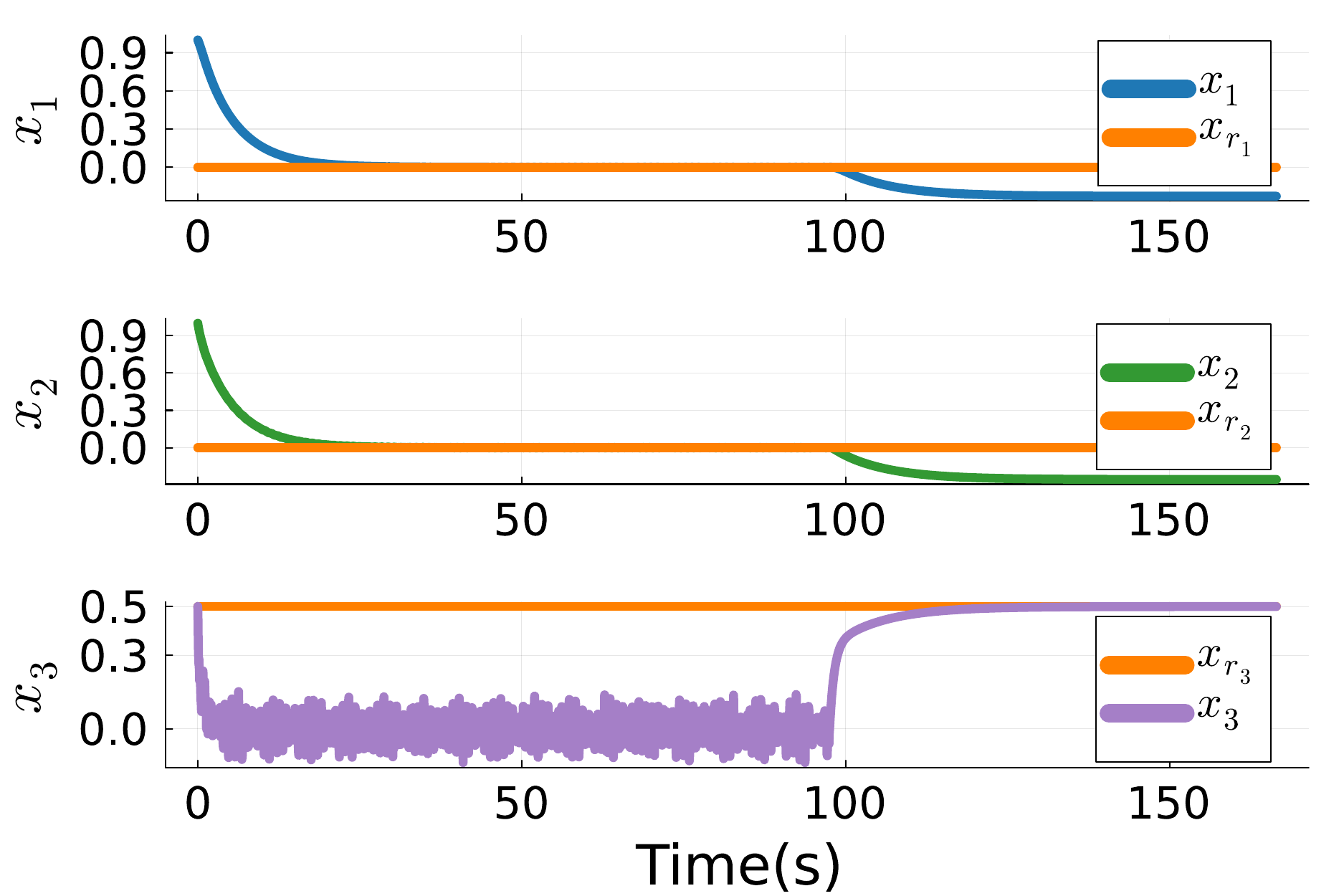}}
    \caption{ F-16 aircraft system state using Algorithm 4 for the Tracking Control case.}
    \label{F16_Al5con_state_track}
\end{figure}

\begin{figure}
    \centering
    {\includegraphics[width=3.3in]{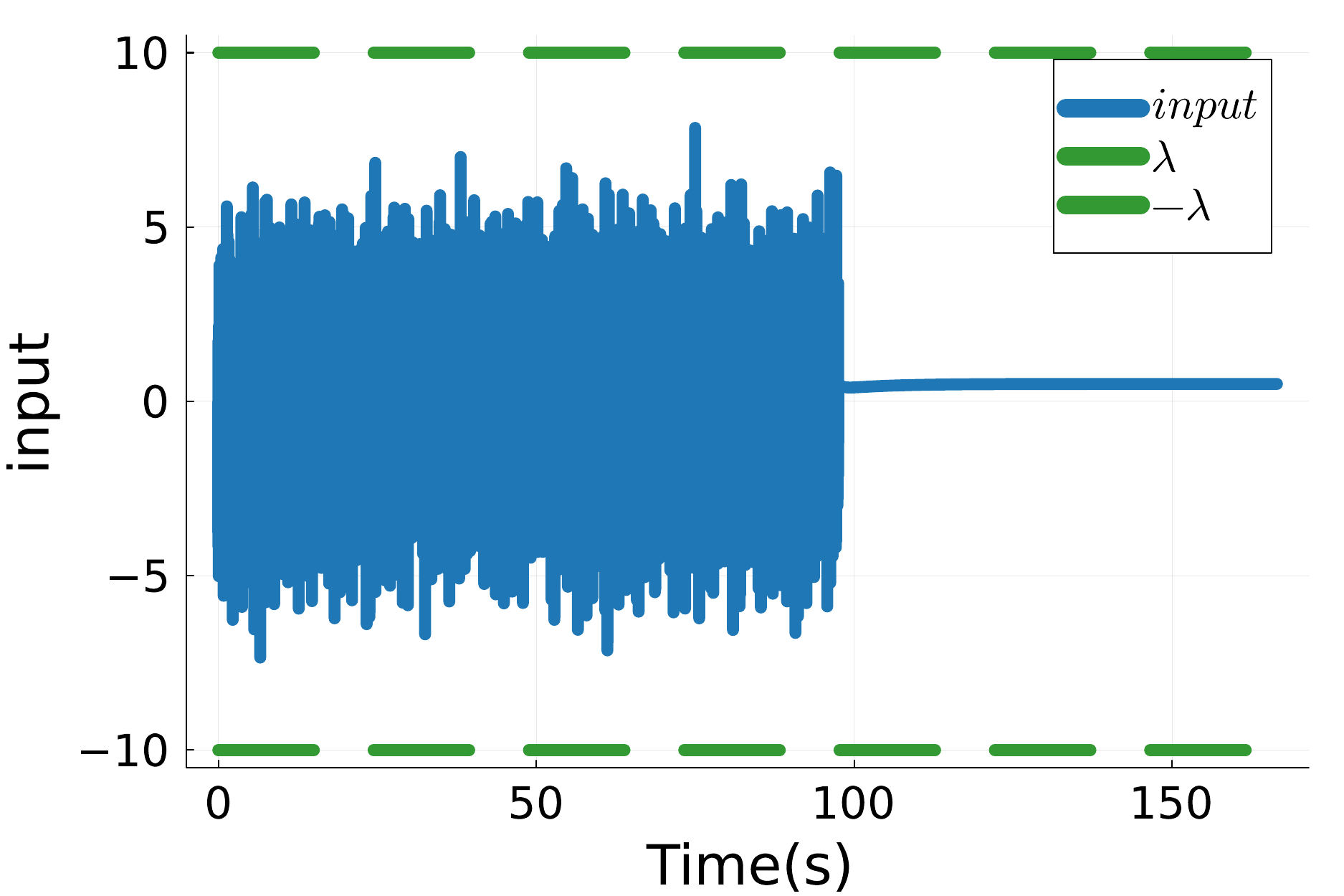}}
    \caption{Input of model using Algorithm 5 for the Tracking Control case.}
    \label{F16_Al5con_input_track}
\end{figure}

\begin{figure}
    \centering
    {\includegraphics[width=3.3in]{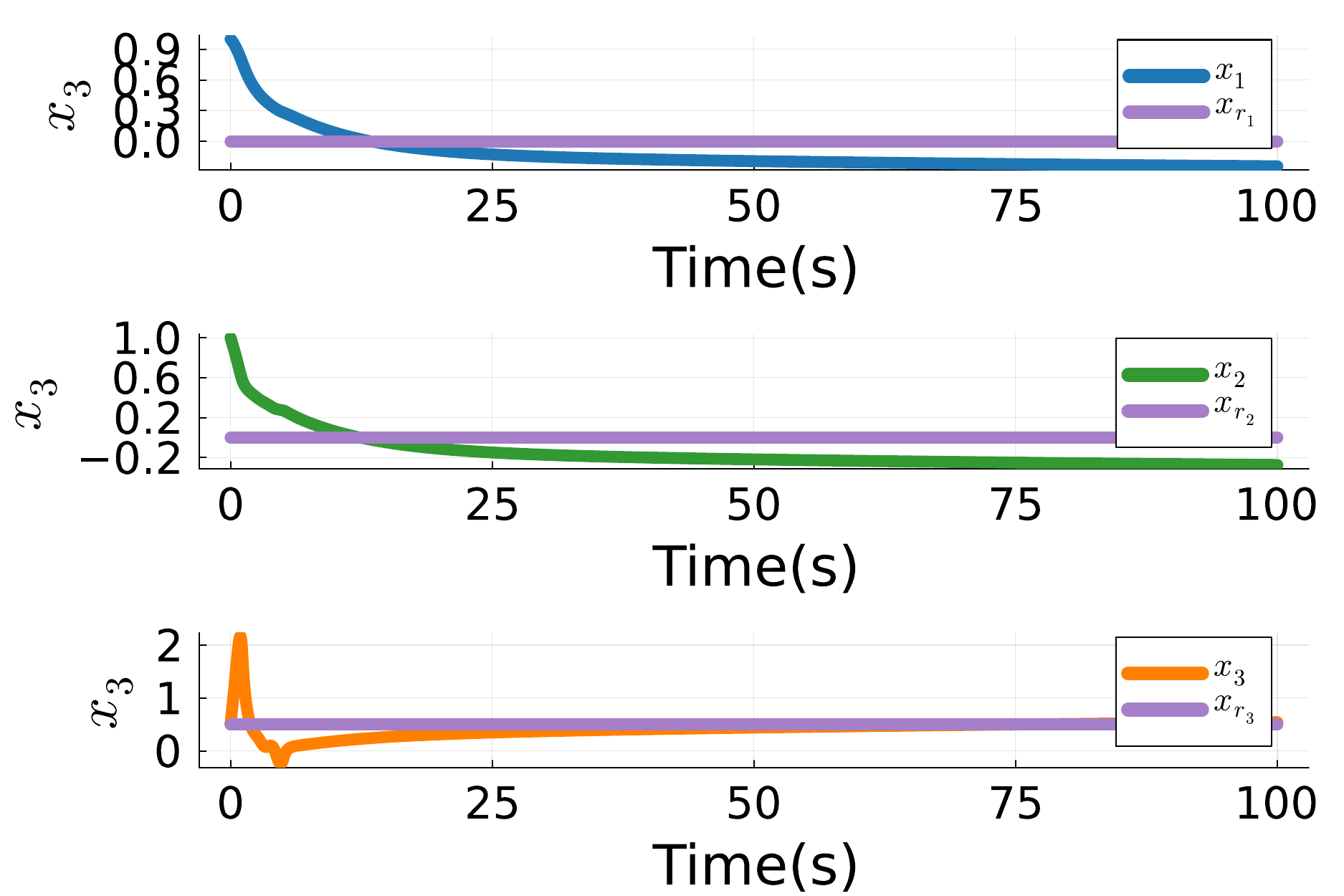}}
    \caption{ F-16 aircraft system state using Online Q-learning \cite{VAMVOUDAKIS201714} for the Tracking Control case.}
    \label{fig:comparation_1_state}
\end{figure}

\begin{figure}
    \centering
    {\includegraphics[width=3.3in]{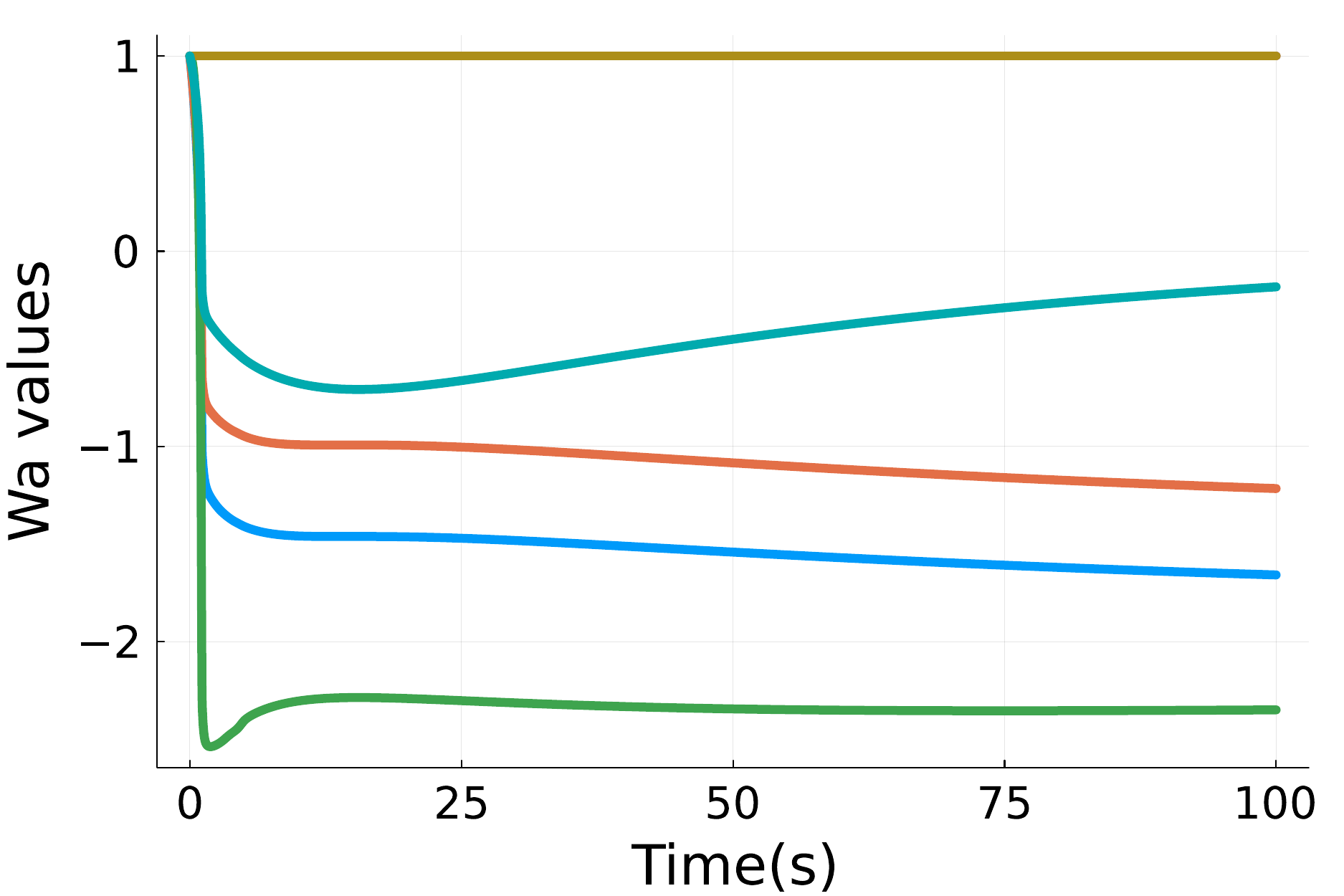}}
    \caption{ Actor weight of Online Q-learning \cite{VAMVOUDAKIS201714} for the Tracking Control case.}
    \label{fig:comparation_1_Wa}
\end{figure}

\begin{figure}
    \centering
    {\includegraphics[width=3.3in]{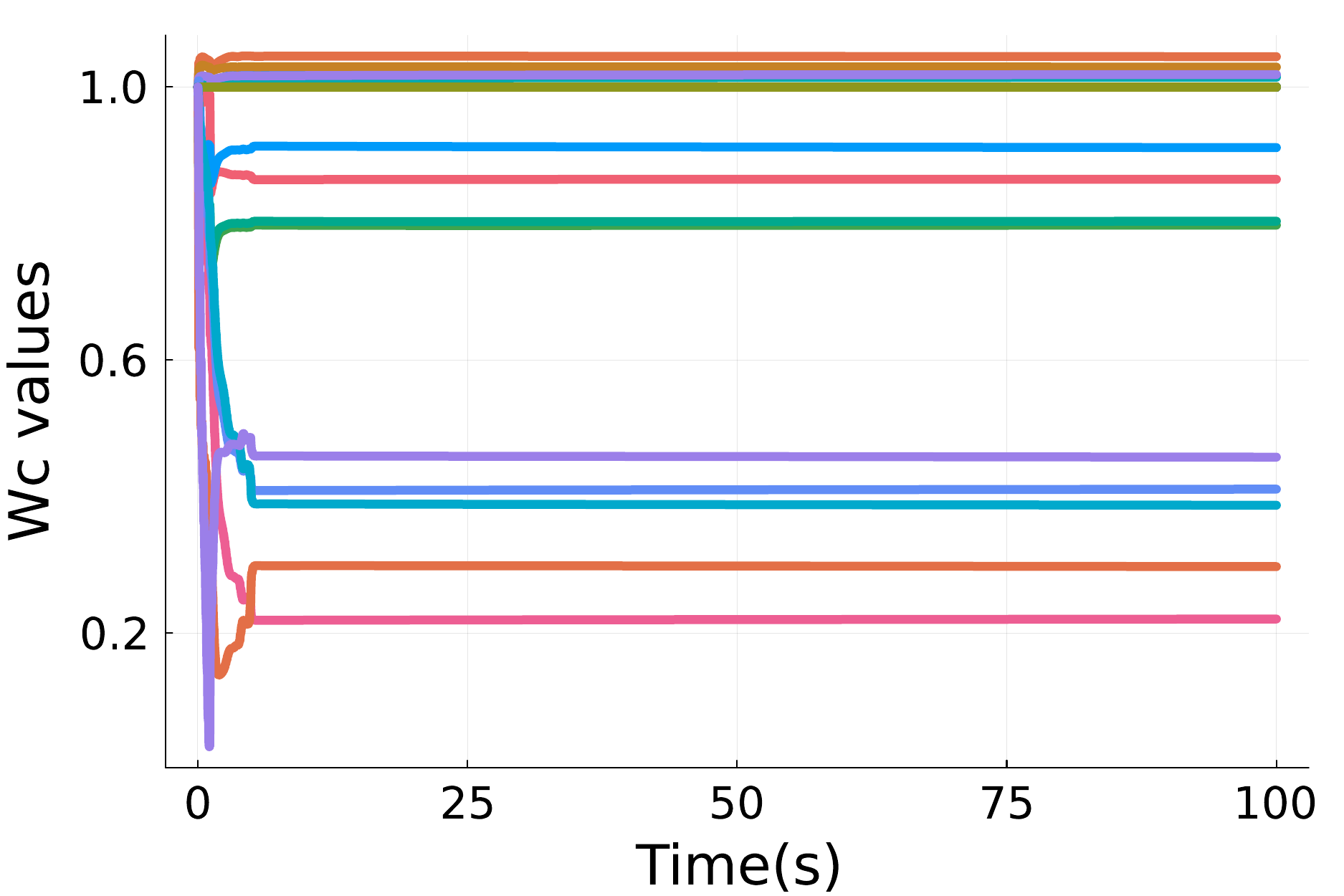}}
    \caption{ Critic weight of Online Q-learning \cite{VAMVOUDAKIS201714} for the Tracking Control case.}
    \label{fig:comparation_1_Wc}
\end{figure}

\begin{figure}
    \centering
    {\includegraphics[width=3.3in]{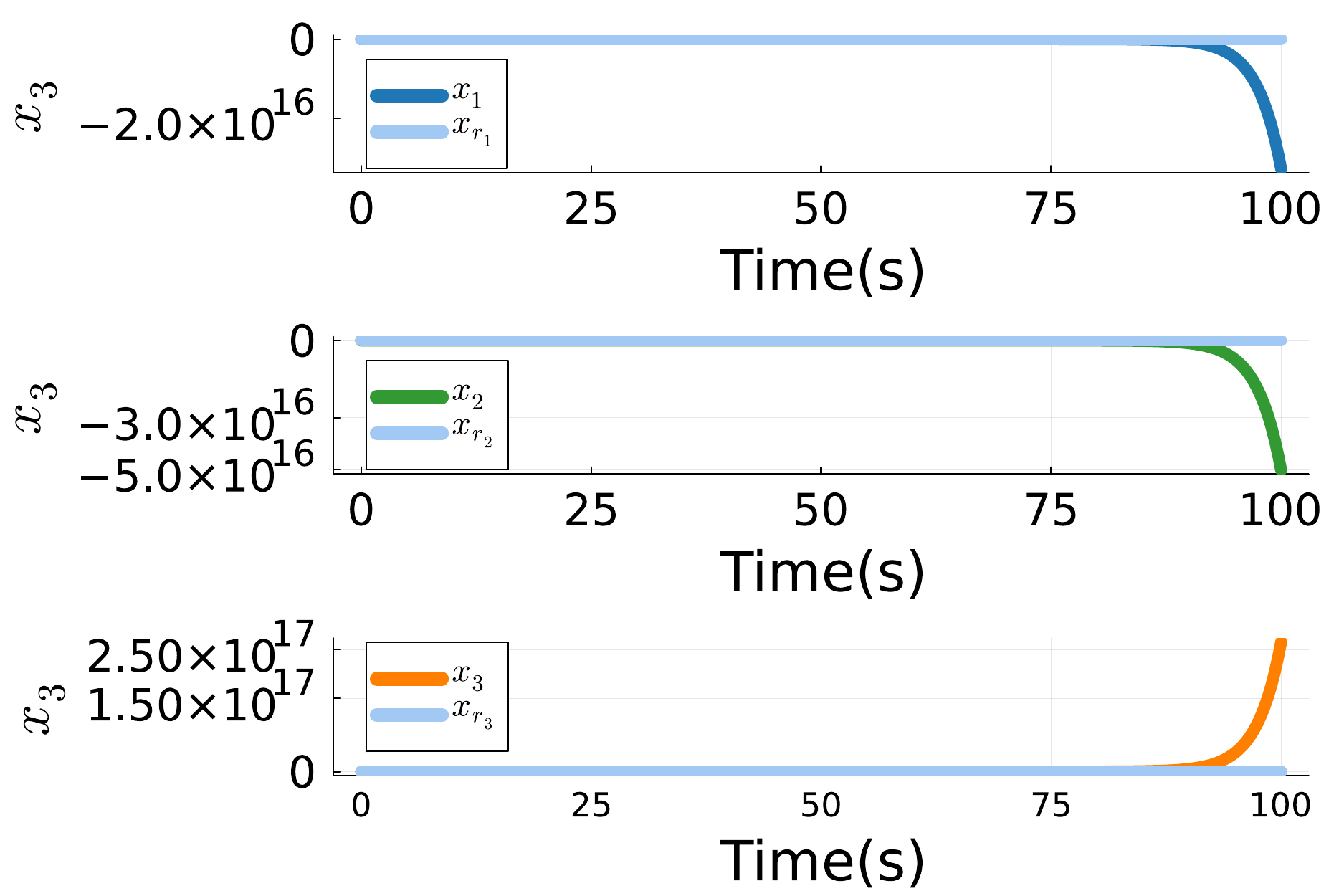}}
    \caption{ F-16 aircraft system state using On-Policy Q-learning \cite{Possieri} for the Tracking Control case.}
    \label{fig:comparation_2_state}
\end{figure}

\begin{figure}
    %\centering
    %\resizebox{\columnwidth}{!}
    \includegraphics[width=3.3in]{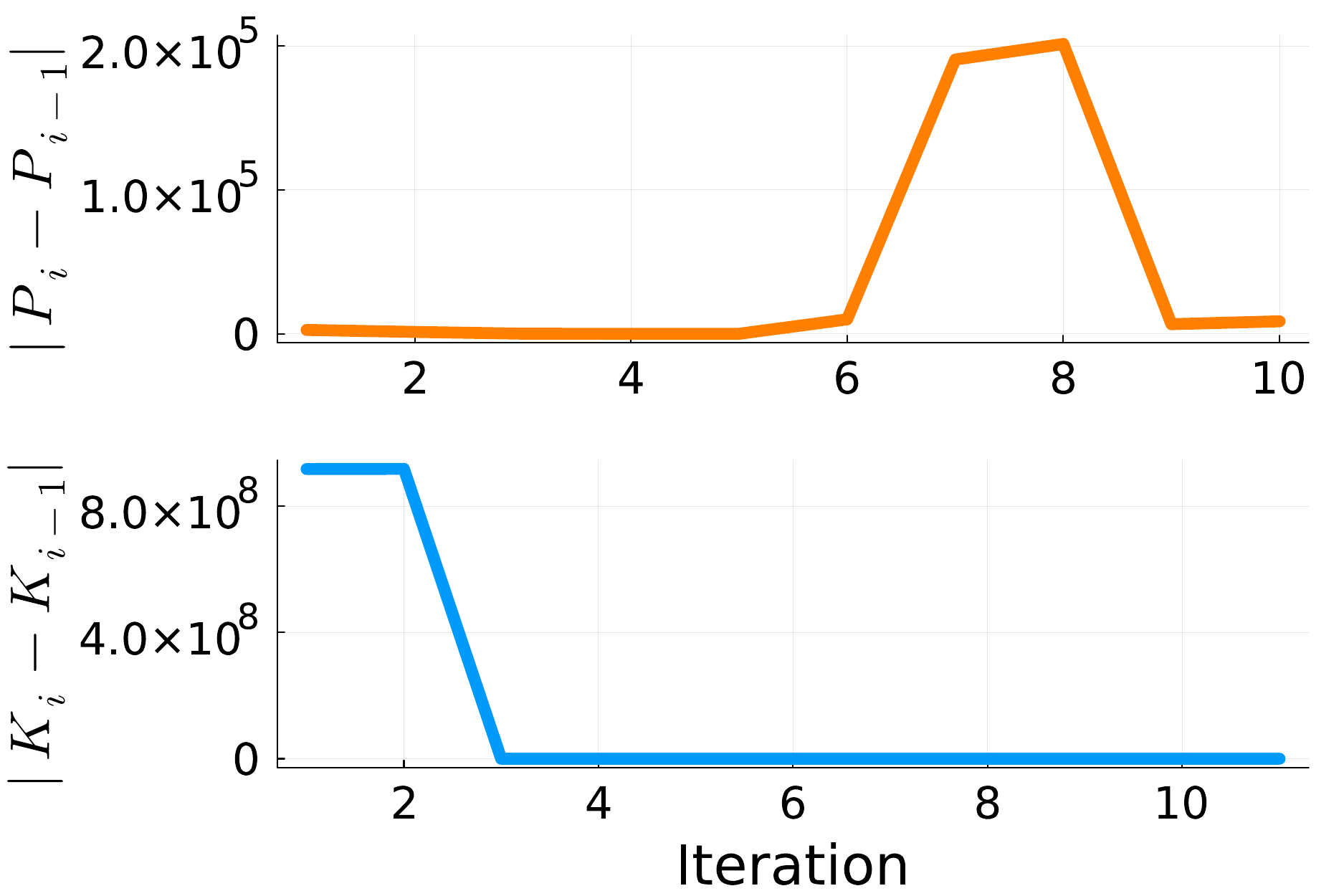}
    \caption{$P^i$ and $K_1^i$ using On-Policy Q-learning \cite{Possieri} for the Tracking Control case.}
    \label{fig:comparation_2_KP}
\end{figure}
It is evident from Figures~\ref{fig:F16_Al4con_KP_track}-\ref{F16_Al5con_input_track} that both Algorithms~\ref{Alg 4} and \ref{Alg 5} exhibit a similar convergence speed of approximately 37.5 seconds. Moreover, both algorithms enable the F-16 aircraft to track the reference signal $x_r(t)$ while ensuring that the control input remains within the specified constraints. As shown in Figures~\ref{F16_Al5con_input_track}-\ref{fig:comparation_2_KP}, the Online Q-learning method has a convergence time of approximately 100 seconds, while the online policy method does not exhibit convergence. Table 1 presents the error between the weight of the optimal value function, $\Delta P= P^*-P$, and the optimal policy, $\Delta K= K^*-K$ obtained from the model-free algorithms examined in this section. The results indicate that the On-policy Q-learning and Online Q-learning algorithms exhibit significant errors, suggesting inherent bias in their solutions. It can be observed that, as noted in Remark 1, this bias causes the On-policy method to fail to converge and results in system instability. On the other hand, although the Online Q-learning method exhibits bias, the system state remains stable. However, this bias indicates that the method policy is not yet optimal.

\begin{table}[]
    \centering
    \begin{tabular}{|c|c c|c c|}
    \hline
       \multirow{2}{*}{Algorithms}  & \multicolumn{2}{c|}{Feedback Control} & \multicolumn{2}{c|}{Tracking Control}  \\
       \cline{2-5}
         & $\|\Delta P\|$ & $\|\Delta K\|$ &$\|\Delta P\|$ & $\|\Delta K\|$\\
         \hline
         Algorithm 2 & $0.0156$ & $0.0011$ &  & \\
         Algorithm 3 & $0.0025$ & $8 \times 10^{-5}$& &\\
         Algorithm 4 & $0.0095$ & $0.0004$ & $0.1769$  & $0.0002$ \\
         Algorithm 5 & $0.0007$ & $8 \times 10^{-6}$ & $0.0536$& $7 \times 10^{-4}$ \\
         Algorithm \cite{VAMVOUDAKIS201714} &$4.7866$& $1.2239$&     $9.0753$& $8.3354$ \\
         Algorithm \cite{Possieri} &$3.0119$& $2.8577$& $8 \times 10^3$&  $8 \times 10^3$\\
         \hline
    \end{tabular}
    \caption{Error between the solutions obtained from the algorithms and the optimal solution}
    \label{table 1}
\end{table}
\section{Conclusion}\label{Section 5}
In this paper, four off-policy Q-learning algorithms designed for linear continuous-time systems are presented to address the LQR problem and the LQT problem with constrained input. A novel concept of the Advantage function for continuous-time systems is also introduced, serving as the core idea for developing these algorithms within a model-free framework. Among the proposed algorithms, we demonstrate two implementation approaches: one iterates over fixed time intervals, while the other is time-iterative. These algorithms can be viewed as a data-driven extension of the well-known Kleinman algorithm, and as a result, inherit its convergence properties. Furthermore, in the LQR problem, an admissible gain can be obtained without requiring the system model by solving the LMI with constraints derived from data collected along the system trajectory. For the LQT problem with constrained input, a heuristic search method has been introduced to reduce the effort required to find the admissible gain. Finally, simulations validate the effectiveness of the proposed methods.

%\clearpage

\printcredits
\section{Declaration of Competing Interest}
Authors declare that they have no conflict of interest.

%% Loading bibliography style file
%\bibliographystyle{model1-num-names_i}
\bibliographystyle{elsarticle-num}
\bibliography{references}
\begin{comment}

\bio{Nam.jpg}
\textbf{Phuong Nam Dao:} He received the Ph.D. degree in Industrial Automation from Hanoi University of Science and Technology, Hanoi, Vietnam in 2013. Currently, he holds the position as Associate Professor at Hanoi University of Science and Technology, Vietnam. His research interests include control of robotic systems and robust/adaptive, optimal control. He is the author/co-author of more than
90 papers (Journals, Conferences, etc.).
His ORCID is: https://orcid.org/0000-0002-8333-5572
\endbio

\bio{Quang.jpg}
\textbf{Van Quang Nguyen:} He is currently a talent program student in Automation and Control Engineering at Hanoi University of Science and Technology, Vietnam. He is currently working toward a PhD degree and his research interests include control and navigation system, optimal control, reinforcement learning and deep learning.
\endbio

\bio{Hoang Anh.jpg}
\textbf{Hoang Anh Nguyen Duc:} He is currently studying the Bachelor degree in talent program of Control Engineering and Automation from Hanoi University of Science and Technology, Hanoi, Vietnam. He is currently working toward a PhD degree and his research interests include optimal control, intelligent control and its application for robotic systems.
\endbio
\end{comment}
% Loading bibliography database

\end{document}